\newtheorem{proposition}{Proposition}
\newtheorem{theorem}{Theorem}
\newtheorem{corollary}{Corollary}
\newtheorem{lemma}{Lemma}
\newtheorem{assumption}{Assumption}
\newtheorem{remark}{Remark}
\numberwithin{theorem}{section}
\numberwithin{lemma}{section}
\numberwithin{equation}{section}
\numberwithin{proposition}{section}
\numberwithin{corollary}{section}
\newcommand{\req}[1]{Eq.\,(\ref{#1})}
\journal{Stochastic Processes and their Applications}
\begin{document} 
\begin{frontmatter}
\title{Homogenization of Dissipative, Noisy, Hamiltonian Dynamics}

\author{Jeremiah Birrell${}^a$, Jan Wehr${}^{a,b}$}
\address{${}^a$Department of Mathematics,\\
${}^b$Program in Applied Mathematics\\
University of Arizona\\
Tucson, AZ, 85721, USA}

\begin{abstract}
We study the dynamics of a class of Hamiltonian systems with dissipation, coupled to noise, in a singular (small mass) limit.  We derive the homogenized equation for the position degrees of freedom in the limit, including  the presence of a {\em noise-induced drift} term. We prove convergence to the solution of the homogenized equation in probability and, under stronger assumptions, in an $L^p$-norm.  Applications cover the overdamped limit of  particle motion in a time-dependent electromagnetic field, on a manifold with time-dependent metric, and the dynamics of nuclear matter.
\end{abstract}

\begin{keyword}
Hamiltonian system, homogenization, small mass limit, noise-induced drift\\
\MSC[2010] 60H10 \sep 82C31
\end{keyword}

\end{frontmatter}

\section{Introduction}
In the simplest case, the motion of  a diffusing particle of non-zero mass, $m$, is governed by a stochastic differential equation (SDE) of the form
\begin{align}\label{model_sys}
dq_t=v_t dt,\hspace{2mm} m dv_t=-\gamma v_t dt+\sigma dW_t,
\end{align}
where $\gamma$ and $\sigma$ are the dissipation (or drag) and diffusion coefficients respectively and $W_t$ is a Wiener process.  The study of diffusive systems in the limit $m\rightarrow 0$   was  initiated by Smoluchowski in \cite{smoluchowski1916drei} and continued by Kramers in \cite{KRAMERS1940284}. The field has grown to explore a large array of models and phenomena, including coupled fluid-particle systems \cite{doi:10.1137/S1540345903421076}, relativistic diffusion \cite{Chevalier2008,bailleul2010stochastic}, and a variety of processes and convergence modes on manifolds \cite{pinsky1976isotropic,pinsky1981homogenization,Jorgensen1978,dowell1980differentiable,XueMei2014,angst2015kinetic,bismut2005hypoelliptic,bismut2015}. History of the subject and a review of the early literature can be found in \cite{Nelson1967}. Such problems can be classified under the broad umbrella of homogenization, for which \cite{pavliotis2008multiscale} is an excellent reference. 

  Recently, there has been  increased interest in the phenomenon of {\em noise-induced drift}, which arises when the drag and noise coefficients are state dependent.  In such cases, the equation governing the process in the limit $m\rightarrow 0$ possesses an additional drift term that was not present in the original system.   First derived in \cite{PhysRevA.25.1130}, this has been observed experimentally in \cite{volpe2010influence} and  derived rigorously for one dimensional systems \cite{Sancho1982},  systems satisfying the fluctuation-dissipation relation \cite{PhysRevA.25.1130},  in Euclidean space of arbitrary dimension \cite{Hottovy2014,herzog2015small}, and on compact Riemannian manifolds of arbitrary dimension \cite{particle_manifold_paper}.  Further references to  work on the phenomenon of noise-induced drift are found in \cite{Hottovy2014}.

Statistical mechanics of fluctuating systems, as reviewed in \cite{Chetrite2008,gawedzki2013fluctuation}, covers systems more general than those governed by the Hamiltonians with quadratic kinetic energy,
\begin{align}\label{H_Newton}
H(q,p)=\frac{\|p\|^2}{2m}+V(q),
\end{align}
but to this point, the study of noise-induced drift has been restricted to Hamiltonians quadratic in $p$.  In this paper, we extend the theory to a large class of Hamiltonian systems generalizing \req{H_Newton}. See Section \ref{sec:examples} for examples of the type of systems that are covered. We prove that solutions to these more general Hamiltonian systems converge in probability and, under stronger assumptions, in an $L^p$-norm  to solutions of a homogenized limiting equation with a noise-induced drift term, for which we derive an explicit formula.  This is a far-reaching generalization of the previous results about the $m\rightarrow 0$ limit of the equations \req{model_sys}.

\subsection{Dissipative Hamiltonian System with Noise}
Here, we  review the basic equations and properties of dissipative, noisy Hamiltonian systems. See also  \cite{Chetrite2008}. Given a time-dependent Hamiltonian $H(t,x)$ which is $C^1$ jointly in $t\in\mathbb{R}$ and  $x=(q,p)\in \mathbb{R}^n\times \mathbb{R}^n$, a positive-semi-definite continuous matrix-valued function $\Gamma(t,x)$, the matrix
\begin{align}
 \Pi=\left( \begin{array}{cc}
0&I \\
-I & 0 \end{array} \right),
\end{align}
and a continuous vector field $G(t,x)$,  we first consider the following deterministic equation
\begin{align}\label{Hamiltonian_ODE}
\dot{x}_t=-\Gamma(t,x_t)\nabla H(t,x_t)+\Pi\nabla H(t,x_t)+G(t,x_t).
\end{align}
This equation describes the dynamics of a dissipative Hamiltonian system with drag matrix $\Gamma$ and external forcing $G$. We will refer to $q$ as the position degrees of freedom and to $p$ as the momentum degrees of freedom.

 The rate of change of the Hamiltonian along a solution is given by
\begin{align}
&\frac{d}{dt} H(t,x_t)\\
=&\partial_t H(t,x_t)-\nabla H(t,x_t)\cdot\Gamma(t,x_t)\nabla H(t,x_t)\notag\\
&+\nabla H(t,x_t)\cdot\Pi\nabla H(t,x_t)+\nabla H(t,x_t)\cdot G(t,x_t)\notag\\
\leq &\partial_t H(t,x_t)+\nabla H(t,x_t)\cdot G(t,x_t),\notag
\end{align}
where we used the anti-symmetry of $\Pi$ and the positive semi-definiteness of $\Gamma$. In particular, if $H$ is time independent and $G$ vanishes then the energy is non-increasing and if $\Gamma$ also vanishes then energy is conserved.  This justifies the interpretation of \req{Hamiltonian_ODE} as a dissipative Hamiltonian system with external forcing $G$ and drag matrix $\Gamma$.

We specialize to the case where the dissipation and external force enter only the momentum equation:
\begin{align}
\Gamma(t,x)=\left( \begin{array}{cc}
0&0 \\
0 & \gamma(t,x) \end{array} \right),
\end{align}
and $G(t,x)=(0,F(t,x))$. With this, the dissipation couples linearly to the generalized velocity $v=\nabla_p H$, since the equations are now
\begin{align}
\dot{q}_t=\nabla_p H(t,x_t),\hspace{2mm} \dot{p}_t=-\gamma(t,x_t)\nabla_p H(t,x_t)-\nabla_q H(t,x_t)+F(t,x_t).
\end{align}

We will be interested in families of Hamiltonians depending on some parameter $\epsilon\in (0,\epsilon_0]$ of the form
\begin{align}\label{H_family}
H^\epsilon(t,q,p)\equiv K^\epsilon(t,q,p)+V(t,q)\equiv K(\epsilon,t,q,(p-\psi(t,q))/\sqrt{\epsilon})+V(t,q),
\end{align}
where $V= V(t,q)$ is $C^2$, $K = K(\epsilon,t,q,z)$ is non-negative and $C^2$ in $(t,q,z)$ for each $\epsilon$, and $\psi$ is a $C^2$, $\mathbb{R}^n$-valued function.  
\begin{remark}
The momentum-dependent term, $K$, and the momentum-independent term, $V$, into which we split the Hamiltonian, do not have to carry with it the physical interpretation of kinetic and potential energy respectively, though we will use that terminology. The splitting will become constrained (though not quite unique) by further assumptions we will make below, but at this point it  is largely arbitrary.
\end{remark}

Families of Hamiltonians of the form \req{H_family} generalize the Hamiltonian of a classical particle coupled to an electromagnetic field (or in the case of vanishing vector potential, simply a Newtonian particle in a potential):
\begin{align}
H(t,q,p)=\frac{\|p-e\phi(t,q)\|^2}{2 m}+eV(t,q),
\end{align}
where $e$ is the charge, $\phi$ is the vector potential, and $V$ is the electrostatic potential. Scaling $p-e\phi(t,q)$ with $\sqrt{\epsilon}$ as in \req{H_family} is equivalent to replacing $m$ with $\epsilon m$, hence taking $\epsilon\rightarrow 0^+$ is equivalent to the small mass limit, $m\rightarrow 0$.   In this case $K(\epsilon,t,q,z)$ does not depend on $\epsilon$, but in general it is useful to allow an additional $\epsilon$ dependence. The form of this dependence will be somewhat constrained as we proceed.

 Adding a noise term to the momentum components of Hamilton's equations, we arrive at the following family of SDEs:\\
\begin{align}
dq^\epsilon_t=&\nabla_p H^\epsilon(t,x^\epsilon_t)dt,\label{Hamiltonian_SDE_q}\\
d p^\epsilon_t=&(-\gamma(t,x^\epsilon_t)\nabla_p H^\epsilon(t,x^\epsilon_t)-\nabla_q H^\epsilon(t,x^\epsilon_t)+F(t,x^\epsilon_t))dt+\sigma(t,x^\epsilon_t) dW_t,\label{Hamiltonian_SDE_p}
\end{align}
where $\sigma:[0,\infty)\times\mathbb{R}^n\rightarrow\mathbb{R}^{n\times k}$ is continuous and $W_t$ is a $\mathbb{R}^k$-valued Brownian motion on a filtered probability space $(\Omega,\mathcal{F},\mathcal{F}_t,P)$ satisfying the usual conditions \cite{karatzas2014brownian}.   In this paper we investigate the behavior of $x^\epsilon_t$ in the limit $\epsilon\rightarrow 0^+$ and derive a homogenized SDE satisfied by the limiting position process, $q_t$. 

\subsection{Summary of the Main Results}
To prove our first main theorem, Theorem \ref{conv_thm}, we will require several assumptions --- namely Assumptions  \ref{assump1}-\ref{assump7} in \ref{app:assump}.  Each one is restated in the body of the paper when it is first used, as not all are required for each result. These assumptions will constrain the initial conditions, the analytical properties and form of the Hamiltonian, the drag matrix, $\gamma$, and the noise coefficients, $\sigma$.  In particular, we will eventually require $\gamma$ to be independent of $p$ and its eigenvalues to satisfy a positive lower bound.  The latter coercivity requirement will be crucial in proving the kinetic energy and momentum bounds  in Section \ref{sec:K_bounds}. Under these assumptions we will prove the following:

Let $x_t^\epsilon$ be a family of solutions to the SDE \ref{Hamiltonian_SDE_q}-\ref{Hamiltonian_SDE_p} with  initial condition $x_0^\epsilon=(q_0^\epsilon,p_0^\epsilon)$. In this paper, we work under the assumption that a unique solution (pathwise uniqueness) exists for all $t\geq 0$ (i.e. there are no explosions).  See  \ref{app:no_explosions} for assumptions that guarantee this.

Then, as $\epsilon\to 0$, $q_t^\epsilon$ approaches the solution, $q_t$, to the SDE
\begin{align}\label{limit_eq_intro}
dq_t=&\tilde \gamma^{-1}(t,q_t)(-\partial_t\psi(t,q_t)-\nabla_{q}V(t,q_t)+F(t,q_t,\psi(t,q_t)))dt+S(t,q_t)dt\notag\\
&+\tilde\gamma^{-1}(t,q_t)\sigma(t,q_t,\psi(t,q_t)) dW_t.
\end{align}
The objects appearing in the SDE are defined as follows (here, and in the rest of the paper, we employ the summation convention on repeated indices):
\begin{enumerate}
\item $ \tilde\gamma_{ik}(t,q)\equiv\gamma_{ik}(t,q) +\partial_{q^k}\psi_i(t,q)-\partial_{q^i}\psi_k(t,q)$
\item $S^i(t,q)\equiv  Q^{ijl}(t,q)J_{jl}(t,q,\psi(t,q))$
\item $ Q^{ijl}(t,q)\equiv \partial_{q^k}(\tilde\gamma^{-1})^{ij}(t,q) A^{kl}(t,q)-\frac{1}{2}(\tilde\gamma^{-1})^{ik}(t,q)\partial_{q^k} A^{jl}(t,q)$, where $A^{ij}$ is the matrix-valued function from Assumption \ref{app:assump5} and the  index placement on $\tilde\gamma^{-1}$ is defined by $(\tilde\gamma^{-1})^{ij}\tilde\gamma_{jk}=\delta^i_k$.
\item $J_{ij}(t,x)\equiv G_{ij}^{kl}(t,q)\Sigma_{kl}(t,x)$
\item $G_{ij}^{kl}(t,q)\equiv\int_0^\infty (e^{-y (A\tilde\gamma)(t,q)})_i^k (e^{-y(A\tilde\gamma)(t,q)})_j^l dy$, where $A\tilde\gamma$ is the matrix $(A\tilde\gamma)^i_j=\tilde\gamma_{jk}A^{ki}$.
\item $\Sigma_{ij}\equiv\sum_\rho\sigma_{i\rho}\sigma_{j\rho}$
\end{enumerate}
It is interesting to note that the only feature of the kinetic energy function, $K$, that  plays a part in the limiting equation is the ``metric tensor" $A^{ij}$  from Assumption \ref{app:assump5}.

The convergence is in the following sense:\\
 Suppose that for all $\epsilon>0$ and  all $p>0$ we have $E[\|q^\epsilon_0\|^p]<\infty$, $E[\|q_0\|^p]<\infty$, and $E[\|q_0^\epsilon-q_0\|^p]=O(\epsilon^{p/2})$. Then for any  $T>0$, $p>0$, $0<\beta<p/2$ we have
\begin{align}\label{results_summary}
E\left[\sup_{t\in[0,T]}\|p_t^\epsilon-\psi(t,q^\epsilon_t)\|^p\right]=O(\epsilon^{\beta})\text{ and } E\left[\sup_{t\in[0,T]}\|q_t^\epsilon-q_t\|^p\right]=O(\epsilon^{\beta}) \text{ as } \epsilon\rightarrow 0^+.
\end{align}
We also prove a convergence in probability result,
\begin{align}\label{q_conv_rate_prob}
\lim_{\epsilon\to 0^+}P\left(\sup_{t\in[0,T]}\|q_t^\epsilon-q_t\|>\delta\right)=0 \text{ for all $T>0$, $\delta>0$, }
\end{align}
under less restrictive assumptions than the above $L^p$ result. See Theorem \ref{thm:conv_in_prob} for  details.

The  drift term, $S(t,q)$, that appears in the limiting equation is called the {\em noise-induced drift} and is nonzero when $\sigma$ is nonzero and a particular combination of $A$, $\gamma$, and $\psi$ have non-trivial state dependence.  Other works studying the small mass limit of inertial systems, both in Euclidean space \cite{Sancho1982,PhysRevA.25.1130,Hottovy2014,herzog2015small}, and on manifolds \cite{particle_manifold_paper}, have found analogous phenomena.

In addition to applying to a much larger class of Hamiltonians,  our derivation here gives a unified treatment of two previously studied systems: a particle in an electromagnetic field, \cite{Hottovy2014}, and a particle on a Riemannian manifold, \cite{particle_manifold_paper}.  These previous works used  different, and somewhat specialized, methods. Our results here also expand on these by allowing the metric tensor, forces, drag, and diffusion to be time-dependent. For Riemannian manifolds, we differ here by considering the non-compact case. See Section \ref{sec:examples} for details and further examples.

\subsection{Outline of the Proof}
The full details of the proof   begin in  Section \ref{sec:K_bounds}.  Here we outline our strategy and main ideas.

\begin{itemize}
\item In Section \ref{sec:K_bounds} we prove several results bounding the expectation of the kinetic energy.  For example, we show that for $T>0$, $q>0$ we have
\begin{align}
\sup_{t\in[0,T]}E[ K^\epsilon(t,x^\epsilon_t)^{q}]=O(1) \text{ as $\epsilon\rightarrow 0^+$.}
\end{align}
We use this to conclude several convergence results involving the momentum, for example
\begin{align}
\sup_{t\in[0,T]}E[\|p_t^\epsilon-\psi(t,q_t^\epsilon)\|^{q}]=O(\epsilon^{q/2}) \text{ as }\epsilon\rightarrow 0^+.
\end{align}

 The main tools for computing the estimates are It\^o's formula, several well known (stochastic) integral inequalities, and a lesser known $P$-a.s. stochastic integral inequality, Lemma \ref{matrix_exp_decay_bound}.

\item In Section \ref{sec:limit_eq} we derive the proposed form of the limiting equation for the position variables by solving \req{Hamiltonian_SDE_p} for $\nabla_p H^\epsilon(t,x^\epsilon_t)dt$, substituting into \req{Hamiltonian_SDE_q}, and integrating by parts to separate out the components that depend on  $p_t^\epsilon-\psi(t,q_t^\epsilon)$.  This is done with the aim of later using the  results of Section \ref{sec:K_bounds} to prove that these terms vanish in the limit.

The main complication here is that, in general, the required separation is possible only through formulating and solving an appropriate Lyapunov equation, \req{lyap_eq_def}.

\item In Section \ref{sec:conv_proof} we prove $L^p$-convergence of $q_t^\epsilon$ to the solution, $q_t$, of the proposed limiting equation from Section \ref{sec:limit_eq}.  This is accomplished by a  Gronwall's inequality argument.  The estimates of Section \ref{sec:K_bounds} are the critical ingredient here, allowing us to prove that the error terms converge to zero as $\epsilon\rightarrow 0^+$. This result relies on the assumption that the gradient of the potential is bounded (among others).

\item Finally, in Section \ref{sec:unbounded} we a technique adapted from \cite{herzog2015small} to prove  convergence in probability, \req{q_conv_rate_prob}, for a much wider range of systems, including many whose potentials have unbounded gradient.

The core idea is that, by modifying the objects in the SDE to be compactly supported (or at least have compactly supported derivatives) we can use the $L^p$ convergence result from Section \ref{sec:conv_proof} together with a limiting argument to prove convergence in probability.  A crucial ingredient is that none of the solutions (of the original or modified SDEs) explode in finite time.  This can be proven using the results of \ref{app:no_explosions}.

\end{itemize}

\section{Examples}\label{sec:examples}
Before we begin the proof, we first discuss several examples that fit within the above framework. 

\subsection{Particle in a Electromagnetic Field}
The Hamiltonian of a  particle of mass $\epsilon m$ and charge $e$ in an electromagnetic field with vector potential $\phi(t,q)$ and electrostatic potential $V(t,q)$ is
\begin{align}
H^\epsilon(t,q,p)=\frac{1}{2 \epsilon m}\|p-e \phi(t,q)\|^2+eV(t,q).
\end{align}
Allowing for an additional forcing, $F$, Hamilton's equations for this system are
\begin{align}
dq_t^\epsilon=&\frac{1}{\epsilon m}(p_t^\epsilon-e\phi(t,q_t^\epsilon))dt,\\
d(p_t^\epsilon)_i=&\left(-\frac{1}{\epsilon m}\gamma_{ij}(t,q_t^\epsilon)\delta^{jk}((p_t^\epsilon)_k-e\phi_k(t,q_t^\epsilon))+F_i(t,x_t^\epsilon)-e\partial_{q^i} V(t,q_t^\epsilon)\right.\\
&\left.+\frac{e}{\epsilon m}\partial_{q^i}\phi_k(t,q_t^\epsilon)\delta^{jk}((p_t^\epsilon)_j-e\phi_j(t,q_t^\epsilon))\right)dt+\sigma_{i\rho}(t,x_t^\epsilon)dW^\rho_t.\notag
\end{align}
The homogenized equation in the small mass limit is difficult to simplify further than \req{limit_eq_intro} in general. However, in the case where $\gamma$ and $\sigma$ are independent of $p$ and the fluctuation dissipation relation holds pointwise for a time and position dependent ``temperature" $T(t,q)$,
\begin{align}
\Sigma_{ij}(t,q)=2k_BT(t,q) \gamma_{ij}(t,q),
\end{align}
one can show that
\begin{align}
 G_{kl}^{ab}(t,q)\Sigma_{ab}(t,q)=k_BT(t,q)\delta_{kl},
\end{align}
where $G$ was defined in \req{G_def},  $\Sigma_{ij}=\sum_\rho\sigma_{i\rho}\sigma_{j\rho}$, and $k_B$ is Boltzmann's constant.

The noise induced drift, \req{noise_induced_drift}, can therefore be simplified to
\begin{align}
S^i(t,q)=k_BT(t,q)\partial_{q^j}(\tilde\gamma^{-1})^{ij}(t,q_t).
\end{align}
Recall that we defined
\begin{align}
\tilde\gamma_{ik}(t,q)\equiv\gamma_{ik}(t,q) +\partial_{q^k}\psi_i(t,q)-\partial_{q^i}\psi_k(t,q),
\end{align}
where here, $\psi=e\phi$.

The homogenized equation in the small mass limit is then
\begin{align}
dq_t^i=&(\tilde\gamma^{-1})^{ij}(t,q_t)(-\partial_t\psi_j(t,q_t)-e\partial_{q^j}V(t,q_t)+F_j(t,q_t,\psi(t,q_t)))dt\\
&+k_BT(t,q)\partial_{q^j}(\tilde\gamma^{-1})^{ij}(t,q_t)dt+(\tilde\gamma^{-1})^{ij}(t,q_t)\sigma_{j\rho}(t,q_t)dW^\rho_t.\notag
\end{align}
The time independent case was studied in \cite{Hottovy2014} by a different method and coincides with the results in this paper.

\subsection{Particle on a Riemannian Manifold}
Another case that is covered by the framework developed here is the inertial motion of a particle in $\mathbb{R}^n$, but with geometry specified by a time-dependent Riemannian metric tensor, $g_{ij}(t,q)$.  The family of Hamiltonians describing this system is
\begin{align}
H^\epsilon(t,q,p)=\frac{1}{2 \epsilon m}g^{ij}(t,q)p_ip_j.
\end{align}
 Note that the inverse metric tensor, $g^{ij}(t,q)$, is playing the role of $A^{ij}(t,q)$ in our formalism, and so all the assumptions that are required of $A^{ij}$ there must be satisfied by $g^{ij}$ here.

Allowing for external forcing, $F$, Hamilton's equations are
\begin{align}
d(q_t^\epsilon)^i=&\frac{1}{\epsilon m} g^{ij}(t,q_t^\epsilon)(p_t^\epsilon)_jdt,\\
d(p_t^\epsilon)_i=&\left(-\frac{1}{\epsilon m}\gamma_{ij}(t,x^\epsilon_t) g^{jk}(t,q_t^\epsilon)(p_t^\epsilon)_k-\frac{1}{2\epsilon m}\partial_{q^i}g^{kl}(t,q_t^\epsilon)(p_t^\epsilon)_k(p_t^\epsilon)_l+F_i(t,x^\epsilon_t)\right)dt\notag\\
&+\sigma_{i\rho}(t,x^\epsilon_t) dW^\rho_t.
\end{align}

Again, the homogenized equation in the small mass limit, \req{limit_eq1}, can be simplified if $\gamma$ and $\sigma$ are independent of $p$ and the fluctuation dissipation relation holds pointwise for a time and position dependent ``temperature" $T(t,q)$,
\begin{align}
\Sigma_{ij}(t,q)=2k_BT(t,q) \gamma_{ij}(t,q).
\end{align}

In this case one finds that 
\begin{align}
G_{kl}^{ab}(t,q)\Sigma_{ab}(t,q)=k_BT(t,q)g_{kl}(t,q)
\end{align}
 and hence \req{limit_eq1} becomes
\begin{align}
dq_t^i=&(\gamma^{-1})^{ij}(t,q_t)F_j(t,q_t,0)dt+S^i(t,q_t)dt+(\gamma^{-1})^{ij}(t,q_t)\sigma_{j\rho}(t,q_t)dW^\rho_t,
\end{align}
where the noise induced drift is 
\begin{align}
S^i(t,q)=&k_BT(t,q)\left( \partial_{q^j}(\gamma^{-1})^{ij}(t,q) -\frac{1}{2}(\gamma^{-1})^{ij}(t,q)g_{kl}(t,q)\partial_{q^j} g^{kl}(t,q)\right).
\end{align}
See also \cite{particle_manifold_paper}, which treats the case of a smooth, compact, connected, manifold without boundary (but otherwise arbitrary topology) and with time independent metric via a more geometrically motivated approach.  

One can argue that the present approach is simpler, as we only rely on tools from (stochastic) analysis; we avoid the geometrical machinery used in \cite{particle_manifold_paper}.  In addition, here we don't require compactness; on the other hand, we do lose the ability to handle non-trivial topology of the manifold.    

\subsection{Hamiltonian that are Polynomials in the Momentum}
Generalizing the above two quadratic cases, our convergence result applies to Hamiltonians that are polynomials in $p-\psi(t,q)$.  Specifically,   Theorem \ref{conv_thm} implies $L^p$ convergence if the family of Hamiltonians has the form
\begin{align}
H^\epsilon(t,q,p)=\sum_{l=k_1}^{k_2} d_l(t) \left[A^{ij}(t,q) (p-\psi(t,q))_i(p-\psi(t,q))_j/\epsilon\right]^l+V(t,q)
\end{align}
where $1\leq k_1\leq k_2$ are integers  and the following properties hold on $[0,T]\times\mathbb{R}^{n}$ for every $T>0$:
\begin{enumerate}
\item $V$ is $C^2$ and $\nabla_q V$ is bounded and Lipschitz in $q$, uniformly in $t$.
\item $\psi$ is $C^3$ and $\partial_t\psi$, $\partial_{q^i}\psi$,  $\partial_{q^i}\partial_{q^j}\psi$, $\partial_t\partial_{q^i}\psi$, $\partial_t\partial_{q^j}\partial_{q^i}\psi$, and $\partial_{q^l}\partial_{q^j}\partial_{q^i}\psi$ are bounded.
\item $d_l$ are $C^2$ and non-negative.
\item $d_{k_1}$ and $d_{k_2}$ are uniformly bounded below by a positive constant.
\item $A$ is $C^2$, positive-definite, and $A$, $\partial_t A$, $\partial_{q^i} A$,  $\partial_t \partial_{q^i}A$,  and $\partial_{q^i}\partial_{q^j} A$  are bounded.
\item The eigenvalues of $A$ are uniformly bounded below by a positive constant.
\end{enumerate}
As mentioned above, convergence in probability holds under much weaker assumptions on the potential. See Theorem \ref{thm:conv_in_prob}.

\subsection{Effective Nuclear Interactions}
Mean field models of nuclear interactions can lead to  non-quadratic momentum dependence in the Hamiltonian.  For example, in \cite{KHOA1992102} a contribution to the effective potential of the form
\begin{align}
U(p)= c_1\ln^2\left[1+ c_2\|p\|^2\right]
\end{align}
was calculated.  For non-relativistic particle motion, a term of this form can be accommodated in our framework in several ways, depending on which parameters one wishes to scale. For example, one can let
\begin{align}
K^\epsilon(p)=\frac{\|p\|^2}{2\epsilon m}+ c_1\ln^2\left[1+  c_2\|p\|^2\right]
\end{align}
or
\begin{align}
K^\epsilon(p)=\frac{\|p\|^2}{2\epsilon m}+ c_1\ln^2\left[1+  c_2\|p\|^2/\epsilon\right].
\end{align}
In either case, the assumptions of Theorem \ref{conv_thm} relating to $K$ are satisfied.

\section{Kinetic Energy and Momentum Bounds}\label{sec:K_bounds}

We now begin working towards the proof of our main results. In this section,  we derive bounds on the behavior of the kinetic energy  in the limit $\epsilon\rightarrow 0^+$.  As a consequence we will obtain a convergence result for the canonical momentum, the first formula in \req{results_summary}.  Some assumptions on the structure of the Hamiltonian are required.  As usual, here and in the sequel, generic symbols, denoting constants, such as $C$, $M$ etc., do not have to have the same value in all equations.  
\begin{assumption}\label{assump1}
We assume that $\sigma$, $F$, and $\gamma$ are continuous, the Hamiltonian has the form given in \req{H_family} where $K(\epsilon,t,q,z)$ is non-negative and $C^2$ in $(t,q,z)$ for each $\epsilon$, $\psi$ is  $C^2$, and the solutions, $x^\epsilon_t$, to the SDE \ref{Hamiltonian_SDE_q}-\ref{Hamiltonian_SDE_p} exist for all $t\geq 0$.

  For every $T>0$, we assume the following bounds hold on $(0,\epsilon_0]\times [0,T]\times\mathbb{R}^{2n}$:
\begin{enumerate}
\item There exist $C>0$ and $M>0$ such that
\begin{align}\label{K_assump1}
\max\{|\partial_t K(\epsilon,t,q,z)|,\|\nabla_q K(\epsilon,t,q,z)\|\}\leq M+CK(\epsilon,t,q,z).
\end{align}
\item There exist  $c>0$ and $M\geq 0$ such that
\begin{align}\label{K_assump2}
\|\nabla_z K(\epsilon,t,q,z)\|^2+M\geq c K(\epsilon,t,q,z).
\end{align}
\item
For every $\delta>0$ there exists an $M>0$ such that
\begin{align}\label{K_assump3}
\max\left\{\|\nabla_z K(\epsilon,t,q,z)\|,\left(\sum_{i,j}|\partial_{z_i}\partial_{z_j}K(\epsilon,t,q,z)|^2\right)^{1/2}\right\}\leq M+\delta K(\epsilon,t,q,z).
\end{align}
\end{enumerate}

\end{assumption}

We will also need the following assumptions concerning the potential, dissipation, noise, external forcing, and initial conditions.  Some of these will be relaxed in Section \ref{sec:unbounded}.
\begin{assumption}\label{assump2}
For every $T>0$, we assume that the following hold uniformly on $[0,T]\times\mathbb{R}^n$:
\begin{enumerate}
\item $V$ is $C^2$ and $\nabla_q V$ is bounded.
\item $\gamma$ is symmetric with eigenvalues  bounded below by some $\lambda>0$.
\item  $\gamma$, $F$, $\partial_t\psi$, and $\sigma$ are bounded.
\item There exists $C>0$ such that the (random) initial conditions satisfy $K^\epsilon(0,x^\epsilon_0)\leq C$ for all $\epsilon>0$ and all $\omega\in\Omega$.
\end{enumerate}
\end{assumption}

We now state and prove the kinetic energy bound which underlies our main results. As with our notation $K^\epsilon$, for any function $f(\epsilon,t,q,z)$ we  define
\begin{align}
f^\epsilon(t,x)\equiv f(\epsilon,t,q,(p-\psi(t,q))/\sqrt{\epsilon}).
\end{align}
For example,
\begin{align}
(\partial_{z^i}K)^\epsilon(t,x)\equiv \partial_{z_i}K(\epsilon,t,q,(p-\psi(t,q))/\sqrt{\epsilon})
\end{align}
and similarly for $(\nabla_zK)^\epsilon$, $(\partial_{z^i}\partial_{z^j}K)^\epsilon$, etc. 
\begin{lemma}\label{K_bound_lemma}
Under Assumptions \ref{assump1} and \ref{assump2}, for any $q\in\mathbb{N}$, $q\geq 1$ and any $T>0$ there exist $\alpha_0>0$, $\epsilon_0>0$ and $\kappa>0$ such that for all $0<\alpha\leq \alpha_0$, $0<\epsilon\leq \epsilon_0$, $0\leq t\leq T$ we have the $P$-a.s. inequality
\begin{align}\label{K_bound_lemma_eq}
 &K^\epsilon(t,x_t^\epsilon)^q\leq \frac{\kappa}{\alpha}+\frac{q}{\sqrt{\epsilon}} e^{-\alpha t/\epsilon} \int_0^t e^{\alpha s/\epsilon}K^\epsilon(s,x_s^\epsilon)^{q-1}(\nabla_{z} K)^\epsilon(s,x_s^\epsilon)\cdot\sigma(s,x^\epsilon_s) dW_s.
\end{align}
\end{lemma}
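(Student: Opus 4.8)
The plan is to apply It\^o's formula to the real-valued process $t\mapsto K^\epsilon(t,x^\epsilon_t)^q$, to show its drift is dominated by $-\tfrac{\alpha}{\epsilon}K^\epsilon(t,x^\epsilon_t)^q+\tfrac{C}{\epsilon}$ for a constant $C$ not depending on $\alpha,\epsilon$, and then to conclude with an integrating-factor (Gronwall-type) estimate. Write $k_t:=K^\epsilon(t,x^\epsilon_t)$, $z^\epsilon_t:=(p^\epsilon_t-\psi(t,q^\epsilon_t))/\sqrt\epsilon$, $G_t:=(\nabla_zK)^\epsilon(t,x^\epsilon_t)$. Since $K$ is $C^2$ in $(t,q,z)$ and $\psi$ is $C^2$, the function $(t,q,p)\mapsto K^\epsilon(t,q,p)$ is $C^{1,2}$, so It\^o's formula applies to $k_t$ along the solution of \reqr{Hamiltonian_SDE_q}{Hamiltonian_SDE_p}. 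Because $V$ is $p$-independent, $\nabla_pH^\epsilon=\nabla_pK^\epsilon$ and $\nabla_qH^\epsilon=\nabla_qK^\epsilon+\nabla_qV$, and since $q^\epsilon_t$ has finite variation the term $\nabla_qK^\epsilon\cdot\nabla_pK^\epsilon\,dt$ coming from $dq^\epsilon_t$ cancels the $-\nabla_pK^\epsilon\cdot\nabla_qK^\epsilon\,dt$ piece coming from $dp^\epsilon_t$; thus the symplectic part of the dynamics contributes only $-\nabla_pK^\epsilon\cdot\nabla_qV\,dt$ to $dk_t$. Using $\nabla_pK^\epsilon=\tfrac{1}{\sqrt\epsilon}(\nabla_zK)^\epsilon$, $\partial_{p_i}\partial_{p_j}K^\epsilon=\tfrac{1}{\epsilon}(\partial_{z_i}\partial_{z_j}K)^\epsilon$, and $\partial_tK^\epsilon=(\partial_tK)^\epsilon-\tfrac{1}{\sqrt\epsilon}(\nabla_zK)^\epsilon\cdot\partial_t\psi$, one is left with
\begin{align}\label{plan_ito}
dk_t=\Big[(\partial_tK)^\epsilon-\tfrac{1}{\epsilon}G_t\cdot\gamma G_t+\tfrac{1}{\sqrt\epsilon}G_t\cdot(F-\nabla_qV-\partial_t\psi)+\tfrac{1}{2\epsilon}(\partial_{z_i}\partial_{z_j}K)^\epsilon\Sigma_{ij}\Big]dt+\tfrac{1}{\sqrt\epsilon}G_t\cdot\sigma\,dW_t,
\end{align}
all objects on the right evaluated at $(\epsilon,t,q^\epsilon_t,z^\epsilon_t)$ and $\Sigma_{ij}=\sum_\rho\sigma_{i\rho}\sigma_{j\rho}$. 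A second application of It\^o's formula, to $k\mapsto k^q$, gives $d(k_t^q)=D_t\,dt+\tfrac{q}{\sqrt\epsilon}k_t^{q-1}G_t\cdot\sigma\,dW_t$, where the stochastic integral is precisely the one in \req{K_bound_lemma_eq} and $D_t$ equals $q k_t^{q-1}$ times the bracketed drift in \req{plan_ito} plus the extra term $\tfrac{q(q-1)}{2\epsilon}k_t^{q-2}G_t\cdot\Sigma G_t$ (absent when $q=1$). It remains to bound $D_t$.

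\emph{The drift estimate --- the crux.} The only term of $D_t$ with a definite sign is $-\tfrac{q}{\epsilon}k_t^{q-1}G_t\cdot\gamma G_t\le-\tfrac{q\lambda}{\epsilon}k_t^{q-1}\|G_t\|^2$, where $\lambda>0$ lower-bounds the eigenvalues of $\gamma$ (Assumption~\ref{assump2}); I would split it into three equal parts. The first controls growth: by \req{K_assump2}, $\|G_t\|^2\ge c\,k_t-M$, so $-\tfrac{q}{3\epsilon}k_t^{q-1}\|G_t\|^2\le-\tfrac{qc\lambda}{3\epsilon}k_t^q+\tfrac{qM\lambda}{3\epsilon}k_t^{q-1}$. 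The remaining pieces of $D_t$ are estimated thus: $q k_t^{q-1}(\partial_tK)^\epsilon\le q k_t^{q-1}(M+Ck_t)$ by \req{K_assump1}; the $\tfrac{1}{\sqrt\epsilon}$ term is $\le\tfrac{C}{\sqrt\epsilon}k_t^{q-1}\|G_t\|$ by boundedness of $F,\nabla_qV,\partial_t\psi$ (Assumption~\ref{assump2}), which Young's inequality splits into $\le\tfrac{q\lambda}{8\epsilon}k_t^{q-1}\|G_t\|^2$ (absorbed by the second part) plus $C k_t^{q-1}$; the second-derivative term is $\le\tfrac{C}{\epsilon}k_t^{q-1}(M_\delta+\delta k_t)$ by \req{K_assump3} (with a free parameter $\delta$) and boundedness of $\Sigma$; and the quadratic-variation term $\tfrac{C}{\epsilon}k_t^{q-2}\|G_t\|^2$ is handled by splitting on $\{k_t\ge R\}$, where it is $\le\tfrac{C}{R\epsilon}k_t^{q-1}\|G_t\|^2$ and absorbed by the third part once $R$ is large, and on $\{k_t<R\}$, where it is at most a constant over $\epsilon$. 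Every leftover $k_t^{q-1}$-term is then reduced, by Young's inequality, to an arbitrarily small multiple of $k_t^q$ plus a constant, so that $D_t\le\big(-\tfrac{qc\lambda}{3}+\tfrac{qC\delta}{2}+\tfrac{C}{r}+\epsilon_0Cq\big)\tfrac{1}{\epsilon}k_t^q+\tfrac{C}{\epsilon}$. Choosing $\delta$ small, the Young parameter $r$ large, and $\epsilon_0$ small so that the bracketed coefficient is $\le-\alpha_0:=-\tfrac{qc\lambda}{6}$, I conclude that for all $0<\alpha\le\alpha_0$ and $0<\epsilon\le\epsilon_0$, $D_t\le-\tfrac{\alpha}{\epsilon}k_t^q+\tfrac{C}{\epsilon}$ $P$-a.s., for all $t\in[0,T]$.

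\emph{The integrating factor.} Set $Y_t:=e^{\alpha t/\epsilon}k_t^q$, so that $Y_t=k_0^q+\int_0^te^{\alpha s/\epsilon}\big(\tfrac{\alpha}{\epsilon}k_s^q+D_s\big)\,ds+M_t$ with $M_t:=\tfrac{q}{\sqrt\epsilon}\int_0^te^{\alpha s/\epsilon}k_s^{q-1}G_s\cdot\sigma(s,x^\epsilon_s)\,dW_s$ (well defined $P$-a.s., the integrand being continuous and adapted). By the drift estimate $\tfrac{\alpha}{\epsilon}k_s^q+D_s\le\tfrac{C}{\epsilon}$, and $\int_0^t\tfrac{C}{\epsilon}e^{\alpha s/\epsilon}\,ds=\tfrac{C}{\alpha}(e^{\alpha t/\epsilon}-1)\le\tfrac{C}{\alpha}e^{\alpha t/\epsilon}$; since also $k_0^q=K^\epsilon(0,x^\epsilon_0)^q\le C^q$ $P$-a.s. by Assumption~\ref{assump2}, multiplying $Y_t\le C^q+\tfrac{C}{\alpha}e^{\alpha t/\epsilon}+M_t$ by $e^{-\alpha t/\epsilon}\le1$ yields exactly \req{K_bound_lemma_eq} with $\kappa:=\max\{C,\,C^q\alpha_0\}$ (so that $C^qe^{-\alpha t/\epsilon}\le C^q\le\kappa/\alpha$ and $C/\alpha\le\kappa/\alpha$ whenever $0<\alpha\le\alpha_0$).

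\emph{Main obstacle.} The first and last steps are essentially bookkeeping; the substance is the drift estimate. The delicate point there is that one must simultaneously track three orders of $\epsilon^{-1}$ ($1$, $\epsilon^{-1/2}$, $\epsilon^{-1}$) and three powers of $K^\epsilon$ ($q-2$, $q-1$, $q$), and choose the several free constants ($\delta$ in \req{K_assump3}, the Young parameter $r$, the threshold $R$, $\epsilon_0$, $\alpha_0$) in a mutually consistent way so that the single sign-definite term $-\tfrac{1}{\epsilon}G_t\cdot\gamma G_t$ dominates all destabilizing contributions. It is precisely the $\epsilon^{-1}$ prefactor on this dissipative term, together with the coercivity bounds \reqr{K_assump2}{K_assump3} and the uniform spectral lower bound $\lambda$ on $\gamma$, that makes this work; a $p$-dependent or non-coercive $\gamma$ would break the argument.
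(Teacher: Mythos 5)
Your proposal is correct and follows essentially the same route as the paper: apply It\^o to $K^\epsilon(t,x^\epsilon_t)^q$, exploit the cancellation $\nabla_pK^\epsilon\cdot\nabla_qK^\epsilon-\nabla_qK^\epsilon\cdot\nabla_pK^\epsilon=0$ from the symplectic part, use coercivity of $\gamma$ together with \req{K_assump2} to make the $-\tfrac{1}{\epsilon}(\nabla_zK)^\epsilon\cdot\gamma(\nabla_zK)^\epsilon$ term dominant, control the remaining terms via \reqr{K_assump1}{K_assump3}, and finish with the integrating factor $e^{\alpha t/\epsilon}$. The only cosmetic differences are organizational (the paper applies It\^o directly to $e^{\alpha t/\epsilon}K^q$ and writes out a single long inequality \req{K_bound1} with an explicit nonnegative coefficient $d$, whereas you compute $dK$, then $d(K^q)$, then integrate the factor at the end; you also absorb the $\epsilon^{-1/2}$ and $k^{q-2}$ terms into thirds of the dissipative term, while the paper relies instead on the free $\delta$ in \req{K_assump3}). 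One trivial slip: in the last step you bound $C^q e^{-\alpha t/\epsilon}$ and $C/\alpha$ each by $\kappa/\alpha$ with $\kappa=\max\{C,C^q\alpha_0\}$, which only gives $2\kappa/\alpha$ for their sum; take instead $\kappa:=C+C^q\alpha_0$.
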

\begin{proof}
Take $T>0$, $q\in\mathbb{N}$, $q\geq 1$, $\alpha>0$, and apply It\^o's formula to $e^{\alpha t/\epsilon}K^\epsilon(t,x_t^\epsilon)^q$:
\begin{align}
&e^{\alpha t/\epsilon}K^\epsilon(t,x_t^\epsilon)^q=K^\epsilon(0,x_0^\epsilon)^q+\frac{\alpha}{\epsilon}\int_0^te^{\alpha s/\epsilon}K^\epsilon(s,x_s^\epsilon)^qds\\
&+q\int_0^te^{\alpha s/\epsilon}K^\epsilon(s,x_s^\epsilon)^{q-1}(\partial_s K)^\epsilon(s,x_s^\epsilon)ds\notag\\
&-\frac{q}{\epsilon}\int_0^te^{\alpha s/\epsilon}K^\epsilon(s,x_s^\epsilon)^{q-1}(\nabla_z K)^\epsilon(s,x_s^\epsilon)\cdot\gamma(s,x^\epsilon_s)(\nabla_z K)^\epsilon(s,x_s^\epsilon)ds\notag\\
&+\frac{q}{\sqrt{\epsilon}}\int_0^te^{\alpha s/\epsilon}K^\epsilon(s,x_s^\epsilon)^{q-1}(\nabla_z K)^\epsilon(s,x_s^\epsilon)\cdot(-\partial_s\psi(s,q^\epsilon_s)-\nabla_q V(s,q^\epsilon_s)+F(s,x^\epsilon_s))ds\notag\\
&+\frac{q(q-1)}{2\epsilon}\int_0^te^{\alpha s/\epsilon}K^\epsilon(s,x_s^\epsilon)^{q-2}(\nabla_z  K)^\epsilon(s,x_s^\epsilon)\cdot \Sigma(s,x_s^\epsilon) (\nabla_{z} K)^\epsilon(s,x_s^\epsilon) ds\notag\\
&+\frac{q}{2\epsilon}\int_0^te^{\alpha s/\epsilon}K^\epsilon(s,x_s^\epsilon)^{q-1}(\partial_{z_i}\partial_{z_j} K)^\epsilon(s,x_s^\epsilon)\Sigma_{ij}(s,x_s^\epsilon) ds\notag\\
&+\frac{q}{\sqrt{\epsilon}}\int_0^te^{\alpha s/\epsilon}K^\epsilon(s,x_s^\epsilon)^{q-1}(\nabla_{z} K)^\epsilon(s,x_s^\epsilon)\cdot\sigma(s,x^\epsilon_s) dW_s\notag.
\end{align}
Here we used the fact that  $\nabla_pH^\epsilon=\nabla_p K^\epsilon$ to cancel the terms involving $\nabla_q K^\epsilon$.

Using Assumption  \ref{assump2}, for any $t\in[0,T]$ we have
\begin{align}
&e^{\alpha t/\epsilon}K^\epsilon(t,x_t^\epsilon)^q\\
\leq&K^\epsilon(0,x_0^\epsilon)^q+\frac{\alpha}{\epsilon}\int_0^te^{\alpha s/\epsilon}K^\epsilon(s,x_s^\epsilon)^qds\notag\\
&+q\int_0^te^{\alpha s/\epsilon}K^\epsilon(s,x_s^\epsilon)^{q-1}(\partial_s K)^\epsilon(s,x_s^\epsilon)ds\notag\\
&-\frac{q\lambda}{\epsilon}\int_0^te^{\alpha s/\epsilon}K^\epsilon(s,x_s^\epsilon)^{q-1}\|(\nabla_z K)^\epsilon(s,x_s^\epsilon)\|^2ds\notag\\
&+\frac{q}{\sqrt{\epsilon}}\|-\partial_t\psi-\nabla_q V+F\|_\infty\int_0^te^{\alpha s/\epsilon}K^\epsilon(s,x_s^\epsilon)^{q-1}\|(\nabla_z K)^\epsilon(s,x_s^\epsilon)\|ds\notag\\
&+\frac{q(q-1)}{2\epsilon}\|\Sigma\|_\infty\int_0^te^{\alpha s/\epsilon}K^\epsilon(s,x_s^\epsilon)^{q-2}\|(\nabla_{z} K)^\epsilon(s,x_s^\epsilon)\|^2 ds\notag\\
&+\frac{q}{2\epsilon}\|\Sigma\|_{F,\infty}\int_0^te^{\alpha s/\epsilon}K^\epsilon(s,x_s^\epsilon)^{q-1}\left(\sum_{i,j}(\partial_{z_i}\partial_{z_j} K)^\epsilon(s,x_s^\epsilon)^2\right)^{1/2}ds\notag\\
&+\frac{q}{\sqrt{\epsilon}}\int_0^te^{\alpha s/\epsilon}K^\epsilon(s,x_s^\epsilon)^{q-1}(\nabla_{z} K)^\epsilon(s,x_s^\epsilon)\cdot\sigma(s,x^\epsilon_s) dW_s.\notag
\end{align}
 Here and in the following, $\|Y\|_F$ will denote the Frobenius (or Hilbert-Schmidt) norm of a matrix $Y$, i.e. $\|Y\|_F= \left(\sum_{i,j}Y_{ij}^2\right)^{1 \over 2}$.  For any matrix or vector-valued quantity $Y$ we write $\|Y\|_{\infty}\equiv\sup_{(t,x)\in[0,T]\times\mathbb{R}^{2n}}\|Y(t,x)\|$ and similarly for $\|\cdot\|_{F,\infty}$.  The implied value of $T$ will be clear from the context.

For any $\delta>0$,  Assumption \ref{assump1} implies the existence of $C>0$, $c>0$, and $M>0$ such that
\begin{align}
&e^{\alpha t/\epsilon}K^\epsilon(t,x_t^\epsilon)^q\\
\leq&K^\epsilon(0,x_0^\epsilon)^q+\frac{\alpha}{\epsilon}\int_0^te^{\alpha s/\epsilon}K^\epsilon(s,x_s^\epsilon)^qds\notag\\
&+q\int_0^te^{\alpha s/\epsilon}K^\epsilon(s,x_s^\epsilon)^{q-1}(M+C K^\epsilon(s,x_s^\epsilon))ds\notag\\
&-\frac{q\lambda}{\epsilon}\int_0^te^{\alpha s/\epsilon}K^\epsilon(s,x_s^\epsilon)^{q-1}(c K^\epsilon(s,x_s^\epsilon)-M)ds\notag\\
&+\frac{q}{\sqrt{\epsilon}}\|-\partial_t\psi-\nabla_q V+F\|_\infty\int_0^te^{\alpha s/\epsilon}K^\epsilon(s,x_s^\epsilon)^{q-1}(M+\delta K^\epsilon(s,x_s^\epsilon))ds\notag\\
&+\frac{q(q-1)}{2\epsilon}\|\Sigma\|_\infty\int_0^te^{\alpha s/\epsilon}K^\epsilon(s,x_s^\epsilon)^{q-2}(M+\delta K^\epsilon(s,x_s^\epsilon))^2 ds\notag\\
&+\frac{q}{2\epsilon}\|\Sigma\|_{F,\infty}\int_0^te^{\alpha s/\epsilon}K^\epsilon(s,x_s^\epsilon)^{q-1}(M+\delta K^\epsilon(s,x_s^\epsilon))ds\notag\\
&+\frac{q}{\sqrt{\epsilon}}\int_0^te^{\alpha s/\epsilon}K^\epsilon(s,x_s^\epsilon)^{q-1}(\nabla_{z} K)^\epsilon(s,x_s^\epsilon)\cdot\sigma(s,x^\epsilon_s) dW_s.\notag
\end{align}
In the estimate that follows, the first two terms and the last term of the above expression will be left unchanged.  To estimate the remaining terms, we will use the elementary inequalities $(a+b)^2\leq 2( a^2+b^2)$ and
\begin{align}
&K^{q - 1} \leq \left({M \over \delta}\right)^{q-1} + {\delta \over M}K^q, \\
&K^{q - 2} \leq \left({M \over \delta}\right)^{q-2} + \left({\delta \over M}\right)^2K^q.
\end{align}
The inequalites involving $K$ are obtained by looking at the cases $K\leq M/\delta$ and $K>M/\delta$. The first  holds for every $q \geq 1$ and the second for every $q \geq 2$.  Note that for $q = 1$, the term containing $K^\epsilon(s,x_s^\epsilon)^{q - 2}$ vanishes. Applying these inequalities yields 
\begin{align}\label{K_bound1}
&K^\epsilon(t,x_t^\epsilon)^q\\
\leq& e^{-\alpha t/\epsilon}K^\epsilon(0,x_0^\epsilon)^q +\frac{D}{\alpha}-\frac{d}{\epsilon}e^{-\alpha t/\epsilon}\int_0^te^{\alpha s/\epsilon}K^\epsilon(s,x_s^\epsilon)^q ds\notag\\
&+\frac{q}{\sqrt{\epsilon}}e^{-\alpha t/\epsilon}\int_0^te^{\alpha s/\epsilon}K^\epsilon(s,x_s^\epsilon)^{q-1}(\nabla_{z} K)^\epsilon(s,x_s^\epsilon)\cdot\sigma(s,x^\epsilon_s) dW_s,\notag
\end{align}
where
\begin{align}
D=&qM\left({M \over \delta}\right)^{q-1}\left[{\lambda }+\epsilon+{\sqrt{\epsilon}}\|-\partial_t\psi-\nabla_q V+F\|_\infty+\frac{1}{2}\|\Sigma\|_{F,\infty}\right]\\
&+{q(q-1)M^2}(M/\delta)^{q-2}\|\Sigma\|_\infty,\notag\\
d=&{qc\lambda}-\alpha-qC\epsilon-{q\delta}{\sqrt{\epsilon}}\|-\partial_t\psi-\nabla_q V+F\|_\infty-\frac{q\delta}{2}\|\Sigma\|_{F,\infty}-{q(q-1)\delta^2}\|\Sigma\|_\infty\notag\\
&-{q(q-1)\delta^2}\|\Sigma\|_\infty-\left({q\lambda }+q\epsilon+{q}{\sqrt{\epsilon}}\|-\partial_t\psi-\nabla_q V+F\|_\infty+\frac{q}{2}\|\Sigma\|_{F,\infty}\right){\delta}.\notag
\end{align}

For all $\epsilon$, $\delta$, $\alpha$ sufficiently small, $d$ is non-negative, and hence
\begin{align}
&K^\epsilon(t,x_t^\epsilon)^q\leq K^\epsilon(0,x_0^\epsilon)^q +\frac{D}{\alpha}\\
&+\frac{q}{\sqrt{\epsilon}}e^{-\alpha t/\epsilon}\int_0^te^{\alpha s/\epsilon}K^\epsilon(s,x_s^\epsilon)^{q-1}(\nabla_{z} K)^\epsilon(s,x_s^\epsilon)\cdot\sigma(s,x^\epsilon_s) dW_s.\notag
\end{align}
By Assumption  \ref{assump2}, $K^\epsilon(0,x_0^\epsilon)$ is bounded, so we are done.
\end{proof}
We will use this bound to prove several results about the behavior of the kinetic energy and momentum as $\epsilon\rightarrow 0^+$.

\subsection{Integrability of  the Kinetic Energy}
\begin{lemma}\label{E_K_finite}
Under Assumptions \ref{assump1} and \ref{assump2},  $E[\sup_{t\in[0,T]}K^\epsilon(t,x_t^\epsilon)^{p}]$ is finite for any $T>0$, $\epsilon>0$, and  $p>0$.
\end{lemma}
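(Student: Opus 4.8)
The plan is to bootstrap from the $P$-a.s.\ bound of Lemma~\ref{K_bound_lemma}. Fix $T>0$, $\epsilon>0$, and let $p>0$; it suffices to treat $p = q \in \mathbb{N}$, $q \geq 1$ (the general case follows by Jensen / monotonicity of $L^p$ norms on a finite measure space, since we may enlarge $p$). Apply Lemma~\ref{K_bound_lemma} with this $q$ and a fixed admissible $\alpha \in (0,\alpha_0]$, $\epsilon \le \epsilon_0$ (for the finitely many $\epsilon$ not below $\epsilon_0$, or for a fixed $\epsilon$, the same argument runs with $\epsilon$-dependent constants, which is harmless since $\epsilon$ is fixed). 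Taking the supremum over $t \in [0,T]$ and then expectations,
\begin{align}
E\Big[\sup_{t\in[0,T]}K^\epsilon(t,x_t^\epsilon)^q\Big]
\leq \frac{\kappa}{\alpha} + \frac{q}{\sqrt{\epsilon}}\, E\Big[\sup_{t\in[0,T]} e^{-\alpha t/\epsilon}\Big| M_t\Big|\Big]
\leq \frac{\kappa}{\alpha} + \frac{q}{\sqrt{\epsilon}}\, E\Big[\sup_{t\in[0,T]}|M_t|\Big],
\end{align}
where $M_t = \int_0^t e^{\alpha s/\epsilon} K^\epsilon(s,x_s^\epsilon)^{q-1}(\nabla_z K)^\epsilon(s,x_s^\epsilon)\cdot \sigma(s,x_s^\epsilon)\,dW_s$ and we used $0 \le e^{-\alpha t/\epsilon}\le 1$. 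So everything reduces to controlling the running supremum of this local martingale.

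The obstacle is that $M$ is a priori only a local martingale, so one cannot immediately apply Burkholder--Davis--Gundy (BDG): its quadratic variation involves $K^\epsilon$ raised to power $2(q-1)$ plus the quadratic factor $\|(\nabla_z K)^\epsilon\|^2$, which by Assumption~\ref{assump1}(3) is $\le (M+\delta K^\epsilon)^2 \lesssim 1 + (K^\epsilon)^2$ — so the quadratic variation is bounded (up to constants) by $\int_0^T (1 + K^\epsilon_s)^{2q}\,ds$, i.e.\ by exactly the quantity whose integrability we are trying to establish. The standard remedy is a localization / stopping-time argument: introduce $\tau_n = \inf\{t \ge 0 : K^\epsilon(t,x_t^\epsilon) \ge n\} \wedge T$, which are stopping times increasing to $T$ a.s.\ (using continuity of paths and non-explosion from Assumption~\ref{assump1}). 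Replacing $t$ by $t\wedge\tau_n$ throughout, $M_{t\wedge\tau_n}$ is a genuine square-integrable martingale, and the estimate above gives
\begin{align}
E\Big[\sup_{t\in[0,T]}K^\epsilon(t\wedge\tau_n,x_{t\wedge\tau_n}^\epsilon)^q\Big]
\leq \frac{\kappa}{\alpha} + \frac{q}{\sqrt{\epsilon}}\, C_{\mathrm{BDG}}\, E\Big[\Big(\int_0^{T}\!\! e^{2\alpha s/\epsilon} K^\epsilon(s\wedge\tau_n,x^\epsilon_{s\wedge\tau_n})^{2(q-1)}\|(\nabla_z K)^\epsilon\|^2\,ds\Big)^{1/2}\Big].
\end{align}
Bounding $e^{2\alpha s/\epsilon}\le e^{2\alpha T/\epsilon}$, pulling it out, using $\|(\nabla_z K)^\epsilon\|^2 \le 2M^2 + 2\delta^2 (K^\epsilon)^2$, and the elementary inequality $(\int_0^T Y_s\,ds)^{1/2} \le \sqrt{T}\sup_s Y_s^{1/2}$ together with $ab \le \tfrac12(a^2 + b^2)$, one extracts a term $\tfrac12 E[\sup_{t\le T} K^\epsilon(t\wedge\tau_n,\cdot)^q]$ on the right (absorbed into the left) plus constants depending on $T,\epsilon,q,\alpha,M,\delta,C_{\mathrm{BDG}}$ but not on $n$; alternatively one simply notes $K^\epsilon(s\wedge\tau_n,\cdot) \le n$ so the right side is finite for each $n$, yielding a bound uniform in $n$ of the form $E[\sup_{t\le T}K^\epsilon(t\wedge\tau_n,\cdot)^q] \le C(T,\epsilon,q)$.

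Finally, let $n\to\infty$: since $\tau_n \uparrow T$ a.s.\ and paths are continuous, $\sup_{t\le T}K^\epsilon(t\wedge\tau_n,x^\epsilon_{t\wedge\tau_n})^q \uparrow \sup_{t\le T}K^\epsilon(t,x_t^\epsilon)^q$ monotonically, so Fatou's lemma (or monotone convergence) gives $E[\sup_{t\in[0,T]}K^\epsilon(t,x_t^\epsilon)^q] \le C(T,\epsilon,q) < \infty$. This proves the claim for integer $p \ge 1$, and hence for all $p > 0$. The only subtlety to be careful about is choosing $\alpha$ (and implicitly $\delta$) once and for all in the admissible range from Lemma~\ref{K_bound_lemma} before doing the BDG estimate, and confirming that the constant absorbed from the right-hand side argument is genuinely $n$-independent; I expect this absorption step to be the one requiring the most care, though it is routine.
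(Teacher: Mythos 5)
Your outline — localize via stopping times, apply BDG to the stopped martingale, pass to the limit by monotone convergence — matches the paper's. The paper also uses a stopping time $\tau_M=\inf\{t:K^\epsilon(t,x_t^\epsilon)=M\}$, applies BDG, and sends $M\to\infty$. But the detailed estimate differs in a way that matters, and your version has gaps.

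The paper does \emph{not} start from the final inequality of Lemma~\ref{K_bound_lemma}. It starts from the intermediate bound \req{K_bound1} with $q=1$, which retains the Lebesgue integral $-\frac{d}{\epsilon}e^{-\alpha t/\epsilon}\int_0^t e^{\alpha s/\epsilon}K^\epsilon\,ds$ and has the bare (no $K^{q-1}$ factor) stochastic integral. Raising \emph{that} to the $p$th power, applying H\"older and BDG, one lands in a Gronwall-type inequality for $E[\sup_{s\le t}K^\epsilon(s\wedge\tau_M,\cdot)^p]$, which the stopping time makes a priori finite. This route has two advantages you lose: (i) it is valid for every fixed $\epsilon>0$, not just $\epsilon\le\epsilon_0$, because one does not need $d\ge 0$ (when $d<0$ the term is absorbed into the constant $C_2$ in front of the Lebesgue integral, which then feeds into Gronwall); (ii) with $q=1$ the martingale's integrand has no $K^{q-1}$ factor, so the BDG quadratic-variation estimate closes cleanly without any induction on $q$.

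In contrast, your proof invokes the \emph{final} bound \req{K_bound_lemma_eq}, which the lemma guarantees only for $\epsilon\le\epsilon_0$. Your parenthetical that the argument "runs with $\epsilon$-dependent constants" for larger $\epsilon$ is not correct as stated: for large $\epsilon$ the constant $d$ in the proof of Lemma~\ref{K_bound_lemma} can be negative, and then the inequality \req{K_bound_lemma_eq} simply does not hold — one must retain the Lebesgue integral term, i.e.\ fall back to \req{K_bound1}, which is exactly what the paper does. Second, your "alternatively one simply notes $K^\epsilon\le n$..." is not a valid alternative: substituting $K\le n$ into the right-hand side gives a bound that grows with $n$, not one uniform in $n$, so there is nothing to pass to the limit with. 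Third, the absorption route you favor requires, for $q\ge 2$, control of $E[\sup K^{q-1}]$ (coming from the cross-term $2M^2K^{2(q-1)}$), so you would need an induction on $q$ that your write-up does not mention. None of these is unfixable, but each is a genuine gap in the written proof, and the paper's choice to raise the $q=1$ inequality to the $p$th power and run Gronwall sidesteps all three at once.
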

\begin{proof}
Fix $T>0$, $\epsilon>0$. First, let $p>2$.  Given $M>0$, define the stopping time $\tau_M=\inf\{t:K^\epsilon(t,x^\epsilon_t)=M\}$. By Assumption \ref{assump2}  we can take $M$ large enough so that $K^\epsilon(0,x^\epsilon_0)<M$.

Let $t\leq T$  and raise \req{K_bound1} (with $q=1$) to the $p$th power to obtain
\begin{align}
& K^\epsilon(t\wedge\tau_M,(x^\epsilon)_t^{\tau_M})^p\notag\\
\leq & C_1+C_2 \left(\int_0^t 1_{s\leq \tau_M}K^\epsilon(s\wedge \tau_M,(x^\epsilon)_s^{\tau_M})ds\right)^p\\
&+C_3\bigg|\int_0^t1_{s\leq \tau_M} e^{\alpha (s\wedge\tau_M)/\epsilon}( \nabla_z K)^\epsilon(s\wedge \tau_M,(x^\epsilon)_s^{\tau_M})\cdot \sigma(s\wedge \tau_M,(x^\epsilon)^{\tau_M}_s) dW_s\bigg|^p\notag
\end{align}
where $C_i$ are constants (that depend on $\epsilon$ and $T$).

Therefore, applying H\"older's inequality to the second term and the  Burkholder-Davis-Gundy inequality  to the third term, (see, for example, Theorem 3.28 in \cite{karatzas2014brownian}), we obtain
\begin{align}
&E\left[\sup_{s\in [0,t]}K^\epsilon(s\wedge\tau_M,(x^\epsilon)_s^{\tau_M})^p\right]\notag\\
\leq &C_1+C_2 T^{p-1} \int_0^t E[K^\epsilon(s\wedge \tau_M,(x^\epsilon)_s^{\tau_M})^p]ds\\
&+C_4E\bigg[\bigg(\int_0^t1_{r\leq \tau_M} e^{2\alpha (r\wedge\tau_M)/\epsilon}\|(\nabla_z K)^\epsilon(r\wedge \tau_M,(x^\epsilon)_r^{\tau_M})\|^2\notag\\
&\hspace{20mm}\times\| \sigma(r\wedge \tau_M,(x^\epsilon)^{\tau_M}_r)\|^2 dr\bigg)^{p/2}\bigg]\notag\\
\leq & C_1+C_2 T^{p-1} \int_0^t E\left[\sup_{r\in[0,s]}K^\epsilon(r\wedge \tau_M,(x^\epsilon)_r^{\tau_M})^p\right]ds\\
&+C_5E\left[\left(\int_0^t(1+ K^\epsilon(r\wedge \tau_M,(x^\epsilon)_r^{\tau_M}))^2  dr\right)^{p/2}\right].\notag
\end{align}

By assumption, $p>2$, so we can use H\"older's inequality again to obtain
\begin{align}
&E\left[\sup_{s\in [0,t]}K(s\wedge\tau_M,(x^\epsilon)_s^{\tau_M})^p\right]\notag\\
\leq &C_1+C_2 T^{p-1} \int_0^t E\left[\sup_{r\in[0,s]}K^\epsilon(r\wedge \tau_M,(x^\epsilon)_r^{\tau_M})^p\right]ds\\
&+C_5T^{p/2-1}E\left[\int_0^t(1+ K^\epsilon(r\wedge \tau_M,(x^\epsilon)_r^{\tau_M}))^p dr\right]\notag\\
\leq &C_6+C_7 \int_0^t E\left[\sup_{r\in[0,s]}K^\epsilon(r\wedge \tau_M,(x^\epsilon)_r^{\tau_M})^p\right]ds,\label{Gronwall_rhs}
\end{align}
where $C_i$ are independent of $t$ and $M$.

By the definition of $\tau_M$,
\begin{equation}
\sup_{s\in [0,t]}K^\epsilon(s\wedge\tau_M,(x^\epsilon)_s^{\tau_M})^p\leq M
\end{equation}
for all $t$.  Therefore the integral in \req{Gronwall_rhs} is finite for all $t\leq T$ and Gronwall's inequality gives
\begin{align}
E\left[\sup_{t\in [0,T]}K^\epsilon(t\wedge\tau_M,(x^\epsilon)_t^{\tau_M})^p\right]\leq C_6e^{C_7T}.
\end{align}
The $C_i$ are independent of $M$, so taking $M\rightarrow\infty$ and using the Monotone Convergence Theorem implies
\begin{align}
E\left[\sup_{t\in [0,T]}K^\epsilon(t,x^\epsilon_t)^p\right]\leq C_6e^{C_7T}<\infty.
\end{align}
This gives the result for  $p>2$.  It follows for all $p>0$ by an application of H\"older's inequality.
\end{proof}

\subsection{Supremum of the Expectation of the Kinetic Energy}
Combining Lemmas \ref{K_bound_lemma} and \ref{E_K_finite} we can prove the following bound for the supremum of the expected value of the kinetic energy.
\begin{proposition}\label{Sup_E_prop}
Under the Assumptions \ref{assump1} and \ref{assump2}, for any $T>0$, $q>0$ we have
\begin{align}
\sup_{t\in[0,T]}E[ K^\epsilon(t,x^\epsilon_t)^{q}]=O(1) \text{ as $\epsilon\rightarrow 0^+$.}
\end{align}
\end{proposition}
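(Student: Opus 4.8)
The plan is to combine the two preceding lemmas: take expectations in the $P$-a.s. bound \req{K_bound_lemma_eq} of Lemma~\ref{K_bound_lemma} and show that the stochastic-integral term contributes nothing, so that only the deterministic constant $\kappa/\alpha$ survives. First I would treat the case of integer $q\geq 1$. Fix $T>0$, let $\alpha_0,\epsilon_0,\kappa$ be the constants furnished by Lemma~\ref{K_bound_lemma}, and fix $\alpha=\alpha_0$. For $0<\epsilon\leq\epsilon_0$ write the last term in \req{K_bound_lemma_eq} as $(q/\sqrt{\epsilon})\,e^{-\alpha t/\epsilon}M_t^\epsilon$, where
\[
M_t^\epsilon:=\int_0^t e^{\alpha s/\epsilon}\,K^\epsilon(s,x_s^\epsilon)^{q-1}\,(\nabla_z K)^\epsilon(s,x_s^\epsilon)\cdot\sigma(s,x_s^\epsilon)\,dW_s.
\]
The key point is that $M_t^\epsilon$ is a genuine (square-integrable) martingale, hence has mean zero. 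Indeed, on $[0,T]$ the weight $e^{\alpha s/\epsilon}$ is bounded by the finite (though $\epsilon$-dependent) constant $e^{\alpha T/\epsilon}$; $\sigma$ is bounded by Assumption~\ref{assump2}; and by \req{K_assump3} of Assumption~\ref{assump1} (with any fixed $\delta$) we have $\|(\nabla_z K)^\epsilon\|\leq M+\delta K^\epsilon$, so the integrand has squared norm bounded by $C\big(1+K^\epsilon(s,x_s^\epsilon)^{2q}\big)$ for some $C=C(\epsilon,T,q)$. Lemma~\ref{E_K_finite}, applied with exponent $2q$, then gives $E\!\int_0^T\!\big(1+K^\epsilon(s,x_s^\epsilon)^{2q}\big)\,ds\leq T\big(1+E[\sup_{t\in[0,T]}K^\epsilon(t,x_t^\epsilon)^{2q}]\big)<\infty$, so the integrand lies in the class for which the It\^o integral is a square-integrable martingale; thus $E[M_t^\epsilon]=0$ for every $t\in[0,T]$.

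Next I would take expectations in \req{K_bound_lemma_eq}. The left-hand side is integrable, again by Lemma~\ref{E_K_finite} (with exponent $q$), so the expectation is well defined, and the stochastic-integral term drops out. This leaves
\[
E\big[K^\epsilon(t,x_t^\epsilon)^q\big]\leq \frac{\kappa}{\alpha_0}\qquad\text{for all }t\in[0,T],\ 0<\epsilon\leq\epsilon_0,
\]
hence $\sup_{t\in[0,T]}E[K^\epsilon(t,x_t^\epsilon)^q]\leq\kappa/\alpha_0$, a constant independent of $\epsilon$, which is exactly the claimed $O(1)$ bound. Finally, for a general real $q>0$ set $q'=\lceil q\rceil\geq 1$; since $K^\epsilon\geq 0$ we have $(K^\epsilon)^q\leq 1+(K^\epsilon)^{q'}$ pointwise, so the integer case yields $\sup_{t\in[0,T]}E[(K^\epsilon)^q]\leq 1+\sup_{t\in[0,T]}E[(K^\epsilon)^{q'}]=O(1)$.

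The only real subtlety — and the step I expect to require the most care — is the justification that $M_t^\epsilon$ is a true martingale rather than merely a local martingale, so that its expectation vanishes exactly; this is precisely the role of the a~priori integrability provided by Lemma~\ref{E_K_finite}. Without that input one would be forced into a localization argument (stopping at $\tau_M=\inf\{t:K^\epsilon(t,x_t^\epsilon)=M\}$) followed by Fatou's lemma, which is workable but less clean and requires one to check that the uniform-in-$\epsilon$ constant survives the passage $M\to\infty$. Everything else reduces to a direct application of Lemmas~\ref{K_bound_lemma} and~\ref{E_K_finite} together with the boundedness statements in Assumptions~\ref{assump1}--\ref{assump2}.
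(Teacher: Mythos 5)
Your proof is correct and follows essentially the same route as the paper: take expectations in the $P$-a.s. bound from Lemma~\ref{K_bound_lemma}, verify that the It\^o integral is a true martingale via Assumption~\ref{assump1} together with the a priori integrability from Lemma~\ref{E_K_finite}, and conclude the uniform bound $\kappa/\alpha$. The only (immaterial) deviation is the passage from integer to general $q>0$: you use the pointwise domination $(K^\epsilon)^q\leq 1+(K^\epsilon)^{\lceil q\rceil}$, whereas the paper invokes H\"older's inequality --- both give the same conclusion with essentially the same effort.
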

\begin{proof}
First take $T>0$, $q\in\mathbb{N}$, $q\geq 1$.  The following computation shows that
\begin{equation}
M_t\equiv\int_0^t e^{\alpha s/\epsilon}K^\epsilon(s,x_s^\epsilon)^{q-1}(\nabla_{z} K)^\epsilon(s,x_s^\epsilon)\cdot\sigma(s,x^\epsilon_s) dW_s
\end{equation}
 is a martingale (see \cite{karatzas2014brownian}):
\begin{align}\label{martingale_proof}
&E\left[\int_0^t\|e^{\alpha s/\epsilon}K^\epsilon(s,x_s^\epsilon)^{q-1}(\nabla_{z} K)^\epsilon(s,x_s^\epsilon)\cdot\sigma(s,x^\epsilon_s)\|^2 ds\right]\\
\leq& e^{2\alpha t/\epsilon}\|\sigma\|^2_{\infty} t E\left[\sup_{s\in[0,t]} K^\epsilon(s,x_s^\epsilon)^{2(q-1)}(M+K^\epsilon(s,x_s^\epsilon))^2\right]\notag\\
\leq & 2e^{2\alpha t/\epsilon}\|\sigma\|^2_{\infty} t \left(M^2E\left[\sup_{s\in[0,t]}K^\epsilon(s,x_s^\epsilon)^{2(q-1)}\right]+E\left[\sup_{s\in[0,t]}K^\epsilon(s,x_s^\epsilon)^{2q}\right]\right)<\infty,\notag
\end{align}
where we used Assumption \ref{assump1} and Lemma \ref{E_K_finite}.\\

Therefore,  taking the expectation of \req{K_bound_lemma_eq}, we see that there exists $\kappa>0$ such that for all $t\leq T$ and all $\alpha$ and $\epsilon$ sufficiently small, we have
\begin{align}
E[ K^\epsilon(t,x_t^\epsilon)^q]\leq \frac{\kappa}{\alpha}.
\end{align}
 This proves the result for $q$ a positive integer. The result then follows for arbitrary $q>0$ by an application of H\"older's inequality.

\end{proof}
\begin{corollary}
We note that if the constants involved in the bounds from Assumptions \ref{assump1}-\ref{assump2} are valid uniformly for $t\in[0,\infty)$ (and not just $t\in[0,T]$) then we obtain the stronger bound
\begin{align}
\sup_{t\in[0,\infty)}E[ K^\epsilon(t,x^\epsilon_t)^{q}]=O(1) \text{ as $\epsilon\rightarrow 0^+$}
\end{align}
for any $q>0$.
\end{corollary}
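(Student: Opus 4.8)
The plan is to re-examine the proofs of Lemma~\ref{K_bound_lemma} and Proposition~\ref{Sup_E_prop} and track exactly how the final time $T$ enters. In the derivation of the $P$-a.s.\ bound \req{K_bound_lemma_eq}, the only place $T$ appears is through the constants $M$, $C$, $c$, $\lambda$, $\|\partial_t\psi-\nabla_qV+F\|_\infty$, $\|\Sigma\|_\infty$, $\|\Sigma\|_{F,\infty}$ supplied by Assumptions~\ref{assump1} and \ref{assump2}: the It\^o expansion, the elementary pointwise inequalities bounding $K^{q-1}$ and $K^{q-2}$, and the choice of $\alpha,\delta,\epsilon$ small enough to make the coefficient $d$ nonnegative all involve only these constants. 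Consequently, if the bounds in Assumptions~\ref{assump1}--\ref{assump2} hold with constants valid uniformly for $t\in[0,\infty)$, then the thresholds $\alpha_0,\epsilon_0$ and the constant $\kappa$ in Lemma~\ref{K_bound_lemma} can be chosen independently of $T$, so that \req{K_bound_lemma_eq} holds for all $t\in[0,\infty)$ with a single $\kappa$.

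Next I would reproduce the expectation step from the proof of Proposition~\ref{Sup_E_prop}. Fix any $t\geq 0$. Applying Lemma~\ref{E_K_finite} with $T=t$ gives $E[\sup_{s\in[0,t]}K^\epsilon(s,x^\epsilon_s)^p]<\infty$ for every $p>0$; note this is only needed for the fixed horizon $t$, so the fact that the bound in Lemma~\ref{E_K_finite} degrades like $e^{C_7 T}$ is irrelevant here. This finiteness, together with the bound $\|(\nabla_zK)^\epsilon\|\leq M+K^\epsilon$ from Assumption~\ref{assump1} and the boundedness of $\sigma$, makes the estimate \req{martingale_proof} go through verbatim on $[0,t]$, so the stochastic integral in \req{K_bound_lemma_eq} is a genuine martingale there and has zero expectation. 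Taking expectations in \req{K_bound_lemma_eq} thus yields $E[K^\epsilon(t,x^\epsilon_t)^q]\leq \kappa/\alpha$ for all sufficiently small $\alpha,\epsilon$, with $\kappa$ and the admissible range of $\alpha,\epsilon$ now independent of $t$ by the previous paragraph. Taking the supremum over $t\in[0,\infty)$ gives $\sup_{t\in[0,\infty)}E[K^\epsilon(t,x^\epsilon_t)^q]=O(1)$ for $q\in\mathbb{N}$, $q\geq 1$, and the extension to arbitrary $q>0$ follows from H\"older's (or Jensen's) inequality exactly as in Proposition~\ref{Sup_E_prop}.

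The one point to be careful about --- and essentially the only substantive one --- is the logical separation between the two ingredients: Lemma~\ref{E_K_finite} is used only qualitatively, to certify that a certain local martingale is a true martingale on each finite horizon, so its $T$-dependent constant does no harm, whereas the \emph{quantitative} uniform-in-$t$ control comes entirely from Lemma~\ref{K_bound_lemma}, whose constants inherit the $[0,\infty)$-uniformity directly from the strengthened hypotheses. There is no genuine obstacle beyond this bookkeeping: one simply reruns the two earlier proofs with the interval $[0,T]$ in the statements of the assumptions replaced by $[0,\infty)$.
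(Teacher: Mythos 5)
Your proposal is correct and is precisely the bookkeeping the paper leaves implicit: the paper states the corollary without proof, trusting the reader to observe that all constants in Lemma~\ref{K_bound_lemma} and Proposition~\ref{Sup_E_prop} derive from the sup-norm bounds in Assumptions~\ref{assump1}--\ref{assump2}. Your key observation --- that Lemma~\ref{E_K_finite} enters only qualitatively, to certify martingality on each fixed finite horizon, so its $T$-dependent constant $C_6 e^{C_7 T}$ is harmless --- is exactly the right point to isolate.
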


\subsection{Expectation of the Supremum of the Kinetic Energy}
We now have the ingredients to derive a bound on the expectation of the supremum of the kinetic energy.  For this, we need to recall a special case of  Lemma 5.1 from \cite{particle_manifold_paper}:
\begin{lemma}\label{matrix_exp_decay_bound}
 Let $V\in L^2_{loc}(dt)$ be an $\mathbb{R}^{ k}$-valued process. For any $\alpha>0$, $T\geq \delta>0$  we have the $P$-a.s. bound  
\begin{align}\label{Phi_int_bound2}
&\sup_{t\in[0,T]}\left|\int_0^t e^{-\alpha (t-s)}V_s dW_s\right|\\
\leq &5\left(e^{-\alpha\delta}\sup_{t\in[0,T]}\left|\int_0^t V_r dW_r\right|+\max_{k=0,...,N-1}\sup_{t\in [k\delta, (k+2)\delta]}\left|\int_{k\delta}^tV_rdW_r\right|\right)\notag
\end{align}
where $N=\max\{k\in\mathbb{Z}:k\delta<T\}$.
\end{lemma}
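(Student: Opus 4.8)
The plan is to turn the estimate into a purely pathwise statement. After replacing each stochastic integral by a continuous modification (available since $V\in L^2_{loc}(dt)$), I would use stochastic integration by parts against the $C^1$, deterministic weight $s\mapsto e^{-\alpha(t-s)}$: this rewrites $\int_0^t e^{-\alpha(t-s)}V_s\,dW_s$ in terms of the path of the ordinary martingale $u\mapsto\int_a^u V_r\,dW_r$ alone, so the argument becomes deterministic analysis on a fixed path. Concretely, I would fix $t\in[0,T]$, set $j=\lfloor t/\delta\rfloor$, and split $[0,t]=[0,(j-1)\delta]\cup[(j-1)\delta,t]$ when $j\ge 2$ (the cases $j\in\{0,1\}$ leave only the single short interval $[0,t]\subset[0,2\delta)$, treated exactly like the ``near'' piece below). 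The reason for this split: the near piece has length strictly less than $2\delta$, so it is controlled by a single one of the local suprema on the right-hand side, while on the far piece the prefactor $e^{-\alpha(t-(j-1)\delta)}$ is already $\le e^{-\alpha\delta}$ and can be pulled out before any estimation.

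For the near piece I would put $Z_u=\int_{(j-1)\delta}^u V_r\,dW_r$ (so $Z_{(j-1)\delta}=0$) and use the product rule to get the pathwise identity $\int_{(j-1)\delta}^t e^{-\alpha(t-s)}V_s\,dW_s=Z_t-\alpha\int_{(j-1)\delta}^t e^{-\alpha(t-s)}Z_s\,ds$. Since $e^{-\alpha(t-s)}\le 1$ and $\alpha\int_{(j-1)\delta}^t e^{-\alpha(t-s)}\,ds=1-e^{-\alpha(t-(j-1)\delta)}\le 1$, the right-hand side is at most $2\sup_{u\in[(j-1)\delta,t]}|Z_u|$; and since $[(j-1)\delta,t]\subset[(j-1)\delta,(j+1)\delta]$ with $0\le j-1\le N-1$, this is at most twice the $k=j-1$ term in the maximum appearing in the claimed bound.

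For the far piece I would factor $e^{-\alpha(t-s)}=e^{-\alpha(t-(j-1)\delta)}\,e^{-\alpha((j-1)\delta-s)}$, use $t-(j-1)\delta\ge\delta$ to bound the first factor by $e^{-\alpha\delta}$, and apply integration by parts once more — this time with the global martingale $Y_u=\int_0^u V_r\,dW_r$ and the $C^1$ weight $s\mapsto e^{-\alpha((j-1)\delta-s)}$ on $[0,(j-1)\delta]$, which is $\le 1$ and has total variation $\le 1$. This gives $\left|\int_0^{(j-1)\delta}e^{-\alpha((j-1)\delta-s)}V_s\,dW_s\right|\le 2\sup_{u\in[0,T]}|Y_u|$, hence the far piece is at most $2e^{-\alpha\delta}\sup_{u\in[0,T]}|\int_0^u V_r\,dW_r|$. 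Adding the two contributions produces the stated inequality for each fixed $t$ (in fact with constant $2$ rather than $5$). Finally I would take $\sup_{t\in[0,T]}$: this is legitimate because the right-hand side does not depend on $t$ and $t\mapsto\int_0^t e^{-\alpha(t-s)}V_s\,dW_s$ is continuous, so a bound holding along a countable dense set of $t$ propagates; the one boundary case, $t=T$ when $T$ is an integer multiple of $\delta$ (where $\lfloor t/\delta\rfloor$ may equal $N+1$), is absorbed by continuity.

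I do not expect a deep obstacle. The one idea that matters — and that a naive cell-by-cell decomposition would miss — is that the far piece must be compared to the \emph{global} martingale $\int_0^\cdot V_r\,dW_r$, not to its increments over the cells $[k\delta,(k+1)\delta]$: only then does the single factor $e^{-\alpha\delta}$ suffice, whereas summing local estimates weighted by $e^{-\alpha(m-1-k)\delta}$ produces a factor $(1-e^{-\alpha\delta})^{-1}$ that blows up as $\alpha\delta\to0$. The remaining work — justifying the integration-by-parts identity pathwise, checking the index range $0\le j-1\le N-1$, and disposing of the small-$j$ and grid-point edge cases — is routine.
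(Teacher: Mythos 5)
The paper does not prove this lemma; it is imported verbatim as ``a special case of Lemma 5.1 from \cite{particle_manifold_paper}'', so there is no in-paper proof to compare against. What I can say is that your argument is correct, and it is self-contained in a way the paper's citation is not. Both integration-by-parts identities you invoke are instances of It\^o's product rule with a deterministic $C^1$ weight (the quadratic covariation vanishes), so they hold $P$-a.s.\ pathwise for each fixed $t$; taking the supremum over a countable dense set and invoking the continuity of $t\mapsto e^{-\alpha t}\int_0^t e^{\alpha s}V_s\,dW_s$ then extends the bound to $\sup_{t\in[0,T]}$, since the right-hand side is $t$-independent. The split at $(j-1)\delta$ with $j=\lfloor t/\delta\rfloor$ is exactly calibrated: the near interval has length $<2\delta$ and sits inside $[(j-1)\delta,(j+1)\delta]$ with $0\le j-1\le N-1$ (the only exception being $t=T$ when $T/\delta\in\mathbb{Z}$, which is dispatched by continuity as you note), while on the far piece the pulled-out factor $e^{-\alpha(t-(j-1)\delta)}\le e^{-\alpha\delta}$ allows the tail to be compared to the global martingale with a $t$-independent constant. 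Your observation about why one must compare to the global martingale rather than summing cell increments geometrically is the genuinely load-bearing idea and is correct. You in fact obtain the inequality with constant $2$ rather than $5$, which is strictly stronger than the statement; presumably the original paper's proof of the more general matrix version is less tight, but that is immaterial here.

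One small bookkeeping remark: as stated, when $T=\delta$ one has $N=0$ and the $\max$ on the right is over an empty index set, so the inequality as written is degenerate in that boundary case. This is a wrinkle in the lemma statement itself, not in your argument, which implicitly uses $T>\delta$ (equivalently $N\ge1$) so that the $j\in\{0,1\}$ contributions land in the $k=0$ slot.
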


\begin{proposition}\label{E_sup_prop}
Under Assumptions \ref{assump1} and \ref{assump2}, for any $T>0$, $p>0$, $\beta>0$ we have
\begin{align}
E\left[\sup_{t\in[0,T]}K^\epsilon(t,x_t^\epsilon)^{p}\right]=O(\epsilon^{-\beta}) \text{ as $\epsilon\rightarrow 0^+$.}
\end{align}
\end{proposition}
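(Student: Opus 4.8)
The plan is to run a Gronwall-free argument that turns the $P$-a.s. bound of Lemma \ref{K_bound_lemma} (with $q=1$) into a moment bound, using Lemma \ref{matrix_exp_decay_bound} to control the supremum of the stochastic term and Proposition \ref{Sup_E_prop} to control the integrands. First I would reduce to the case $p\geq 2$: for $0<p<2$, concavity of $x\mapsto x^{p/2}$ and Jensen's inequality give $E[\sup_{t\in[0,T]}K^\epsilon(t,x^\epsilon_t)^p]\leq (E[\sup_{t\in[0,T]}K^\epsilon(t,x^\epsilon_t)^2])^{p/2}$ (the right side being finite for each fixed $\epsilon$ by Lemma \ref{E_K_finite}), so the general statement follows from the $p=2$ case applied with $\beta$ replaced by $2\beta/p$.

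So fix $p\geq 2$, $T>0$, $\beta>0$, and fix some $\alpha\in(0,\alpha_0]$ from Lemma \ref{K_bound_lemma} with $q=1$. That lemma gives, $P$-a.s. for $t\in[0,T]$ and $\epsilon\leq\epsilon_0$, the bound $K^\epsilon(t,x^\epsilon_t)\leq \kappa/\alpha+\epsilon^{-1/2}|\mathcal M^\epsilon_t|$, where $\mathcal M^\epsilon_t=\int_0^t e^{-(\alpha/\epsilon)(t-s)}V^\epsilon_s\,dW_s$ and $V^\epsilon_s=(\nabla_z K)^\epsilon(s,x^\epsilon_s)\cdot\sigma(s,x^\epsilon_s)$, a continuous (hence $L^2_{loc}$) process. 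Combining \req{K_assump3} (with its free parameter set to $1$), the boundedness of $\sigma$ from Assumption \ref{assump2}, and Proposition \ref{Sup_E_prop}, one gets $\sup_{s\in[0,T']}E[\|V^\epsilon_s\|^r]=O(1)$ as $\epsilon\to0^+$ for every $r>0$ and every $T'>0$. Then apply Lemma \ref{matrix_exp_decay_bound} to $\mathcal M^\epsilon_t$ with decay rate $\alpha/\epsilon$ and window width $\delta_\epsilon:=\epsilon|\log\epsilon|/\alpha$ (admissible once $\epsilon$ is small enough that $0<\delta_\epsilon<T$), noting that $e^{-(\alpha/\epsilon)\delta_\epsilon}=\epsilon$ and $N_\epsilon\leq T/\delta_\epsilon+1$. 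Raising to the $p$-th power and taking expectations, the ``global'' term $e^{-(\alpha/\epsilon)\delta_\epsilon}\sup_{t\in[0,T]}|\int_0^t V^\epsilon_r\,dW_r|$ is handled by the Burkholder-Davis-Gundy inequality, H\"older in time, and the uniform moment bound: it contributes $O(\epsilon^p)$ to $E[\sup_t|\mathcal M^\epsilon_t|^p]$, hence $O(\epsilon^{p/2})=o(1)$ after multiplication by $\epsilon^{-p/2}$.

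The main work, and the step I expect to be the real obstacle, is the ``local'' term $E[\max_{0\leq k<N_\epsilon}(X^\epsilon_k)^p]$, where $X^\epsilon_k:=\sup_{t\in[k\delta_\epsilon,(k+2)\delta_\epsilon]}|\int_{k\delta_\epsilon}^t V^\epsilon_r\,dW_r|$: a crude $\max\leq\sum$ estimate loses the full factor $N_\epsilon\sim 1/\delta_\epsilon$ and, no matter how $\delta_\epsilon$ is chosen in the admissible window, yields only $O(\epsilon^{-\max\{1,p/2\}})$, which is not $O(\epsilon^{-\beta})$. To recover the claimed subpolynomial rate I would instead fix $r>\max\{2,p,p/\beta\}$ and use $\max_k (X^\epsilon_k)^p\leq(\sum_k (X^\epsilon_k)^r)^{p/r}$ together with concavity of $x\mapsto x^{p/r}$ to obtain $E[\max_k (X^\epsilon_k)^p]\leq(\sum_{k=0}^{N_\epsilon-1} E[(X^\epsilon_k)^r])^{p/r}$. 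Then Burkholder-Davis-Gundy, H\"older in time (valid since $r\geq2$), and the uniform moment bound give $E[(X^\epsilon_k)^r]\leq C_r(2\delta_\epsilon)^{r/2}\,O(1)$ uniformly in $k$, so that $E[\max_k (X^\epsilon_k)^p]=O(\delta_\epsilon^{(r/2-1)p/r})=O(\delta_\epsilon^{p/2-p/r})$.

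Finally I would collect the estimates. Multiplying the local term by $\epsilon^{-p/2}$ and substituting $\delta_\epsilon=\epsilon|\log\epsilon|/\alpha$ yields $O(\epsilon^{-p/r}|\log\epsilon|^{p/2})$, and since $p/r<\beta$ this is $O(\epsilon^{-\beta})$ (because $\epsilon^{\beta-p/r}|\log\epsilon|^{p/2}\to0$). Adding the $o(1)$ global contribution and the constant $(\kappa/\alpha)^p$, one concludes $E[\sup_{t\in[0,T]}K^\epsilon(t,x^\epsilon_t)^p]=O(\epsilon^{-\beta})$ for $p\geq2$, and the reduction in the first paragraph then gives the result for all $p>0$.
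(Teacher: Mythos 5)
Your proposal is correct and follows essentially the same route as the paper's proof: Lemma \ref{K_bound_lemma} with $q=1$, the splitting from Lemma \ref{matrix_exp_decay_bound}, Burkholder--Davis--Gundy plus H\"older plus Proposition \ref{Sup_E_prop} for both pieces, and the $\ell^\infty \le \ell^r$ trick (the paper phrases it as $\ell^\infty \le \ell^{\tilde p}$ applied to the $p$-th powers with $\tilde p = r/p$) to defeat the factor $N\sim 1/\delta$ in the local term. The only material difference is the choice of window width: you take $\delta_\epsilon = \epsilon|\log\epsilon|/\alpha$ so that $e^{-\alpha\delta_\epsilon/\epsilon}=\epsilon$ exactly, whereas the paper takes $\delta = \epsilon^{1-\xi}$ and optimizes over $\xi\in(0,1)$ and $\tilde p$ jointly at the end; both yield the same $O(\epsilon^{-\beta})$ conclusion, with your choice arguably slightly cleaner (it separates the two free parameters: $r$ alone controls the polynomial loss, and the log factor is harmless). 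Your reduction from $p<2$ to $p=2$ via Jensen is also the same move the paper makes (H\"older to pass from $p>2$ to all $p>0$). One cosmetic slip: the local contribution after multiplying by $\epsilon^{-p/2}$ should be $O\big(\epsilon^{-p/r}|\log\epsilon|^{p/2-p/r}\big)$ rather than $|\log\epsilon|^{p/2}$, but since you only use the weaker upper bound this does not affect the argument.
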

\begin{proof}

By Lemma \ref{K_bound_lemma} with $q=1$, there exist $\alpha>0$ and $\kappa>0$ such that for all $\epsilon$ sufficiently small and all $t\in[ 0,T]$, the following bound holds a.s.:
\begin{align}
 K^\epsilon(t,x_t^\epsilon)\leq \frac{\kappa}{\alpha}+\frac{1}{\sqrt{\epsilon}} e^{-\alpha t/\epsilon}\int_0^t e^{\alpha s/\epsilon}(\nabla_{z} K)^\epsilon(s,x_s^\epsilon)\cdot\sigma(s,x^\epsilon_s) dW_s.
\end{align}
We will first prove the proposition under the additional assumption $p>2$.  The general case $p >0$ will follow by an application of H\"older's inequality.
\begin{align}
& E\left[\sup_{t\in[0,T]}K^\epsilon(t,x_t^\epsilon)^p\right]\\
\leq& 2^{p-1}(\kappa/\alpha)^p+ \frac{2^{p-1}}{\epsilon^{p/2}}E\left[\sup_{t\in[0,T]}\left|\int_0^t e^{-\alpha (t-s)/\epsilon}(\nabla_{z} K)^\epsilon(s,x_s^\epsilon)\cdot\sigma(s,x^\epsilon_s) dW_s\right|^p\right].\notag
\end{align}
For any $T\geq\delta>0$, Lemma \ref{matrix_exp_decay_bound} (with $\alpha/\epsilon$ in place of $\alpha$) implies
\begin{align}\label{E_sup_bound1}
&E\left[\sup_{t\in[0,T]}\left|\int_0^t e^{-\alpha (t-s)/\epsilon}(\nabla_{z} K)^\epsilon(s,x_s^\epsilon)\cdot\sigma(s,x^\epsilon_s) dW_s\right|^p\right]\\
\leq& 5^p2^{p-1}\left(e^{-p\alpha\delta/\epsilon}E\left[\sup_{t\in[0,T]}\left|\int_0^t(\nabla_{z} K)^\epsilon(s,x_s^\epsilon)\cdot\sigma(s,x^\epsilon_s) dW_s\right|^p\right]\right.\notag\\
&\left.+E\left[\max_{k=0,...,N-1}\sup_{t\in [k\delta, (k+2)\delta]}\left|\int_{k\delta}^t (\nabla_{z} K)^\epsilon(s,x_s^\epsilon)\cdot\sigma(s,x^\epsilon_s) dW_s\right|^p\right]\right)\notag
\end{align}
where $N=\max\{k\in\mathbb{Z}:k\delta<T\}$. 

The Burkholder-Davis-Gundy inequality, applied to the first term on the right side of the inequality, implies existence of a constant $\tilde C>0$ such that
\begin{align}
&E\left[\sup_{t\in[0,T]}\left|\int_0^t (\nabla_{z} K)^\epsilon(s,x_s^\epsilon)\cdot\sigma(s,x^\epsilon_s) dW_r\right|^p\right]\\
\leq &\tilde C E\left[\left(\int_0^T  \|(\nabla_{z} K)^\epsilon(s,x_s^\epsilon)\cdot\sigma(s,x^\epsilon_s) \|^2dr\right)^{p/2}\right]\notag\\
\leq & \tilde C\|\sigma\|_\infty^p E\left[\left(\int_0^T  \|(\nabla_{z} K)^\epsilon(s,x_s^\epsilon) \|^2dr\right)^{p/2}\right]\notag\\
\leq & \tilde C\|\sigma\|_\infty^p E\left[\left(\int_0^T  (M+ K^\epsilon(s,x_s^\epsilon) )^2dr\right)^{p/2}\right].\notag
\end{align}
In the last line, we used Assumption \ref{assump1}.

We have assumed $p> 2$, so we can use H\"older's inequality with exponents $p/(p-2)$ and $p/2$,  to get
\begin{align}
&E\left[\sup_{t\in[0,T]}\left|\int_0^t (\nabla_{z} K)^\epsilon(s,x_s^\epsilon)\cdot\sigma(s,x^\epsilon_s) dW_r\right|^p\right]\\
\leq & \tilde C\|\sigma\|_\infty^p T^{p/2-1} E\left[\int_0^T  (M+ K^\epsilon(s,x_s^\epsilon) )^pdr\right]\notag\\
\leq & 2^{p-1}\tilde C\|\sigma\|_\infty^p T^{p/2}\left(  M^p+\sup_{s\in[0,T]} E[K^\epsilon(s,x_s^\epsilon) ^p]\right)\notag\\
=&O(1)\notag
\end{align}
as $\epsilon\rightarrow 0^+$ by Proposition \ref{Sup_E_prop}.

We now work on the second term in \req{E_sup_bound1}. Using the fact that the $\ell^\infty$-norm on $\mathbb{R}^N$ is bounded by the $\ell^{\tilde p}$ norm for any $\tilde p\geq 1$, and then applying H\"older's inequality and the  Burkholder-Davis-Gundy inequality, we derive the bound
\begin{align}
&E\left[\max_{k=0,...,N-1}\sup_{t\in [k\delta, (k+2)\delta]}\left|\int_{k\delta}^t (\nabla_{z} K)^\epsilon(s,x_s^\epsilon)\cdot\sigma(s,x^\epsilon_s) dW_r\right|^p\right]\\
\leq &E\left[\left(\sum_{k=0}^{N-1}\sup_{t\in [k\delta, (k+2)\delta]}\left|\int_{k\delta}^t(\nabla_{z} K)^\epsilon(s,x_s^\epsilon)\cdot\sigma(s,x^\epsilon_s)dW_r\right|^{p\tilde p}\right)^{1/\tilde p}\right]\notag\\
\leq &\left(\sum_{k=0}^{N-1}E\left[\sup_{t\in [k\delta, (k+2)\delta]}\left|\int_{k\delta}^t (\nabla_{z} K)^\epsilon(s,x_s^\epsilon)\cdot\sigma(s,x^\epsilon_s)dW_r\right|^{p\tilde p}\right]\right)^{1/\tilde p}\notag\\
\leq &\left(\sum_{k=0}^{N-1}\tilde C E\left[\left(\int_{k\delta}^{(k+2)\delta}\| (\nabla_{z} K)^\epsilon(s,x_s^\epsilon)\cdot\sigma(s,x^\epsilon_s)\|^2dr\right)^{p\tilde p/2}\right]\right)^{1/\tilde p}\notag\\
\leq &\tilde C^{1/\tilde p}\|\sigma\|_\infty^p\left(\sum_{k=0}^{N-1} E\left[\left(\int_{k\delta}^{(k+2)\delta}\|(\nabla_{z} K)^\epsilon(s,x_s^\epsilon)\|^2dr\right)^{p\tilde p/2}\right]\right)^{1/\tilde p}.\notag
\end{align}
Note that for $0\leq k<N$ we have $0\leq (k+2)\delta\leq (N+1)\delta\leq 2T$. So here, the time interval corresponding to $\|\cdot\|_\infty$ can be taken to be $[0,2T]$.

By assumption, $p\tilde p> 2$, so using H\"older's inequality again with exponents  $p\tilde p/(p\tilde p-2)$ and $p\tilde p/2$, along with Assumption \ref{assump1}, we get
\begin{align}
&E\left[\max_{k=0,...,N-1}\sup_{t\in [k\delta, (k+2)\delta]}\left|\int_{k\delta}^t (\nabla_{z} K)^\epsilon(s,x_s^\epsilon)\cdot\sigma(s,x^\epsilon_s) dW_r\right|^p\right]\\
\leq &\tilde C^{1/\tilde p}\|\sigma\|_\infty^p\left(\sum_{k=0}^{N-1}(2\delta)^{p\tilde p/2-1}\int_{k\delta}^{(k+2)\delta}E[\|(\nabla_{z} K)^\epsilon(s,x_s^\epsilon)\|^{p\tilde p}]dr\right)^{1/\tilde p}\notag\\
\leq &\tilde C^{1/\tilde p}\|\sigma\|_\infty^p\left((2\delta)^{p\tilde p/2}N\right)^{1/\tilde p}\sup_{s\in[0,(N+1)\delta]}E[(M+ K^\epsilon(s,x_s^\epsilon))^{p\tilde p}]^{1/\tilde p}.\notag
\end{align}
Using $N<T/\delta$ we obtain
\begin{align}
&E\left[\max_{k=0,...,N-1}\sup_{t\in [k\delta, (k+2)\delta]}\left|\int_{k\delta}^t (\nabla_{z} K)^\epsilon(s,x_s^\epsilon)\cdot\sigma(s,x^\epsilon_s) dW_r\right|^p\right]\\
\leq &2^{p/2}\tilde C^{1/\tilde p}T^{1/\tilde p}\|\sigma\|_\infty^p\delta^{p/2-1/\tilde p}\sup_{s\in[0,2T]}E[(M+K^\epsilon(s,x_s^\epsilon))^{p\tilde p}]^{1/\tilde p}\notag\\
=&\delta^{p/2-1/\tilde p} O(1),\notag
\end{align}
where we used Proposition \ref{Sup_E_prop}.

Combining these results we see that for  $\epsilon>0$ sufficiently small and any $T\geq \delta>0$, $\tilde p\geq 1$ we have
\begin{align}\label{E_sup_bound}
&E\left[\sup_{t\in[0,T]}\left|\int_0^t e^{-\alpha (t-s)/\epsilon}(\nabla_{z} K)^\epsilon(s,x_s^\epsilon)\cdot\sigma(s,x^\epsilon_s) dW_s\right|^p\right]\\
\leq & e^{-p\alpha\delta/\epsilon}O(1)+\delta^{p/2-1/\tilde p} O(1),\notag
\end{align}
where the big-O terms do not depend on $\delta$.

Now let $0<\xi<1$ and choose $\delta=\epsilon^{1-\xi}$.  Then

\begin{align}\label{E_sup_bound2}
& E\left[\sup_{t\in[0,T]}K^\epsilon(t,x_t^\epsilon)^p\right]\leq 2^{p-1}(\kappa/\alpha)^p+ \frac{2^{p-1}}{\epsilon^{p/2}}\left( e^{-p\alpha/\epsilon^{\xi}}O(1)+\epsilon^{(1-\xi)(p/2-1/\tilde p)} O(1)\right)\notag\\
=&2^{p-1}\left((\kappa/\alpha)^p+ \epsilon^{-p/2} e^{-p\alpha/\epsilon^{\xi}}O(1)+\epsilon^{(1-\xi)(p/2-1/\tilde p)-p/2} O(1)\right).
\end{align}
For any $\beta>0$  there exists $\tilde p\geq 1$ and $0<\xi<1$ such that 
\begin{align}
(1-\xi)(p/2-1/\tilde p)-p/2= -\xi{p \over 2} - {1 - \xi \over \tilde{p}} > -\beta.
\end{align}
Hence the term $\epsilon^{(1-\xi)(p/2-1/\tilde p)-p/2} O(1)$ diverges more slowly than $\epsilon^{-\beta}$.
 Also,  $\epsilon^{-p/2} e^{-p\alpha/\epsilon^\xi}=o(1)$ for all $\xi>0$. This proves the result for $p>2$.  The result for all $p>0$ again follows by an application of H\"older's inequality.

\end{proof}

\subsection{Decay of Momentum}
Starting in this section, we will assume that the difference between the canonical momentum and $\psi$ is bounded by the kinetic energy in the following sense:
\begin{assumption}\label{assump3}
We assume that for every $T>0$ there exists $c>0$, $\eta>0$ such that
\begin{align}
K(\epsilon,t,q,z)\geq c\|z\|^{2\eta}
\end{align}
on $(0,\epsilon_0]\times [0,T]\times\mathbb{R}^{2n}$.
\end{assumption}

With the addition of Assumption \ref{assump3}, the bounds on the kinetic energy from Propositions \ref{Sup_E_prop} and \ref{E_sup_prop} yield the following decay rates for the momentum to the submanifold defined by $p=\psi(t,q)$:
\begin{lemma}\label{p_decay_lemma}
Under Assumptions \ref{assump1}-\ref{assump3}, for any $T>0$, $p>0$ we have
\begin{align}
\sup_{t\in[0,T]}E[\|p_t^\epsilon-\psi(t,q_t^\epsilon)\|^{p}]=O(\epsilon^{p/2}) \text{ as }\epsilon\rightarrow 0^+
\end{align}
and for any $p>0$,  $T>0$, $0<\beta<p/2$ we have
\begin{align}
E\left[\sup_{t\in[0,T]}\|p_t^\epsilon-\psi(t,q_t^\epsilon)\|^{p}\right]=O(\epsilon^\beta) \text{ as }\epsilon\rightarrow 0^+.
\end{align}
\end{lemma}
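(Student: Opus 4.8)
The plan is to reduce both estimates to the kinetic-energy bounds already established in Propositions \ref{Sup_E_prop} and \ref{E_sup_prop}, using Assumption \ref{assump3} to convert control of $K^\epsilon$ into control of $p_t^\epsilon-\psi(t,q_t^\epsilon)$. First I would unwind the definition: since $K^\epsilon(t,x_t^\epsilon)=K(\epsilon,t,q_t^\epsilon,(p_t^\epsilon-\psi(t,q_t^\epsilon))/\sqrt{\epsilon})$, Assumption \ref{assump3} gives the $P$-a.s. pointwise inequality
\begin{align}
K^\epsilon(t,x_t^\epsilon)\geq c\,\epsilon^{-\eta}\,\|p_t^\epsilon-\psi(t,q_t^\epsilon)\|^{2\eta}
\end{align}
for $t\in[0,T]$, and hence, rearranging and raising to the power $p/(2\eta)$, there is a constant $C$ (depending only on $c$, $\eta$, $p$) with $\|p_t^\epsilon-\psi(t,q_t^\epsilon)\|^p\leq C\,\epsilon^{p/2}\,K^\epsilon(t,x_t^\epsilon)^{p/(2\eta)}$ for all $t\in[0,T]$, $P$-a.s.

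For the first claim I would take expectations in this inequality and invoke Proposition \ref{Sup_E_prop} with exponent $q=p/(2\eta)>0$, which gives $\sup_{t\in[0,T]}E[K^\epsilon(t,x_t^\epsilon)^{p/(2\eta)}]=O(1)$ and therefore $\sup_{t\in[0,T]}E[\|p_t^\epsilon-\psi(t,q_t^\epsilon)\|^p]=O(\epsilon^{p/2})$. For the second claim I would first take the supremum over $t\in[0,T]$ on both sides of the pointwise inequality and then take expectations, reducing the problem to bounding $E[\sup_{t\in[0,T]}K^\epsilon(t,x_t^\epsilon)^{p/(2\eta)}]$. Given $0<\beta<p/2$, set $\beta'=p/2-\beta>0$ and apply Proposition \ref{E_sup_prop} with its exponent equal to $p/(2\eta)$ and its parameter $\beta$ equal to $\beta'$; this gives $E[\sup_{t\in[0,T]}K^\epsilon(t,x_t^\epsilon)^{p/(2\eta)}]=O(\epsilon^{-\beta'})$, so that $E[\sup_{t\in[0,T]}\|p_t^\epsilon-\psi(t,q_t^\epsilon)\|^p]\leq C\,\epsilon^{p/2}\,O(\epsilon^{-\beta'})=O(\epsilon^{\beta})$.

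I do not expect a genuine obstacle here: all the analytic work lives in the kinetic-energy estimates of this section, and the lemma is essentially a change of variables followed by those two propositions. The only points requiring care are that $p/(2\eta)$ is an admissible exponent — which is fine, since both propositions are stated for arbitrary real $q>0$, not just integers — and keeping the bookkeeping straight between the Lebesgue exponent $p$ appearing in the statement and the exponent $p/(2\eta)$ that is fed into Propositions \ref{Sup_E_prop} and \ref{E_sup_prop}.
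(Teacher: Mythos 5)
Your proof is correct and matches the paper's (commented-out) argument essentially line for line: apply Assumption \ref{assump3} at $z=(p_t^\epsilon-\psi(t,q_t^\epsilon))/\sqrt{\epsilon}$ to get the pointwise bound $\|p_t^\epsilon-\psi(t,q_t^\epsilon)\|^p\leq C\epsilon^{p/2}K^\epsilon(t,x_t^\epsilon)^{p/(2\eta)}$, then invoke Proposition \ref{Sup_E_prop} for the first estimate and Proposition \ref{E_sup_prop} with $\tilde\beta=p/2-\beta$ for the second. No gap; the bookkeeping between the lemma's exponent $p$ and the kinetic-energy exponent $p/(2\eta)$ is handled correctly.
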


Our final momentum decay rate result concerns  a class of integrals with respect to products of the components of $u_t^\epsilon\equiv p_t^\epsilon-\psi(t,q_t^\epsilon)$.
\begin{proposition}\label{p_int_decay}
Let $f:[0,\infty)\times\mathbb{R}^n\rightarrow\mathbb{R}$ be a $C^1$ function, such that for every $T>0$, $f$, $\partial_t f$, and $\nabla_q f$ are bounded on $[0,T]\times\mathbb{R}^n$. Define $u_t^\epsilon=p_t^\epsilon-\psi(t,q_t^\epsilon)$. Under Assumptions \ref{assump1}-\ref{assump3},  for any $p>0$, $T>0$, $i,j=1,...,n$ we have
\begin{align}
E\left[\sup_{t\in[0,T]}\left|\int_0^t f(s,q^\epsilon_s)d ((u^\epsilon_s)_i(u^\epsilon_s)_j)\right|^p\right]=O(\epsilon^{p/2})\text{ as }\epsilon\rightarrow 0^+.
\end{align} 
\end{proposition}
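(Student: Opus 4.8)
The plan is to integrate by parts so that the differential falls on $f(s,q^\epsilon_s)$ rather than being applied to the $O(\epsilon)$ quantity $(u^\epsilon_s)_i(u^\epsilon_s)_j$; expanding $d\big((u^\epsilon_s)_i(u^\epsilon_s)_j\big)$ term by term is useless here, since that differential contains an $O(1)$ piece $\Sigma_{ij}(s,x^\epsilon_s)\,ds$ together with $\epsilon^{-1/2}$ pieces that blow up, and only their combination is small. First I would note that, since $dq^\epsilon_s=\nabla_p H^\epsilon(s,x^\epsilon_s)\,ds=\epsilon^{-1/2}(\nabla_z K)^\epsilon(s,x^\epsilon_s)\,ds$, the paths of $q^\epsilon$ are $C^1$ in $s$, hence of finite variation, so that (as $f$ is $C^1$) the chain rule gives $df(s,q^\epsilon_s)=\big(\partial_s f(s,q^\epsilon_s)+\epsilon^{-1/2}\partial_{q^k}f(s,q^\epsilon_s)(\partial_{z_k}K)^\epsilon(s,x^\epsilon_s)\big)\,ds$. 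Because $s\mapsto f(s,q^\epsilon_s)$ is of finite variation, the It\^o product rule applied to $f(s,q^\epsilon_s)(u^\epsilon_s)_i(u^\epsilon_s)_j$ carries no cross-variation term, and so
\begin{align*}
\int_0^t f(s,q^\epsilon_s)\,d\big((u^\epsilon_s)_i(u^\epsilon_s)_j\big)=&\,f(t,q^\epsilon_t)(u^\epsilon_t)_i(u^\epsilon_t)_j-f(0,q^\epsilon_0)(u^\epsilon_0)_i(u^\epsilon_0)_j\\
&-\int_0^t (u^\epsilon_s)_i(u^\epsilon_s)_j\,\partial_s f(s,q^\epsilon_s)\,ds\\
&-\epsilon^{-1/2}\int_0^t (u^\epsilon_s)_i(u^\epsilon_s)_j\,\partial_{q^k}f(s,q^\epsilon_s)(\partial_{z_k}K)^\epsilon(s,x^\epsilon_s)\,ds.
\end{align*}
It then suffices to bound $E\big[\sup_{t\in[0,T]}|\cdot|^p\big]$ for each of the four terms on the right; I would first treat $p\geq 1$ and recover $0<p<1$ at the end by H\"older's inequality, exactly as in the earlier proofs.

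The first boundary term and the $\partial_s f$ term are each dominated pathwise by a constant multiple of $\sup_{s\in[0,T]}\|u^\epsilon_s\|^2$, using that $f$ and $\partial_t f$ are bounded, so by the second estimate of Lemma \ref{p_decay_lemma} their $p$-th moments are $O(\epsilon^{p/2})$ (indeed $O(\epsilon^\beta)$ for every $\beta<p$). The boundary term at $0$ is even smaller, and in fact deterministically of order $\epsilon^p$: Assumption \ref{assump3} gives $c\,\|u^\epsilon_0/\sqrt{\epsilon}\|^{2\eta}\leq K^\epsilon(0,x^\epsilon_0)$, and by Assumption \ref{assump2} the right-hand side is $\leq C$ for all $\omega$, whence $\|u^\epsilon_0\|^2\leq C'\epsilon$.

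The only delicate term is the one carrying $\epsilon^{-1/2}$. Writing $(u^\epsilon_s)_i(u^\epsilon_s)_j=\epsilon\,(z^\epsilon_s)_i(z^\epsilon_s)_j$ with $z^\epsilon_s=(p^\epsilon_s-\psi(s,q^\epsilon_s))/\sqrt{\epsilon}$, this term equals $\sqrt{\epsilon}\int_0^t (z^\epsilon_s)_i(z^\epsilon_s)_j\,\partial_{q^k}f(s,q^\epsilon_s)(\partial_{z_k}K)^\epsilon(s,x^\epsilon_s)\,ds$, so it is enough to show $E\big[\sup_{t\leq T}\big|\int_0^t(\cdot)\,ds\big|^p\big]=O(1)$. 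Since the integral is with respect to $ds$, we have $\sup_{t\leq T}\big|\int_0^t(\cdot)\,ds\big|\leq\int_0^T|(\cdot)|\,ds$, which moves the supremum inside the integral --- this is precisely why the non-supremum bound of Proposition \ref{Sup_E_prop} is enough here and the weaker Proposition \ref{E_sup_prop} is not needed. By Assumption \ref{assump3}, $\|z^\epsilon_s\|^2\leq\big(K^\epsilon(s,x^\epsilon_s)/c\big)^{1/\eta}$; by Assumption \ref{assump1} (taking $\delta=1$), $\|(\nabla_z K)^\epsilon(s,x^\epsilon_s)\|\leq M+K^\epsilon(s,x^\epsilon_s)$; and $\nabla_q f$ is bounded; so the integrand is bounded in absolute value by $C\big(1+K^\epsilon(s,x^\epsilon_s)^{1+1/\eta}\big)$. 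H\"older's inequality and Fubini then give $E\big[\big(\int_0^T|(\cdot)|\,ds\big)^p\big]\leq T^{p-1}\int_0^T E[|(\cdot)|^p]\,ds\leq C'\big(1+\sup_{s\in[0,T]}E\big[K^\epsilon(s,x^\epsilon_s)^{p(1+1/\eta)}\big]\big)$, which is $O(1)$ by Proposition \ref{Sup_E_prop}. Hence this term too is $O(\epsilon^{p/2})$, and summing the four contributions (via $(\sum_{m=1}^{4}a_m)^p\leq 4^{p-1}\sum_m a_m^p$) finishes the case $p\geq 1$; general $p>0$ follows by H\"older's inequality. The main obstacle I anticipate is the one already flagged: realizing that the naive term-by-term It\^o expansion fails and that one must integrate by parts to keep the $O(\epsilon)$ smallness of $(u^\epsilon_s)_i(u^\epsilon_s)_j$ manifest, after which the surviving $\epsilon^{-1/2}$ term must be fed through Proposition \ref{Sup_E_prop} rather than Proposition \ref{E_sup_prop}, since only the former is order one.
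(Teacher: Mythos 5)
Your proposal is correct and follows essentially the same route as the paper's proof: integrate by parts (valid because $s\mapsto f(s,q_s^\epsilon)$ is a $C^1$, finite-variation process, so the quadratic covariation with $(u^\epsilon)_i(u^\epsilon)_j$ vanishes), then bound the boundary terms via the pathwise estimate $\|u_t^\epsilon\|^2\leq\epsilon(K^\epsilon/c)^{1/\eta}$ together with the expectation-of-supremum kinetic-energy bound, and bound the surviving $ds$-integral by pulling the supremum inside and invoking Proposition \ref{Sup_E_prop}. The only cosmetic differences are that the paper keeps the $\partial_s f$ and $\nabla_q f\cdot\nabla_p H^\epsilon$ pieces in a single integral and routes the boundary terms through Proposition \ref{E_sup_prop} directly (with $\beta=p/2$), whereas you invoke the equivalent Lemma \ref{p_decay_lemma} and treat the $t=0$ term separately; these choices produce the same estimate.
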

\begin{proof}
$f(s,q^\epsilon_s)$ is a $C^1$-semimartingle.  Therefore integration by parts gives
\begin{align}
&\int_0^t f(s,q^\epsilon_s)d ((u^\epsilon_s)_i(u^\epsilon_s)_j)=f(t,q^\epsilon_t)(u^\epsilon_t)_i(u^\epsilon_t)_j-f(0,q^\epsilon_0)(u^\epsilon_0)_i(u^\epsilon_0)_j\\
&-\int_0^t(u^\epsilon_s)_i(u^\epsilon_s)_j(\partial_s f(s,q^\epsilon_s)+\nabla_q f(s,q^\epsilon_s)\cdot\nabla_p H^\epsilon(s,x_s^\epsilon))ds.\notag
\end{align}
Hence, for $p\geq 1$, using Assumption 3 we obtain:
\begin{align}
&E\left[\sup_{t\in[0,T]}\left|\int_0^t f(s,q^\epsilon_s)d ((u^\epsilon_s)_i(u^\epsilon_s)_j)\right|^p\right]\\
\leq& 3^{p-1}\bigg(E\left[\sup_{t\in[0,T]}|f(t,q^\epsilon_t)(u^\epsilon_t)_i(u^\epsilon_t)_j|^p\right]+E[|f(0,q^\epsilon_0)(u^\epsilon_0)_i(u^\epsilon_0)_j|^p]\notag\\
&+E\left[\sup_{t\in[0,T]}\left|\int_0^t(u^\epsilon_s)_i(u^\epsilon_s)_j(\partial_s f(s,q^\epsilon_s)+\nabla_q f(s,q^\epsilon_s)\cdot\nabla_p K^\epsilon(s,x_s^\epsilon))ds\right|^p\right]\bigg)\notag\\
\leq &3^{p-1}\bigg(2\|f\|_\infty^pE\left[\sup_{t\in[0,T]}\epsilon^p(K^\epsilon(t,x_t^\epsilon)/c)^{p/\eta}\right]\notag\\
&+E\left[\left(\int_0^T\|u_s^\epsilon\|^2(\|\partial_s f\|_\infty+\|\nabla_q f\|_\infty\|(\nabla_z K)^\epsilon(s,x_s^\epsilon)/\sqrt{\epsilon}\|)ds\right)^p\right]\bigg).\notag
\end{align}
Now, using Assumption \ref{assump1} and Proposition \ref{E_sup_prop}, for any $\beta>0$ we find
\begin{align}
&E\left[\sup_{t\in[0,T]}\left|\int_0^t f(s,q^\epsilon_s)d ((u^\epsilon_s)^i(u^\epsilon_s)^j)\right|^p\right] \\
\leq &O(\epsilon^{p-\beta}) +3^{p-1}E\bigg[\bigg(\int_0^T\epsilon (K^\epsilon(s,x_s^\epsilon)/c)^{1/\eta}\notag\\
&\hspace{30mm}\times(\|\partial_s f\|_\infty+\epsilon^{-1/2}\|\nabla_q f\|_\infty(M+K^\epsilon(s,x_s^\epsilon)))ds\bigg)^p\bigg].\notag
\end{align}
 H\"older's inequality and Proposition \ref{Sup_E_prop} allow us to bound the second term:
\begin{align}
&E\left[\sup_{t\in[0,T]}\left|\int_0^t f(s,q^\epsilon_s)d ((u^\epsilon_s)_i(u^\epsilon_s)_j)\right|^p\right] \\
\leq &O(\epsilon^{p-\beta}) +3^{p-1}c^{-p/\eta}T^{p-1}\epsilon^p E\bigg[\int_0^T K^\epsilon(s,x_s^\epsilon)^{p/\eta}(\|\partial_s f\|_\infty\notag\\
&\hspace{42mm}+\epsilon^{-1/2}\|\nabla_q f\|_\infty(M+K^\epsilon(s,x_s^\epsilon)))^pds \bigg]\notag\\
\leq &O(\epsilon^{p-\beta}) +3^{2(p-1)}c^{-p/\eta}T^{p}\epsilon^p\sup_{t\in[0,T]}E[ K^\epsilon(s,x_s^\epsilon)^{p/\eta}(\|\partial_s f\|_\infty^p\notag\\
&\hspace{42mm}+\epsilon^{-p/2}\|\nabla_q f\|_\infty^p(M^p+K^\epsilon(s,x_s^\epsilon)^p))]\notag\\
=&O(\epsilon^{p-\beta}) +3^{2(p-1)}c^{-p/\eta}T^{p}\epsilon^pO(\epsilon^{-p/2}).\notag
\end{align}
Taking $\beta=p/2$ gives the result when $p\geq1$.  The result for any  $p>0$ follows from H\"older's inequality.

\end{proof}

\section{Derivation of the Limiting Equation}\label{sec:limit_eq}
In this section, we derive the equation satisfied by  $q^\epsilon_t$ in the limit $\epsilon\rightarrow 0^+$.  The actual convergence proof will be given in the following section. The derivation is an adaptation of the methods used in  \cite{Hottovy2014,particle_manifold_paper}.  We will need the following:
\begin{assumption}\label{assump4}
We assume that $\gamma$ is $C^1$ and is independent of $p$.
\end{assumption}

The starting point for the derivation is a rewriting of Hamilton's equation of motion in terms of the variables $u_t^\epsilon\equiv p_t^\epsilon-\psi(t,q_t^\epsilon)$:
\begin{align}
d(u^\epsilon_t)_i=&-\gamma_{ij}(t,q_t^\epsilon)\partial_{p_j}H^\epsilon(t,x^\epsilon_t)dt+(-\partial_{q^i}H^\epsilon(t,x_t^\epsilon)+F_i(t,x^\epsilon_t))dt\\
&-\partial_t\psi_i(t,q_t^\epsilon)dt-\partial_{q^k}\psi_i(t,q^\epsilon_t)\partial_{p_k}H^\epsilon(t,x_t^\epsilon)dt+\sigma_{ij}(t,x_t^\epsilon)dW^j_t\notag\\
=&-\tilde\gamma_{ik}(t,q_t^\epsilon)\partial_{p_k}K^\epsilon(t,x^\epsilon_t)dt-(\partial_{q^i}K)^\epsilon(t,x_t^\epsilon)dt\label{z_eq}\\
&+(-\partial_t\psi_i(t,q_t^\epsilon)-\partial_{q^i}V(t,q_t^\epsilon)+F_i(t,x^\epsilon_t))dt+\sigma_{i\rho}(t,x_t^\epsilon)dW^\rho_t\notag
\end{align}
where
\begin{align}\label{tilde_gamma_def}
\tilde\gamma_{ik}(t,q)\equiv\gamma_{ik}(t,q) +\partial_{q^k}\psi_i(t,q)-\partial_{q^i}\psi_k(t,q).
\end{align}
The second and third terms in $\tilde \gamma$ together form an antisymmetric matrix, hence the eigenvalue bound for $\gamma$ from Assumption \ref{assump2} implies invertibility of $\tilde\gamma$. See Lemma \ref{eig_bound_lemma1}. We define the components of $\tilde\gamma^{-1}$ such that 
\begin{align}\label{tilde_gamma_inv_def}
(\tilde\gamma^{-1})^{ij}\tilde\gamma_{jk}=\delta^i_k,
\end{align}
and for any $v\in\mathbb{R}^n$ we define $(\tilde\gamma^{-1}v)^i=(\tilde\gamma^{-1})^{ij}v_j$.

This lets us solve for $\nabla_p H^\epsilon(t,x_t^\epsilon)dt$ to get
\begin{align}
d(q_t^\epsilon)^i=&\partial_{p_i} H^\epsilon(t,x^\epsilon_t)dt \\
=&(\tilde\gamma^{-1})^{ij}(t,q_t^\epsilon)(-\partial_t\psi_j(t,q_t^\epsilon)-\partial_{q^j}V(t,q_t^\epsilon)+F_j(t,x^\epsilon_t))dt\notag\\
&-(\tilde\gamma^{-1})^{ij}(t,q_t^\epsilon)(\partial_{q^j}K)^\epsilon(t,x_t^\epsilon)dt+(\tilde\gamma^{-1})^{ij}(t,q_t^\epsilon)\sigma_{j\rho}(t,x_t^\epsilon)dW^\rho_t\notag\\
&-(\tilde\gamma^{-1})^{ij}(t,q_t^\epsilon)d(u^\epsilon_t)_j.\notag
\end{align}
$\tilde\gamma^{-1}(t,q^\epsilon_t)$ is pathwise $C^1$, so integrating the last term by parts results in
\begin{align}
&-(\tilde\gamma^{-1})^{ij}(t,q_t^\epsilon)d(u^\epsilon_t)_j=- d((\tilde\gamma^{-1})^{ij}(t,q_t^\epsilon)(u^\epsilon_t)_j)+(u_t^\epsilon)_j\partial_t(\tilde\gamma^{-1})^{ij}(t,q_t^\epsilon)dt\\
&+(u_t^\epsilon)_j\partial_{q^l}(\tilde\gamma^{-1})^{ij}(t,q_t^\epsilon)\partial_{p_l}H^\epsilon(t,x_t^\epsilon)dt.\notag
\end{align}
Therefore
\begin{align}\label{q_eq1}
d(q_t^\epsilon)^i=&(\tilde\gamma^{-1})^{ij}(t,q_t^\epsilon)(-\partial_t\psi_j(t,q_t^\epsilon)-\partial_{q^j}V(t,q_t^\epsilon)+F_j(t,x^\epsilon_t))dt\\
&-(\tilde\gamma^{-1})^{ij}(t,q_t^\epsilon)(\partial_{q^j}K)^\epsilon(t,x_t^\epsilon)dt+(\tilde\gamma^{-1})^{ij}(t,q_t^\epsilon)\sigma_{j\rho}(t,x_t^\epsilon)dW^\rho_t\notag\\
&- d((\tilde\gamma^{-1})^{ij}(t,q_t^\epsilon)(u^\epsilon_t)_j)+(u_t^\epsilon)_j\partial_t(\tilde\gamma^{-1})^{ij}(t,q_t^\epsilon)dt\notag\\
&+(u_t^\epsilon)_j\partial_{q^l}(\tilde\gamma^{-1})^{ij}(t,q_t^\epsilon)\partial_{p_l}H^\epsilon(t,x_t^\epsilon)dt.\notag
\end{align}

In order to  homogenize $(u_t^\epsilon)_j\partial_{p_l}H^\epsilon(t,x_t^\epsilon)dt$, we make  the additional assumption:
\begin{assumption}\label{assump5}
We assume that $K$ has the form
\begin{align}
K(\epsilon,t,q,z)=\tilde K(\epsilon,t,q,A^{ij}(t,q)z_iz_j)
\end{align}
where $\tilde K(\epsilon,t,q,\zeta)$ is $C^2$ in $(t,q,\zeta)$ for every $\epsilon$, non-negative on $(0,\epsilon_0]\times [0,\infty)\times\mathbb{R}^n\times[0,\infty)$, $A(t,q)$ is a $C^2$ function whose values are symmetric $n \times n$-matrices.   We also assume that for every $T>0$, the eigenvalues of $A$ are bounded above and below by some constants $C>0$ and $c>0$ respectively, uniformly on $[0,T]\times\mathbb{R}^n$.  

We will write $\tilde K^\prime$ for $\partial_\zeta \tilde K$ and will use the abbreviation $\|z\|_A^2$ for $A^{ij}(t,q)z_iz_j$ when the implied values of $t$ and $q$ are apparent from the context.
\end{assumption}

With this assumption, 
\begin{align}\label{dq_K_eq}
(\partial_{q^i}K)^\epsilon(t,x_t^\epsilon)=&\partial_{q^i}\tilde K(\epsilon,t,q_t^\epsilon,\|u_t^\epsilon\|^2_A/\epsilon)\\
&+\tilde K^\prime(\epsilon,t,q_t^\epsilon,\|u_t^\epsilon\|^2_A/\epsilon) \partial_{q^i}A^{kl}(t,q_t^\epsilon)(u_t^\epsilon)_k(u_t^\epsilon)_l/\epsilon\notag
\end{align}
and
\begin{align}\label{z_dH_eq}
&\partial_{p_l}H^\epsilon(t,x_t^\epsilon)=\frac{2}{\epsilon}A^{lk}(t,q_t^\epsilon) \tilde K^\prime(\epsilon,t,q_t^\epsilon,\|u_t^\epsilon\|_A^2/\epsilon)(u_t^\epsilon)_k.
\end{align}
To simplify $(u_t^\epsilon)_j\partial_{p_l}H^\epsilon(t,x_t^\epsilon)dt$, we compute
\begin{align}\label{Scott's_trick}
&d((u_t^\epsilon)_i(u_t^\epsilon)_j)=(u_t^\epsilon)_id(u_t^\epsilon)_j+(u_t^\epsilon)_jd(u_t^\epsilon)_i+d[u^\epsilon_i,u^\epsilon_j]_t\\
=&(-(u_t^\epsilon)_i\tilde\gamma_{jk}(t,q_t^\epsilon)-(u_t^\epsilon)_j\tilde\gamma_{ik}(t,q_t^\epsilon))\frac{2}{\epsilon}\tilde K^\prime(\epsilon,t,q_t^\epsilon,\|u_t^\epsilon\|^2_A/\epsilon)A^{kl}(t,q_t^\epsilon)(u_t^\epsilon)_ldt\notag\\
&-(u_t^\epsilon)_i(\partial_{q^j}K)^\epsilon(t,x_t^\epsilon)dt-(u_t^\epsilon)_j(\partial_{q^i}K)^\epsilon(t,x_t^\epsilon)dt\notag\\
&+(u_t^\epsilon)_i(-\partial_t\psi_j(t,q_t^\epsilon)-\partial_{q^j}V(t,q_t^\epsilon)+F_j(t,x^\epsilon_t))dt\notag\\
&+(u_t^\epsilon)_j(-\partial_t\psi_i(t,q_t^\epsilon)-\partial_{q^i}V(t,q_t^\epsilon)+F_i(t,x^\epsilon_t))dt\notag\\
&+(u_t^\epsilon)_i\sigma_{j\rho}(t,x_t^\epsilon)dW^\rho_t+(u_t^\epsilon)_j\sigma_{i\rho}(t,x_t^\epsilon)dW^\rho_t+\Sigma_{ij}(t,x_t^\epsilon)dt,\notag
\end{align}
where we employed the equation for $u_t^\epsilon$, \req{z_eq}, and  \req{z_dH_eq}. We  isolate the $u$-dependent terms that appear in $(u_t^\epsilon)_j\partial_{p_l}H^\epsilon(t,x_t^\epsilon)dt$ to find
\begin{align}\label{Lyapunov_eq}
&\frac{2}{\epsilon}\tilde K^\prime(\epsilon,t,q_t^\epsilon,\|u_t^\epsilon\|^2_A/\epsilon)(\tilde\gamma_{jk}(t,q_t^\epsilon)A^{kl}(t,q_t^\epsilon)(u_t^\epsilon)_l(u_t^\epsilon)_i\\
&\hspace{35mm}+\tilde\gamma_{ik}(t,q_t^\epsilon)A^{kl}(t,q_t^\epsilon)(u_t^\epsilon)_l(u_t^\epsilon)_j)dt\notag\\
=&-d((u_t^\epsilon)_i(u_t^\epsilon)_j)-(u_t^\epsilon)_i(\partial_{q^j}K)^\epsilon(t,x_t^\epsilon)dt-(u_t^\epsilon)_j(\partial_{q^i}K)^\epsilon(t,x_t^\epsilon)dt\notag\\
&+(u_t^\epsilon)_i(-\partial_t\psi_j(t,q_t^\epsilon)-\partial_{q^j}V(t,q_t^\epsilon)+F_j(t,x^\epsilon_t))dt\notag\\
&+(u_t^\epsilon)_j(-\partial_t\psi_i(t,q_t^\epsilon)-\partial_{q^i}V(t,q_t^\epsilon)+F_i(t,x^\epsilon_t))dt\notag\\
&+(u_t^\epsilon)_i\sigma_{j\rho}(t,x_t^\epsilon)dW^\rho_t+(u_t^\epsilon)_j\sigma_{i\rho}(t,x_t^\epsilon)dW^\rho_t+\Sigma_{ij}(t,x_t^\epsilon)dt.\notag
\end{align}
We will solve this equation for  $\tilde K^\prime(\epsilon,t,q_t^\epsilon,\|u_t^\epsilon\|_A^2/\epsilon)(u_t^\epsilon)_j(u_t^\epsilon)_kdt$ using a Lyapunov equation technique,  as in  \cite{Hottovy2014,particle_manifold_paper}.

The formula for the left-hand side of \req{Lyapunov_eq} clearly represents a differential of a $C^1$-function. Therefore the integral from $0$ to $t$ of the right-hand side, which we denote by $(C_t)_{ij}$, is a $C^1$-function  $P$-a.s.  Differentiating both sides with respect to $t$, we obtain
\begin{align}\label{lyap_eq1}
&\frac{2}{\epsilon}\tilde K^\prime(\epsilon,t,q_t^\epsilon,\|u_t^\epsilon\|^2_A/\epsilon) (A\tilde\gamma )_j^l(t,q^\epsilon_t)(u_t^\epsilon)_l(u_t^\epsilon)_i\\
&+\frac{2}{\epsilon}\tilde K^\prime(\epsilon,t,q_t^\epsilon,\|u_t^\epsilon\|^2_A/\epsilon)(A\tilde\gamma )_i^l(t,q^\epsilon_t)(u_t^\epsilon)_l(u_t^\epsilon)_j =(\dot{C}_t)_{ij},\notag
\end{align}
where we define $(A\tilde\gamma)^i_j=\tilde\gamma_{jk}A^{ki}$.

Defining the matrix 
\begin{equation}
(V_t)_{ij}=\frac{2}{\epsilon}\tilde K^\prime(\epsilon,t,q_t^\epsilon,\|u_t^\epsilon\|^2_A/\epsilon) (u_t^\epsilon)_i(u_t^\epsilon)_j
\end{equation}
we rewrite \req{lyap_eq1} as
\begin{align}\label{lyap_eq_def}
(A\tilde\gamma)_i^l V_{lj}+V_{il}(A\tilde\gamma)^l_j=\dot{C}_{ij}.
\end{align}
This is a Lyapunov equation for $V$.

For every $T>0$, there exists $c>0$ and $\lambda>0$ such that $-A\tilde\gamma$ has  eigenvalues with real parts bounded above by $-c\lambda$, uniformly on $[0,T]\times\mathbb{R}^n$. See Lemma \ref{eig_bound_lemma2}.   Hence, we can solve uniquely for $V$,
\begin{align}\label{Lyap_sol}
V_{ij}=\int_0^\infty (e^{-yA\tilde\gamma})_i^k\dot{C}_{kl} (e^{-yA\tilde\gamma})^l_jdy.
\end{align}
 See, for example, Theorem 6.4.2 in \cite{ortega2013matrix}.

\begin{remark}
 Assumption \ref{assump5}, or something else that accomplishes a similar purpose, is necessary in the above computation.  If one tries to solve for $(\tilde V_t)^i_j\equiv (u_t^\epsilon)_j\partial_{p_i}H^\epsilon(t,x_t^\epsilon)$ directly, then one is led to the linear equation
\begin{align}\label{bad_V_eq}
\tilde{\gamma}_{jk}\tilde{V}^k_i+\tilde{\gamma}_{ik}\tilde{V}^k_j=\dot{C}_{ij}.
\end{align}
The left-hand side of this equation  has a non-trivial kernel, consisting of all $\tilde V$ for which $\tilde{\gamma}_{ik}\tilde{V}^k_j$ is antisymmetric.  Therefore, just knowing that $\tilde V$ satisfies \req{bad_V_eq} does not allow us to uniquely solve for $\tilde V$. Some additional constraint must be combined with \req{bad_V_eq} in order to solve for $\tilde V$.
\end{remark}

  Integrating \req{Lyap_sol} with respect to time, we obtain  
\begin{align}
&\frac{2}{\epsilon}\int_0^t\tilde K^\prime(\epsilon,s,q_s^\epsilon,\|u_s^\epsilon\|^2_A/\epsilon) (u_s^\epsilon)_i(u_s^\epsilon)_jds\\
=&\int_0^t \int_0^\infty \left(e^{-y(A\tilde\gamma)(s,q^\epsilon_s)}\right)^{k}_i  \left(e^{-y(A\tilde\gamma)(s,q^\epsilon_s)}\right)^{l}_j dy (\dot{C}_s)_{kl}ds.\notag
\end{align}
The functions
\begin{equation}\label{G_def}
G_{ij}^{kl}(t,q)=\int_0^\infty (e^{-y (A\tilde\gamma)(t,q)})_i^k (e^{-y(A\tilde\gamma)(t,q)})_j^l dy
\end{equation}
are $C^1$, hence $G_{ij}^{kl}(t,q^\epsilon_t)$ are  semimartingales and
\begin{align}\label{p2_eps_eq}
&\frac{2}{\epsilon}\tilde K^\prime(\epsilon,t,q_t^\epsilon,\|u_t^\epsilon\|^2_A/\epsilon) (u_t^\epsilon)_i (u_t^\epsilon)_jdt=G_{ij}^{ab}(t,q_t^\epsilon) d(C_t)_{ab}\\
=&G_{ij}^{ab}(t,q_t^\epsilon)\Sigma_{ab}(t,x_t^\epsilon)dt-G_{ij}^{ab}(t,q_t^\epsilon)d((u_t^\epsilon)_a(u_t^\epsilon)_b)\notag\\
&-G_{ij}^{ab}(t,q_t^\epsilon)(u_t^\epsilon)_a(\partial_{q^b}K)^\epsilon(t,x_t^\epsilon)dt-G_{ij}^{ab}(t,q_t^\epsilon)(u_t^\epsilon)_b(\partial_{q^a}K)^\epsilon(t,x_t^\epsilon)dt\notag\\
&+G_{ij}^{ab}(t,q_t^\epsilon)(u_t^\epsilon)_a(-\partial_t\psi_b(t,q_t^\epsilon)-\partial_{q^b}V(t,q_t^\epsilon)+F_b(t,x^\epsilon_t))dt\notag\\
&+G_{ij}^{ab}(t,q_t^\epsilon)(u_t^\epsilon)_b(-\partial_t\psi_a(t,q_t^\epsilon)-\partial_{q^a}V(t,q_t^\epsilon)+F_a(t,x^\epsilon_t))dt\notag\\
&+G_{ij}^{ab}(t,q_t^\epsilon)(u_t^\epsilon)_a\sigma_{b\rho}(t,x_t^\epsilon)dW^\rho_t+G_{ij}^{ab}(t,q_t^\epsilon)(u_t^\epsilon)_b\sigma_{a\rho}(t,x_t^\epsilon)dW^\rho_t.\notag
\end{align}

Combining \req{q_eq1} with \req{dq_K_eq}, \req{z_dH_eq}, and \req{p2_eps_eq} we see that $q_t^\epsilon$ satisfies the equation

\begin{align}\label{q_eq2}
d(q_t^\epsilon)^i=&(\tilde\gamma^{-1})^{ij}(t,q_t^\epsilon)(-\partial_t\psi_j(t,q_t^\epsilon)-\partial_{q^j}V(t,q_t^\epsilon)+F_j(t,x^\epsilon_t))dt\\
&+(\tilde\gamma^{-1})^{ij}(t,q_t^\epsilon)\sigma_{j\rho}(t,x_t^\epsilon)dW^\rho_t-(\tilde\gamma^{-1})^{ij}(t,q_t^\epsilon)\partial_{q^j}\tilde K(\epsilon,t,q_t^\epsilon,\|u_t^\epsilon\|^2_A/\epsilon)dt\notag\\
&+Q^{ikl}(t,q_t^\epsilon)J_{kl}(t,x_t^\epsilon)dt+d(R^\epsilon_t)^i,\notag
\end{align}
where
\begin{align}\label{J_def}
J_{ij}(t,x)\equiv G_{ij}^{kl}(t,q)\Sigma_{kl}(t,x),
\end{align}
\begin{align}\label{K_def}
 Q^{ijl}(t,q)\equiv \partial_{q^k}(\tilde\gamma^{-1})^{ij}(t,q) A^{kl}(t,q)-\frac{1}{2}(\tilde\gamma^{-1})^{ik}(t,q)\partial_{q^k} A^{jl}(t,q),
\end{align}
and
\begin{align}\label{R_def}
d(R^\epsilon_t)^i\equiv&- d((\tilde\gamma^{-1})^{ij}(t,q_t^\epsilon)(u^\epsilon_t)_j)+(u_t^\epsilon)_j\partial_t(\tilde\gamma^{-1})^{ij}(t,q_t^\epsilon)dt\\
&-Q^{ikl}(t,q_t^\epsilon)G_{kl}^{ab}(t,q_t^\epsilon)d((u_t^\epsilon)_a(u_t^\epsilon)_b)\notag\\
&+Q^{ikl}(t,q_t^\epsilon)G_{kl}^{ab}(t,q_t^\epsilon)(u_t^\epsilon)_a(-\partial_t\psi_b(t,q_t^\epsilon)-\partial_{q^b}V(t,q_t^\epsilon\notag)\\
&\hspace{48mm}-(\partial_{q^b}K)^\epsilon(t,x_t^\epsilon)+F_b(t,x^\epsilon_t))dt\notag\\
&+Q^{ikl}(t,q_t^\epsilon)G_{kl}^{ab}(t,q_t^\epsilon)(u_t^\epsilon)_b(-\partial_t\psi_a(t,q_t^\epsilon)-\partial_{q^a}V(t,q_t^\epsilon)\notag\\
&\hspace{48mm}-(\partial_{q^a}K)^\epsilon(t,x_t^\epsilon)+F_a(t,x^\epsilon_t))dt\notag\\
&+Q^{ikl}(t,q_t^\epsilon)G_{kl}^{ab}(t,q_t^\epsilon)(u_t^\epsilon)_a\sigma_{b\rho}(t,x_t^\epsilon)dW^\rho_t\notag\\
&+Q^{ikl}(t,q_t^\epsilon)G_{kl}^{ab}(t,q_t^\epsilon)(u_t^\epsilon)_b\sigma_{a\rho}(t,x_t^\epsilon)dW^\rho_t.\notag
\end{align}

Based on our knowledge of the decay rate of $u_t^\epsilon$, we expect $R^\epsilon_t$ to go to zero in the limit $\epsilon\rightarrow 0^+$.  In general, one  would still need to extract the portion of $(\tilde\gamma^{-1})^{ij}(t,q^\epsilon_t)\partial_{q^j}  \tilde K(\epsilon,t,q^\epsilon_t,\|z\|_A^2/\epsilon )dt$ that survives in the limit.  We will address this question in a future work \cite{ChucksVolumePaper}, but in this paper we will assume:
\begin{assumption}\label{assump6}
$\tilde K=\tilde K(\epsilon,t,z)$ i.e.  $\tilde K$ is independent of $q$, and hence 
\begin{align}
(\tilde\gamma^{-1})^{ij}(t,q^\epsilon_t)\partial_{q^j}  \tilde K(\epsilon,t,q^\epsilon_t,\|z\|_A^2/\epsilon )dt=0.
\end{align}
\end{assumption}

Along with Lemma \ref{p_decay_lemma}, the above calculations motivate the proposed limiting equation
\begin{align}\label{limit_eq1}
dq_t^i=&(\tilde\gamma^{-1})^{ij}(t,q_t)(-\partial_t\psi_j(t,q_t)-\partial_{q^j}V(t,q_t)+F_j(t,q_t,\psi(t,q_t)))dt\\
&+Q^{ikl}(t,q_t)J_{kl}(t,q_t,\psi(t,q_t))dt+(\tilde\gamma^{-1})^{ij}(t,q_t)\sigma_{j\rho}(t,q_t,\psi(t,q_t))dW^\rho_t.\notag
\end{align}
Note that an additional {\em noise induced drift} term, 
\begin{equation}\label{noise_induced_drift}
S^i(t,q)\equiv Q^{ijl}(t,q)J_{jl}(t,q,\psi(t,q)),
\end{equation}
 arises in the limit when $\Sigma$ is nonzero and (generally) when $\tilde\gamma$ and/or $A$ have nontrivial $q$-dependence.  This is in addition to the  forcing term, $-\partial_t\psi-\nabla_q V+F$, and is another manifestation of the phenomenon derived in \cite{Hottovy2014,particle_manifold_paper}.
\begin{remark}
Assumption \ref{assump6} determines the splitting of the Hamiltonian into $K^\epsilon(t,x)$ and $V(t,q)$, up to a function of time i.e. if $H^\epsilon=K^\epsilon_1+V_1=K^\epsilon_2+V_2$ are two splittings then $V_1(t,q)=V_2(t,q)+c(t)$.  This ambiguity does not impact the limiting equation \req{limit_eq1}, and so the limiting equation is uniquely defined by the original SDE, \req{Hamiltonian_SDE_q}-\req{Hamiltonian_SDE_p}, as it has to be, of course.
\end{remark}

\section{Convergence Proof}\label{sec:conv_proof}
In this final section, we prove convergence of $q_t^\epsilon$ to the solution of the proposed limiting equation, \req{limit_eq1}.  This will be accomplished by using the following lemma.

\begin{lemma}\label{conv_lemma}
Let $T>0$ and suppose we have continuous functions $\tilde F(t,x):[0,\infty)\times\mathbb{R}^{n}\times\mathbb{R}^n\rightarrow\mathbb{R}^n$, $\tilde \sigma(t,x):[0,\infty)\times\mathbb{R}^{n}\times\mathbb{R}^n\rightarrow\mathbb{R}^{n\times k}$, and  $\psi:[0,\infty)\times\mathbb{R}^n\rightarrow\mathbb{R}^n$ that are Lipschitz in $x$, uniformly in $t\in[0,T]$.

Let $W_t$ be a $k$-dimensional Wiener process, $p\geq 2$ and $\delta>0$ and suppose that we have continuous semimartingales $q_t$ and, for each $0<\epsilon\leq\epsilon_0$, $\tilde R_t^\epsilon$, $x_t^\epsilon=(q_t^\epsilon,p_t^\epsilon)$ that satisfy the following properties:
\begin{enumerate}
\item $q_t^\epsilon=q_0^\epsilon+\int_0^t\tilde F(s,x_s^\epsilon)ds+\int_0^t\tilde\sigma(s,x_s^\epsilon)dW_s+\tilde R^\epsilon_t$\label{q_eps_eq_assump}
\item $q_t=q_0+\int_0^t\tilde F(s,q_s,\psi(s,q_s))ds+\int_0^t\tilde\sigma(s,q_s^\epsilon,\psi(s,q_s))dW_s$\label{q_eq_assump}
\item $E[\|q_0^\epsilon-q_0\|^p]=O(\epsilon^\delta)\text{ as }\epsilon\rightarrow 0^+$. \label{ics_assump}
\item $E\left[\sup_{t\in[0,T]}\|\tilde R_t^\epsilon\|^p\right]=O(\epsilon^\delta)\text{ as }\epsilon\rightarrow 0^+$.\label{R_decay_assump}
\item $\sup_{t\in[0,T]}E[\|p_t^\epsilon-\psi(t,q_t^\epsilon)\|^p]=O(\epsilon^\delta)\text{ as }\epsilon\rightarrow 0^+$.\label{p_decay_assump}
\item $E\left[\sup_{t\in[0,T]}\|q_t^\epsilon\|^p\right]<\infty$ for all $\epsilon>0$ sufficiently small.\label{q_eps_integrable_assump}
\item $E\left[\sup_{t\in[0,T]}\|q_t\|^p\right]<\infty$\label{q_integrable_assump}

\end{enumerate}

Then 
\begin{align}
E\left[\sup_{t\in[0,T]}\|q_t^\epsilon-q_t\|^p\right]=O(\epsilon^\delta)\text{ as }\epsilon\rightarrow 0^+.
\end{align}

If we  replace properties $4$, $6$, $7$ with
\begin{enumerate}
\item[4*] $\sup_{t\in[0,T]}E\left[\|\tilde R_t^\epsilon\|^p\right]=O(\epsilon^\delta)\text{ as }\epsilon\rightarrow 0^+$,\label{R_decay_assump2}
\item[6*]  $\sup_{t\in[0,T]}E\left[\|q_t^\epsilon\|^p\right]<\infty$ for all $\epsilon>0$ sufficiently small,\label{q_eps_integrable_assump2}
\item[7*] $\sup_{t\in[0,T]}E\left[\|q_t\|^p\right]<\infty$,\label{q_integrable_assump2}
\end{enumerate}
then we instead arrive at
\begin{align}\label{sup_E_q}
\sup_{t\in[0,T]}E\left[\|q_t^\epsilon-q_t\|^p\right]=O(\epsilon^\delta)\text{ as }\epsilon\rightarrow 0^+.
\end{align}

\end{lemma}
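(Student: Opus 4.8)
The plan is to subtract the two defining equations in properties \ref{q_eps_eq_assump} and \ref{q_eq_assump} and run a Gronwall estimate at the level of $p$-th moments, as in the convergence proofs of \cite{Hottovy2014,particle_manifold_paper}. Writing $e_t^\epsilon \equiv q_t^\epsilon - q_t$, we get
\[
e_t^\epsilon = (q_0^\epsilon - q_0) + \int_0^t \Delta F_s\, ds + \int_0^t \Delta\sigma_s\, dW_s + \tilde R_t^\epsilon ,
\]
where $\Delta F_s = \tilde F(s,x_s^\epsilon) - \tilde F(s,q_s,\psi(s,q_s))$ and $\Delta\sigma_s = \tilde\sigma(s,x_s^\epsilon) - \tilde\sigma(s,q_s,\psi(s,q_s))$. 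The key estimate is to bound $\Delta F_s$ and $\Delta\sigma_s$ by inserting the intermediate point $(q_s^\epsilon,\psi(s,q_s^\epsilon))$: by the uniform Lipschitz hypothesis in $x=(q,p)$ the ``momentum part'' $\|\tilde F(s,q_s^\epsilon,p_s^\epsilon) - \tilde F(s,q_s^\epsilon,\psi(s,q_s^\epsilon))\|$ is bounded by $L\|p_s^\epsilon-\psi(s,q_s^\epsilon)\|$, while, since $q\mapsto \tilde F(s,q,\psi(s,q))$ is a composition of maps Lipschitz uniformly in $s\in[0,T]$, the ``position part'' $\|\tilde F(s,q_s^\epsilon,\psi(s,q_s^\epsilon)) - \tilde F(s,q_s,\psi(s,q_s))\|$ is bounded by $L'\|e_s^\epsilon\|$; the same splitting applies to $\Delta\sigma_s$. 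Thus $\|\Delta F_s\| + \|\Delta\sigma_s\| \le C\big(\|p_s^\epsilon-\psi(s,q_s^\epsilon)\| + \|e_s^\epsilon\|\big)$.

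For the first conclusion I would then take $\sup_{s\le t}$, apply $E$, use $|a+b+c+d|^p \le 4^{p-1}(|a|^p+|b|^p+|c|^p+|d|^p)$, H\"older's inequality on the $ds$-integral, and the Burkholder--Davis--Gundy inequality (Theorem 3.28 in \cite{karatzas2014brownian}, as already used in Section \ref{sec:K_bounds}) on the stochastic integral, together with the pointwise bound on $\Delta F$, $\Delta\sigma$ above and $\int_0^t E[\|e_r^\epsilon\|^p]\,dr \le \int_0^t E[\sup_{r\le s}\|e_r^\epsilon\|^p]\,ds$. This yields
\[
E\Big[\sup_{s\le t}\|e_s^\epsilon\|^p\Big] \le A_\epsilon + C\int_0^t E\Big[\sup_{r\le s}\|e_r^\epsilon\|^p\Big]\, ds ,
\]
with $C=C(p,T,L,L')$ and $A_\epsilon = C\big(E[\|q_0^\epsilon-q_0\|^p] + E[\sup_{[0,T]}\|\tilde R^\epsilon\|^p] + T\sup_{[0,T]}E[\|p_s^\epsilon-\psi(s,q_s^\epsilon)\|^p]\big)$. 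By properties \ref{ics_assump}, \ref{R_decay_assump}, \ref{p_decay_assump} we have $A_\epsilon = O(\epsilon^\delta)$, and by properties \ref{q_eps_integrable_assump}, \ref{q_integrable_assump} with $\|a-b\|^p\le 2^{p-1}(\|a\|^p+\|b\|^p)$ the left side is finite for every $t\le T$; Gronwall's inequality then gives $E[\sup_{t\le T}\|e_t^\epsilon\|^p] \le A_\epsilon e^{CT} = O(\epsilon^\delta)$.

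For the second conclusion the argument is identical with every $\sup_{s\le t}$ removed: setting $h_\epsilon(t) = E[\|e_t^\epsilon\|^p]$, BDG still applies pointwise in $t$ and, followed by H\"older, produces $h_\epsilon(t) \le A_\epsilon' + C\int_0^t h_\epsilon(s)\,ds$ with $A_\epsilon' = O(\epsilon^\delta)$ now coming from \ref{ics_assump}, 4*, \ref{p_decay_assump}, while properties 6* and 7* guarantee that $h_\epsilon$ is finite (indeed bounded) on $[0,T]$, so Gronwall gives $\sup_{t\le T} h_\epsilon(t) = O(\epsilon^\delta)$. I expect the only genuine subtlety to be precisely this a priori finiteness requirement of Gronwall's inequality --- which is exactly why hypotheses \ref{q_eps_integrable_assump}, \ref{q_integrable_assump} (resp. 6*, 7*) are imposed --- together with getting the two-step Lipschitz split right so that the momentum defect $p_s^\epsilon-\psi(s,q_s^\epsilon)$ enters only as a separate inhomogeneous term, which is then controlled by Lemma \ref{p_decay_lemma}; the rest is routine moment bookkeeping.
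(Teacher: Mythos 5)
Your proposal is correct and follows essentially the same route as the paper: subtract the two SDEs, apply the $4^{p-1}$ inequality and then H\"older and Burkholder--Davis--Gundy, use the uniform Lipschitz hypotheses to reduce the integrand to $\|q_s^\epsilon-q_s\|^p$ plus the momentum defect $\|p_s^\epsilon-\psi(s,q_s^\epsilon)\|^p$, and close with Gronwall, with properties \ref{q_eps_integrable_assump}, \ref{q_integrable_assump} (or 6*, 7*) supplying the a priori finiteness Gronwall requires. The only cosmetic difference is that you insert the intermediate point $(q_s^\epsilon,\psi(s,q_s^\epsilon))$ before invoking Lipschitz continuity, whereas the paper first applies the Lipschitz bound in $x$ and then inserts $\psi(s,q_s^\epsilon)$ into the resulting $\|p_s^\epsilon-\psi(s,q_s)\|$ term; the two orderings yield the same estimate.
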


\begin{proof}
The equations for $q_t^\epsilon$ and $q_t$ imply
\begin{align}
&(q_t^\epsilon)^i-(q_t)^i=(q_0^\epsilon)^i-(q_0)^i+\int_0^t\tilde F^i(s,x^\epsilon_s)-\tilde F^i(s,q_s,\psi(s,q_s))ds\\
&+\int_0^t\left[\tilde\sigma^i_{\rho}(s,x^\epsilon_s)-\tilde\sigma^i_\rho(s,q_s,\psi(s,q_s))\right] dW^\rho_s+(\tilde R^\epsilon_t)^i.\notag
\end{align}

 Using properties \ref{ics_assump} and \ref{R_decay_assump}, along with the H\"older and Burkholder-Davis-Gundy inequalities, for $0\leq t\leq T$ we have
\begin{align}
&E\left[\sup_{s\in[0,t]}\|q_s^\epsilon-q_s\|^p\right]\\
\leq& 4^{p-1}\bigg( E[\|q_0^\epsilon -q_0\|^p]+E\left[\left(\int_0^t\|\tilde F(s,x^\epsilon_s)-\tilde F (s,q_s,\psi(s,q_s))\|ds\right)^p\right]\notag\\
&+E\left[\sup_{s\in[0,t]}\left\|\int_0^s\tilde\sigma^i_\rho(r,x^\epsilon_r)-\tilde\sigma^i_{\rho}(r,q_r,\psi(r,q_r)) dW^\rho_r\right\|^p\right]+E\left[\sup_{s\in[0,t]}\|\tilde R^\epsilon_s\|^p\right]\bigg)\notag\\
\leq& 4^{p-1}\bigg(T^{p-1}E\left[\int_0^t\|\tilde F(s,x^\epsilon_s)-\tilde F (s,q_s,\psi(s,q_s))\|^pds\right]\notag\\
&+\tilde CE\left[\left(\int_0^t\|\tilde\sigma(r,x^\epsilon_r)-\tilde\sigma(r,q_r,\psi(r,q_r))\|_F^2dr\right)^{p/2}\right]\bigg)+O(\epsilon^\delta)\notag\\
\leq & 4^{p-1} \bigg( T^{p-1}\int_0^tE[\|\tilde F(s,x^\epsilon_s)-\tilde F (s,q_s,\psi(s,q_s))\|^p]ds\notag\\
&+\tilde CT^{p/2-1}\int_0^tE[\|\tilde\sigma(r,x^\epsilon_r)-\tilde\sigma(r,q_r,\psi(s,q_s))\|_F^p]dr\bigg)+O(\epsilon^\delta).\notag
\end{align}
By assumption, $\tilde\sigma$, $\tilde F$, and $\psi$ are Lipschitz in $x$, uniformly on $[0,T]$.  Hence, using property \ref{p_decay_assump},
\begin{align}\label{delta_q_bound}
&E\left[\sup_{s\in[0,t]}\|q_s^\epsilon-q_s\|^p\right]\leq \tilde C \int_0^tE[\|q^\epsilon_s-q_s\|^p+\|p_s^\epsilon-\psi(s,q_s)\|^p]ds+O(\epsilon^\delta)\\
\leq& \tilde C \int_0^tE[\|q^\epsilon_s-q_s\|^p+\|p_s^\epsilon-\psi(s,q^\epsilon_s)\|^p+\|\psi(s,q^\epsilon_s)-\psi(s,q_s)\|^p]ds+O(\epsilon^\delta)\notag\\
\leq& \tilde C \int_0^tE[\|q^\epsilon_s-q_s\|^p]ds+\tilde C\sup_{t\in[0,T]}E[\|p_t^\epsilon-\psi(t,q^\epsilon_t)\|^p]+O(\epsilon^\delta)\notag\\
=& \tilde C \int_0^tE\left[\sup_{r\in[0,s]}\|q^\epsilon_r-q_r\|^p\right]ds+O(\epsilon^\delta)\notag
\end{align}
for all $0\leq t\leq T$, where the constants change from line to line and are all independent of $t$.

Properties \ref{q_eps_integrable_assump} and \ref{q_integrable_assump} imply that $E\left[\sup_{s\in[0,t]}\|q_s^\epsilon-q_s\|^p\right]\in L^1([0,T])$ for $\epsilon$ sufficiently small, and hence Gronwall's inequality applied to \req{delta_q_bound} gives
\begin{align}
E\left[\sup_{s\in[0,t]}\|q_s^\epsilon-q_s\|^p\right]\leq O(\epsilon^\delta) e^{\tilde C t}
\end{align}
for $0\leq t\leq T$.

The proof of \req{sup_E_q} under assumptions 1-3, 4*, 5, 6*, and 7* is almost identical.
\end{proof}

To prove that the hypotheses of  Lemma \ref{conv_lemma} hold, we will need the following assumption:
\begin{assumption}\label{assump7}
We assume that, for every $T>0$, $\nabla_q V$, $F$, and $\sigma$ are Lipschitz in $x$ uniformly in $t\in[0,T]$.  We also assume that $A$ and $\gamma$ are $C^2$, $\psi$ is $C^3$, and $\partial_t\psi$, $\partial_{q^i}\psi$, $\partial_{q^i}\partial_{q^j}\psi$, $\partial_t\partial_{q^i}\psi$, $\partial_t\partial_{q^j}\partial_{q^i}\psi$, $\partial_{q^l}\partial_{q^j}\partial_{q^i}\psi$, $\partial_t\gamma$, $\partial_{q^i} \gamma$, $\partial_t\partial_{q^j}\gamma$,  $\partial_{q^i}\partial_{q^j}\gamma$, $\partial_t A$, $\partial_{q^i} A$, $\partial_t \partial_{q^i}A$,  and $\partial_{q^i}\partial_{q^j} A$ are bounded on $[0,T]\times\mathbb{R}^{n}$ for every $T>0$.
\end{assumption}
Note that, combined with our prior assumptions, this implies $\tilde\gamma$, $\tilde\gamma^{-1}$, $\partial_t\tilde\gamma^{-1}$, $\partial_{q^i}\tilde\gamma^{-1}$, $\partial_t\partial_{q^j}\tilde\gamma^{-1}$, and  $\partial_{q^i}\partial_{q^j}\tilde\gamma^{-1}$ are bounded on compact $t$ intervals. Additionally, using the formula for the derivative of the matrix exponential found in \cite{exp_deriv}, one can prove that our assumptions also  imply that the $G^{ij}_{kl}$'s are bounded and Lipschitz in $q$, uniformly on compact $t$ intervals.

As a step towards using Lemma \ref{conv_lemma} to prove our convergence result, we now show that $R^\epsilon_t$ from \req{R_def} converges to zero in the appropriate sense.

\begin{lemma}\label{R_decay_lemma}
Under Assumptions \ref{assump1}-\ref{assump5}, \ref{assump7}, for any $p>0$, $T>0$, $0<\beta<p/2$ we have
\begin{align}
E\left[\sup_{t\in[0,T]}\|R_t^\epsilon\|^p\right]=O(\epsilon^\beta) \text{ as } \epsilon \rightarrow 0^+
\end{align}
and 
\begin{align}\label{sup_E_R}
\sup_{t\in[0,T]}E\left[\|R_t^\epsilon\|^p\right]=O(\epsilon^{p/2}) \text{ as } \epsilon \rightarrow 0^+,
\end{align}
where $R_t^\epsilon$ was defined in \req{R_def}.
\end{lemma}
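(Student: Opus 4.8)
The plan is to integrate \req{R_def} from $0$ to $t$ and estimate each resulting term separately, establishing both bounds first for $p\ge 2$ and then deducing the case $0<p<2$ by H\"older's inequality, as is done repeatedly above. The starting point is that Assumptions \ref{assump1}--\ref{assump5} and \ref{assump7}, together with the remark following Assumption \ref{assump7}, make every coefficient function occurring in \req{R_def} — namely $\tilde\gamma^{-1}$, $\partial_t\tilde\gamma^{-1}$, $\partial_{q^i}\tilde\gamma^{-1}$, $Q^{ikl}$, $G^{ab}_{kl}$, and the products $Q^{ikl}G^{ab}_{kl}$ — bounded on $[0,T]\times\mathbb{R}^n$, with the products in addition $C^1$ with bounded $t$- and $q$-derivatives. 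I also record the $P$-a.s.\ bound $\|u^\epsilon_0\|\le C\sqrt\epsilon$, which follows from $K^\epsilon(0,x^\epsilon_0)\le C$ (Assumption \ref{assump2}) and $K^\epsilon(0,x^\epsilon_0)\ge c\|u^\epsilon_0/\sqrt\epsilon\|^{2\eta}$ (Assumption \ref{assump3}).

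The terms of $R^\epsilon_t$ split into five groups. \emph{(i) The boundary term}: integrating $-d((\tilde\gamma^{-1})^{ij}(t,q^\epsilon_t)(u^\epsilon_t)_j)$ gives $-(\tilde\gamma^{-1})^{ij}(t,q^\epsilon_t)(u^\epsilon_t)_j+(\tilde\gamma^{-1})^{ij}(0,q^\epsilon_0)(u^\epsilon_0)_j$, whose supremum over $t\in[0,T]$ is at most $C(\sup_{t\in[0,T]}\|u^\epsilon_t\|+\|u^\epsilon_0\|)$; raising to the $p$-th power and using the second statement of Lemma \ref{p_decay_lemma} together with $\|u^\epsilon_0\|^p=O(\epsilon^{p/2})$ gives an $O(\epsilon^\beta)$ bound for $E[\sup_{[0,T]}\|\cdot\|^p]$, while the first statement of Lemma \ref{p_decay_lemma} gives the sharper $O(\epsilon^{p/2})$ bound for $\sup_{[0,T]}E[\|\cdot\|^p]$. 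This is the only place one is forced to use $E[\sup_t\|u^\epsilon_t\|^p]$ rather than $\sup_t E[\|u^\epsilon_t\|^p]$, and hence the only source of the weaker exponent $\beta<p/2$ in the first assertion. \emph{(ii) The bounded-drift terms}: namely $\int_0^t(u^\epsilon_s)_j\partial_s(\tilde\gamma^{-1})^{ij}\,ds$ and the portions of $Q^{ikl}G^{ab}_{kl}(u^\epsilon)_{a}(-\partial_s\psi_b-\partial_{q^b}V+F_b)\,ds$ (and its $a\leftrightarrow b$ partner) not involving $\nabla_qK$. Pathwise, the supremum over $t\in[0,T]$ of the $p$-th power is bounded by $CT^{p-1}\int_0^T\|u^\epsilon_s\|^p\,ds$ via H\"older's inequality, and Fubini's theorem and the first statement of Lemma \ref{p_decay_lemma} then yield $O(\epsilon^{p/2})$ for both norms.

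\emph{(iii) The quadratic-differential term} $-Q^{ikl}(t,q^\epsilon_t)G^{ab}_{kl}(t,q^\epsilon_t)\,d((u^\epsilon_t)_a(u^\epsilon_t)_b)$ is treated by Proposition \ref{p_int_decay} applied with $f=-Q^{ikl}G^{ab}_{kl}$, which satisfies the hypotheses of that proposition by the boundedness statements above; this gives $E[\sup_{[0,T]}|\cdot|^p]=O(\epsilon^{p/2})$, hence a fortiori $\sup_{[0,T]}E[|\cdot|^p]=O(\epsilon^{p/2})$. This term genuinely requires the integration-by-parts argument of Proposition \ref{p_int_decay}, because $d((u^\epsilon)_a(u^\epsilon)_b)$ carries both a martingale part and the It\^o correction $\Sigma\,dt$, neither of which vanishes termwise; I expect this to be the conceptual crux, although it is already resolved by the earlier proposition. \emph{(iv) The $\nabla_qK$-drift terms}: the pieces of $R^\epsilon_t$ containing $(\partial_{q^b}K)^\epsilon(s,x^\epsilon_s)$, multiplied by the bounded factor $Q^{ikl}G^{ab}_{kl}(u^\epsilon)_{a}$. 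By Assumption \ref{assump1} one has $\|(\nabla_qK)^\epsilon\|\le M+CK^\epsilon$, and by Assumption \ref{assump3} one has $\|u^\epsilon_s\|\le\sqrt\epsilon\,(K^\epsilon(s,x^\epsilon_s)/c)^{1/(2\eta)}$, so the integrand is bounded in absolute value by $C\sqrt\epsilon\,(K^\epsilon_s)^{1/(2\eta)}(M+CK^\epsilon_s)$; raising $\int_0^T$ of this to the $p$-th power (H\"older), taking expectations, using Fubini, and invoking Proposition \ref{Sup_E_prop} (all moments of $K^\epsilon_s$ are $O(1)$ uniformly in $s\in[0,T]$) produces $O(\epsilon^{p/2})$ for both norms. \emph{(v) The stochastic-integral terms} $Q^{ikl}G^{ab}_{kl}(u^\epsilon)_{a}\sigma_{b\rho}\,dW^\rho$ (and the $a\leftrightarrow b$ partner): the Burkholder-Davis-Gundy inequality, together with boundedness of $QG\sigma$, bounds $E[\sup_{t\in[0,T]}|\int_0^t(\cdots)\,dW_s|^p]$ by $\tilde C\,E[(\int_0^T\|u^\epsilon_s\|^2\,ds)^{p/2}]$, and then H\"older's inequality in $s$ (valid since $p\ge 2$), Fubini, and the first statement of Lemma \ref{p_decay_lemma} give $O(\epsilon^{p/2})$; applying Burkholder-Davis-Gundy on $[0,t]$ in the same fashion gives the corresponding bound for $\sup_{[0,T]}E[|\cdot|^p]$.

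Finally I would sum the finitely many contributions (i)--(v): each is $O(\epsilon^{p/2})$ except the boundary term (i), which contributes $O(\epsilon^\beta)$ to $E[\sup_{[0,T]}\|R^\epsilon_t\|^p]$ and $O(\epsilon^{p/2})$ to $\sup_{[0,T]}E[\|R^\epsilon_t\|^p]$. Since $\epsilon^{p/2}=O(\epsilon^\beta)$ on $(0,\epsilon_0]$ whenever $\beta<p/2$, this yields $E[\sup_{t\in[0,T]}\|R^\epsilon_t\|^p]=O(\epsilon^\beta)$ and $\sup_{t\in[0,T]}E[\|R^\epsilon_t\|^p]=O(\epsilon^{p/2})$ for all $p\ge 2$. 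The case $0<p<2$ follows from the case $p=2$: by H\"older's inequality $E[\sup_{[0,T]}\|R^\epsilon_t\|^p]\le(E[\sup_{[0,T]}\|R^\epsilon_t\|^2])^{p/2}$, and since the $p=2$ result gives $E[\sup_{[0,T]}\|R^\epsilon_t\|^2]=O(\epsilon^{\beta'})$ for every $\beta'<1$, choosing $\beta'\in[2\beta/p,1)$ (possible because $\beta<p/2$) gives $O(\epsilon^\beta)$; the same argument applied to $\sup_{[0,T]}E[\,\cdot\,]$ gives \req{sup_E_R}.
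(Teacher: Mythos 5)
Your proposal is correct and follows essentially the same route as the paper: decompose $R^\epsilon_t$ term by term, control boundary terms via Lemma \ref{p_decay_lemma}, handle the quadratic-differential term via Proposition \ref{p_int_decay}, and use the Burkholder-Davis-Gundy inequality plus H\"older for the stochastic integrals, all for $p$ large and then H\"older again for small $p$. The only cosmetic differences are that you isolate the $\nabla_qK$ pieces via the pathwise bound $\|u^\epsilon_s\|\le\sqrt{\epsilon}\,(K^\epsilon_s/c)^{1/(2\eta)}$ rather than through Lemma \ref{p_decay_lemma} (which is derived from that same bound), and you spell out the $\sup_t E$ variant and the small-$p$ H\"older step that the paper leaves as ``nearly identical.''
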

\begin{proof}
Let us first assume that $p>2$.  Let  $0<\beta<p/2$. Define
\begin{align}
Y(t,x)=-\partial_t\psi(t,q)-\nabla_qV(t,q)+F(t,x).
\end{align}
Our assumptions imply that $Y$ is bounded on $[0,T]\times\mathbb{R}^{2n}$. 

 From \req{R_def},
\begin{align}
&E\left[\sup_{t\in[0,T]}\|R^\epsilon_t\|^p\right]\leq 8^{p-1}\bigg(E\left[\sup_{t\in[0,T]}\|(\tilde\gamma^{-1})^{ij}(t,q^\epsilon_t)({u}^\epsilon_t)_j\|^p\right]\\
&+E[\|(\tilde\gamma^{-1})^{ij}(0,q^\epsilon_0)({u}^\epsilon_0)_j\|^p]+E\left[\left(\int_0^T\|(u^\epsilon_s)_j \partial_s(\tilde\gamma^{-1})^{ij}(s,q^\epsilon_s)\|ds\right)^p\right]\notag\\
&+E\left[\sup_{t\in[0,T]}\left\|\int_0^tQ^{ijl}(s,q_s^\epsilon)G_{jl}^{ab}(s,q_s^\epsilon)d((u_s^\epsilon)_a(u_s^\epsilon)_b)\right\|^p\right]\notag\\
&+E\left[\left(\int_0^T\|Q^{ijl}(s,q_s^\epsilon)G_{jl}^{ab}(s,q_s^\epsilon)(u_s^\epsilon)_a(-(\partial_{q^b}K)^\epsilon(s,x_s^\epsilon)+Y_b(s,x^\epsilon_s))\|ds\right)^p\right]\notag\\
&+E\left[\left(\int_0^T\|Q^{ijl}(s,q_s^\epsilon)G_{jl}^{ab}(s,q_s^\epsilon)(u_s^\epsilon)_b(-(\partial_{q^a}K)^\epsilon(s,x_s^\epsilon)+Y_a(s,x^\epsilon_s))\|ds\right)^p\right]\notag\\
&+E\left[\sup_{t\in[0,T]}\left\|\int_0^tQ^{ijl}(s,q_s^\epsilon)G_{jl}^{ab}(s,q_s^\epsilon)(u_s^\epsilon)_a\sigma_{b\rho}(s,x_s^\epsilon) dW^\rho_s\right\|^p\right]\notag\\
&+E\left[\sup_{t\in[0,T]}\left\|\int_0^tQ^{ijl}(s,q_s^\epsilon)G_{jl}^{ab}(s,q_s^\epsilon)(u_s^\epsilon)_b\sigma_{a\rho}(s,x_s^\epsilon) dW^\rho_s\right\|^p\right]\bigg),\notag
\end{align}
where the norm is the 2-norm of vectors, with components indexed by $i$, resulting from summation over other, repeated indices.  We now show that all of these terms are $O(\epsilon^\beta)$.

Boundedness of $\tilde\gamma^{-1}$ together with Lemma \ref{p_decay_lemma}  implies that the first two terms satisfy
\begin{align}
&E\left[\sup_{t\in[0,T]}\|(\tilde\gamma^{-1})^{ij}(t,q^\epsilon_t)({u}^\epsilon_t)_j\|^p\right]+E[\|(\tilde\gamma^{-1})^{ij}(0,q^\epsilon_0)({u}^\epsilon_0)_j\|^p]\\
\leq &2\|\tilde\gamma^{-1}\|_\infty^pE\left[\sup_{t\in[0,T]}\|{u}^\epsilon_t\|^p\right]=O(\epsilon^\beta).\notag
\end{align}

By H\"older's inequality, boundedness of $\partial_t\tilde\gamma^{-1}$, and Lemma  \ref{p_decay_lemma} , the third term satisfies
\begin{align}
&E\left[\left(\int_0^T\|(u^\epsilon_s)_j \partial_t(\tilde\gamma^{-1})^{ij}(s,q^\epsilon_s)\|ds\right)^p\right]\\
\leq& T^{p-1}\|\partial_t\tilde\gamma^{-1}\|_\infty^p \int_0^TE[\|u^\epsilon_s\|^p]ds\notag\\
\leq &T^{p}\|\partial_t\tilde\gamma^{-1}\|_\infty^p \sup_{s\in[0,T]}E[\|u^\epsilon_s\|^p]=O(\epsilon^{p/2}).\notag
\end{align}

The functions $Q^{ijl}(t,q)G^{ab}_{jl}(t,q)$ are $C^1$, bounded, with bounded first derivatives on $[0,T]\times\mathbb{R}^n$.
  Therefore, by Proposition \ref{p_int_decay} we have
\begin{align}
E\left[\sup_{t\in[0,T]}\left\|\int_0^tQ^{ijl}(s,q_s^\epsilon)G_{jl}^{ab}(s,q_s^\epsilon)d((u_s^\epsilon)_a(u_s^\epsilon)_b)\right\|^p\right]=O(\epsilon^{p/2}).
\end{align}

Using H\"older's inequality, \req{K_assump1}, and our various boundedness assumptions, the fifth term can be bounded as follows:
\begin{align}
&E\left[\left(\int_0^T\|Q^{ijl}(s,q_s^\epsilon)G_{jl}^{ab}(s,q_s^\epsilon)(u_s^\epsilon)_a(-(\partial_{q^b}K)^\epsilon(s,x_s^\epsilon)+Y_b(s,x^\epsilon_s))\|ds\right)^p\right]\notag\\
\leq &\tilde C T^{p-1}E\left[\int_0^T\|u_s^\epsilon\|^p(\|(\partial_{q^b}K)^\epsilon(s,x_s^\epsilon)\|+\|Y\|_\infty)^pds\right]\notag\\
\leq &\tilde C T^{p}\sup_{s\in[0,T]}E\left[\|u_s^\epsilon\|^p(M+CK^\epsilon(s,x_s^\epsilon)+\|Y\|_\infty)^p\right].
\end{align}
Again, here and in the following, we let $\tilde C$ denote a constant that may vary from line to line. We  now use the Cauchy-Schwarz  inequality,   Lemma \ref{p_decay_lemma}, and Proposition \ref{Sup_E_prop} to obtain
\begin{align}
&E\left[\left(\int_0^T\|Q^{ijl}(s,q_s^\epsilon)G_{jl}^{ab}(s,q_s^\epsilon)(u_s^\epsilon)_a(-(\partial_{q^b}K)^\epsilon(s,x_s^\epsilon)+Y_b(s,x^\epsilon_s))\|ds\right)^p\right]\notag\\
\leq &\tilde C T^{p}\sup_{s\in[0,T]}E[\|u_s^\epsilon\|^{2p}]^{1/2}\sup_{s\in[0,T]}E[(M+CK^\epsilon(s,x_s^\epsilon)+\|Y\|_\infty)^{2p}]^{1/2}\\
=&O(\epsilon^p)^{1/2}O(1)=O(\epsilon^{p/2}).\notag
\end{align}
A similar argument shows that the sixth term is also $O(\epsilon^{p/2})$.

Using the  Burkholder-Davis-Gundy and H\"older inequalities together with the boundedness assumptions and Lemma \ref{p_decay_lemma}, the seventh term satisfies
\begin{align}
&E\left[\sup_{t\in[0,T]}\left\|\int_0^tQ^{ijl}(s,q_s^\epsilon)G_{jl}^{ab}(s,q_s^\epsilon)(u_s^\epsilon)_a\sigma_{b\rho}(s,x_s^\epsilon) dW^\rho_s\right\|^p\right]\\
\leq &\tilde CE\left[\left(\int_0^T\sum_{i,\rho}(Q^{ijl}(s,q_s^\epsilon)G_{jl}^{ab}(s,q_s^\epsilon)(u_s^\epsilon)_a\sigma_{b\rho}(s,x_s^\epsilon))^2ds\right)^{p/2}\right]\notag\\
\leq &\tilde C E\left[\left(\int_0^T\|u_s^\epsilon\|^2 ds\right)^{p/2}\right]\leq \tilde CT^{p/2-1} E\left[\int_0^T\|u_s^\epsilon\|^p ds \right]\notag\\
\leq &\tilde C\sup_{s\in[0,T]}E[\|u_s^\epsilon\|^p]=O(\epsilon^{p/2}),\notag
\end{align}
where the power of $T$ can be absorbed into the constant, since we are working on a fixed time interval.
A similar estimate applies to the final, eighth term. Therefore, we have proven the claim for $p>2$.  An application of H\"older's inequality proves it for all $p>0$.  The proof of \req{sup_E_R} is nearly identical.
\end{proof}

We now have all the ingredients to prove convergence of $q_t^\epsilon$ to $q_t$.
\begin{theorem}\label{conv_thm}
Let $x_t^\epsilon$ be a family of solutions to the SDE \ref{Hamiltonian_SDE_q}-\ref{Hamiltonian_SDE_p} with  initial condition $(q_0^\epsilon,p_0^\epsilon)$ and $q_t$ be a solution to the proposed limiting SDE, \req{limit_eq1}, with initial condition $q_0$.  Suppose that for all $\epsilon>0$ and  all $p>0$ we have $E[\|q^\epsilon_0\|^p]<\infty$, $E[\|q_0\|^p]<\infty$, and $E[\|q_0^\epsilon-q_0\|^p]=O(\epsilon^{p/2})$.  Also suppose that Assumptions \ref{assump1}-\ref{assump7} hold.  Then for any $T>0$, $p>0$, $0<\beta<p/2$ we have
\begin{align}\label{q_conv_rate}
E\left[\sup_{t\in[0,T]}\|q_t^\epsilon-q_t\|^p\right]=O(\epsilon^{\beta})\text{ as } \epsilon\rightarrow 0^+
\end{align}
and
\begin{align}\label{q_conv_rate2}
\sup_{t\in[0,T]}E\left[\|q_t^\epsilon-q_t\|^p\right]=O(\epsilon^{p/2})\text{ as } \epsilon\rightarrow 0^+.
\end{align}
\end{theorem}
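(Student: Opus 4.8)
The plan is to derive both estimates by verifying the hypotheses of Lemma~\ref{conv_lemma}. Under Assumption~\ref{assump6} the term $(\tilde\gamma^{-1})^{ij}(t,q^\epsilon_t)\partial_{q^j}\tilde K(\epsilon,t,q^\epsilon_t,\|u^\epsilon_t\|^2_A/\epsilon)\,dt$ in \req{q_eq2} drops out, so with $\tilde R^\epsilon_t\equiv R^\epsilon_t$ (the process from \req{R_def}, which has $R^\epsilon_0=0$) and
\begin{align}
\tilde F^i(t,x)&\equiv(\tilde\gamma^{-1})^{ij}(t,q)(-\partial_t\psi_j(t,q)-\partial_{q^j}V(t,q)+F_j(t,x))+Q^{ikl}(t,q)J_{kl}(t,x),\\
\tilde\sigma^i_\rho(t,x)&\equiv(\tilde\gamma^{-1})^{ij}(t,q)\sigma_{j\rho}(t,x),
\end{align}
equation \req{q_eq2} is precisely property~\ref{q_eps_eq_assump} of Lemma~\ref{conv_lemma}, and the proposed limiting SDE \req{limit_eq1} is property~\ref{q_eq_assump} with the same $\tilde F,\tilde\sigma$ evaluated along $x=(q,\psi(t,q))$ (so that $Q^{ikl}J_{kl}(t,q,\psi(t,q))$ is the noise-induced drift \req{noise_induced_drift}). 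It remains to check the other hypotheses.

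For the regularity requirement — that $\tilde F$, $\tilde\sigma$, and $\psi$ are Lipschitz in $x$, uniformly in $t\in[0,T]$ — I would argue as follows. By Assumption~\ref{assump2} the symmetric part of $\tilde\gamma$ is $\gamma$, whose eigenvalues are bounded below by $\lambda>0$, so $\tilde\gamma$ is invertible with $\|\tilde\gamma^{-1}\|\le 1/\lambda$ (Lemma~\ref{eig_bound_lemma1}); together with the boundedness of the first and second $q$-derivatives of $\gamma$, $\psi$, and $A$ from Assumption~\ref{assump7}, this makes $\tilde\gamma^{-1}$ bounded with bounded first and second $q$-derivatives, hence Lipschitz in $q$. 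By the uniform spectral gap of $-A\tilde\gamma$ (Lemma~\ref{eig_bound_lemma2}), differentiating under the integral in \req{G_def} via the derivative formula for the matrix exponential from \cite{exp_deriv} shows that $G^{kl}_{ij}$ is bounded and Lipschitz in $q$; likewise $Q^{ijl}$ from \req{K_def} and $\Sigma_{ij}=\sum_\rho\sigma_{i\rho}\sigma_{j\rho}$ are bounded and Lipschitz. Since a product of functions that are bounded and Lipschitz in $x$ uniformly in $t$ is again such, $J=G\Sigma$, $\tilde F$, and $\tilde\sigma$ inherit these properties (using also that $\nabla_qV$, $F$, $\sigma$ are bounded and Lipschitz by Assumptions~\ref{assump2} and~\ref{assump7}), and $\psi$ is Lipschitz because $\nabla_q\psi$ is bounded. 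Among the remaining hypotheses, property~\ref{ics_assump} with $\delta=p/2$ is assumed in the theorem, property~\ref{p_decay_assump} is Lemma~\ref{p_decay_lemma}, and properties~\ref{R_decay_assump} and $4^{*}$ are exactly the two conclusions of Lemma~\ref{R_decay_lemma}.

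For the integrability hypotheses, since $\tilde F$ and $\tilde\sigma$ are bounded by the above, property~\ref{q_eps_eq_assump} together with the H\"older and Burkholder--Davis--Gundy inequalities, $E[\|q^\epsilon_0\|^p]<\infty$, and $E[\sup_{t\in[0,T]}\|R^\epsilon_t\|^p]<\infty$ (Lemma~\ref{R_decay_lemma}) give $E[\sup_{t\in[0,T]}\|q^\epsilon_t\|^p]<\infty$ for every $p>0$ and every sufficiently small $\epsilon$, which is properties~\ref{q_eps_integrable_assump} and $6^{*}$; and $q_t$ solves an SDE with bounded, Lipschitz coefficients with $E[\|q_0\|^p]<\infty$, so $E[\sup_{t\in[0,T]}\|q_t\|^p]<\infty$, which is properties~\ref{q_integrable_assump} and $7^{*}$. (Here one uses the standing assumption, recalled in Assumption~\ref{assump1} and guaranteed by the conditions of~\ref{app:no_explosions}, that none of these processes explode.) Lemma~\ref{conv_lemma} applied with $\delta=\beta$ — admissible because $\beta<p/2$ makes properties~\ref{ics_assump} and~\ref{p_decay_assump} hold with this $\delta$ — then gives \req{q_conv_rate} for $p\ge2$, and applied with $\delta=p/2$ and the starred properties it gives \req{q_conv_rate2} for $p\ge2$. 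The case $0<p<2$ follows from H\"older's inequality.

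The main obstacle is the regularity bookkeeping of the second paragraph, in particular showing that $G^{kl}_{ij}$ is globally Lipschitz in $q$ uniformly on compact $t$-intervals: this hinges on controlling $\partial_q e^{-y(A\tilde\gamma)(t,q)}$ uniformly in $y\ge0$ via the matrix-exponential derivative representation together with the uniform exponential decay $\|e^{-y(A\tilde\gamma)(t,q)}\|\le Ce^{-cy}$ inherited from the coercivity of $\gamma$. Once this, and the analogous bounds on $\tilde\gamma^{-1}$, $Q$, and $J$, are in hand, the theorem is a direct assembly of Lemmas~\ref{p_decay_lemma}, \ref{R_decay_lemma}, and~\ref{conv_lemma}.
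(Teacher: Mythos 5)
Your proposal is correct and follows essentially the same route as the paper: identify $\tilde F$, $\tilde\sigma$ from \req{q_eq2}/\req{limit_eq1}, note that Assumption~\ref{assump6} kills the $\partial_{q^j}\tilde K$ term, verify the hypotheses of Lemma~\ref{conv_lemma} via Lemmas~\ref{p_decay_lemma} and~\ref{R_decay_lemma} (with $\delta=\beta$ for the sup-inside-expectation estimate and $\delta=p/2$ for the expectation-inside-sup estimate), and reduce to $p>2$ by H\"older. The only cosmetic divergence is in verifying properties~\ref{q_eps_integrable_assump}/$6^{*}$: the paper bounds $E[\sup_t\|q_t^\epsilon\|^p]$ directly from $q_t^\epsilon=q_0^\epsilon+\int_0^t\nabla_p H^\epsilon\,dr$ using Assumption~\ref{assump1} and Proposition~\ref{Sup_E_prop}, whereas you read it off from the decomposed form (property~\ref{q_eps_eq_assump}) using boundedness of $\tilde F,\tilde\sigma$ and finiteness of $E[\sup_t\|R_t^\epsilon\|^p]$; both are valid and neither is circular. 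One small citation slip: the bound $\|\tilde\gamma^{-1}\|\le 1/\lambda$ follows directly from $\langle v,\tilde\gamma v\rangle=\langle v,\gamma v\rangle\ge\lambda\|v\|^2$, not from Lemma~\ref{eig_bound_lemma1}, which is a statement about eigenvalue real parts rather than operator norms.
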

\begin{proof}
First let $p>2$. Define the generalized force vector field 
\begin{equation}
\tilde F(t,x)^i=(\tilde\gamma^{-1})^{ij}(t,q)(-\partial_t\psi_i(t,q)-\partial_{q^j} V(t,q)+F_j(t,x))+Q^{ijl}(t,q)J_{jl}(t,x)
\end{equation}
and noise coefficients
\begin{align}
\tilde\sigma^i_\eta(t,x)=(\tilde\gamma^{-1})^{ij}(t,q)\sigma_{j\eta}(t,x).
\end{align}
Our assumptions imply that these are bounded on $[0,T]\times\mathbb{R}^{2n}$ and, along with $\psi$, they are Lipschitz in $x$, uniformly in $t\in [0,T]$.

We will now check all of the properties that are required to use Lemma \ref{conv_lemma}. By \req{q_eq2} and \req{limit_eq1}, $q_t^\epsilon$ and $q_t$ satisfy the equations
\begin{align}
&(q_t^\epsilon)^i=(q_0^\epsilon)^i+\int_0^t\tilde F^i(s,x^\epsilon_s)ds+\int_0^t\tilde \sigma^i_\eta(s,x^\epsilon_s) dW^\eta_s+(R^\epsilon_t)^i
\end{align}
and
\begin{align}\label{q_SDE}
&(q_t)^i=(q_0)^i+\int_0^t\tilde F^i(s,q_s,\psi(s,q_s))ds+\int_0^t\tilde\sigma^i_\eta(s,q_s,\psi(s,q_s)) dW^\eta_s.
\end{align}
Note, that the term involving $\partial_{q^j}\tilde{K}$, present in \req{q_eq2}, vanishes under Assumption \ref{assump6}. In addition, for each $T>0$, $\tilde F(t,q,\psi(t,q))$ and $\tilde\sigma(t,q,\psi(t,q))$ are bounded and Lipschitz in $q$, uniformly in $t$ on $[0,T]\times\mathbb{R}^n$, so a unique solution to \req{q_SDE} exists and is defined for all $t\geq 0$ \cite{khasminskii2011stochastic}.

By assumption, the initial conditions satisfy $E[\|q_0^\epsilon-q_0\|^p]=O(\epsilon^{p/2})=O(\epsilon^\beta)$.  Lemma \ref{R_decay_lemma} implies that 
\begin{equation}
E\left[\sup_{t\in[0,T]}\|R_t^\epsilon\|^p\right]=O(\epsilon^\beta), \hspace{2mm} \sup_{t\in[0,T]}E\left[\|R_t^\epsilon\|^p\right]=O(\epsilon^{p/2})  
\end{equation}
as $\epsilon \rightarrow 0^+$ and,  by Lemma \ref{p_decay_lemma},   
\begin{equation}
\sup_{t\in[0,T]}E[\|p_t^\epsilon-\psi(t,q_t^\epsilon)\|^p]=O(\epsilon^{p/2})=O(\epsilon^\beta)\text{ as }\epsilon\rightarrow 0^+.
\end{equation}

For any $\epsilon>0$, $p>1$ we have
\begin{align}
&\sup_{t\in[0,T]}E\left[\|q_t^\epsilon\|^p\right]\leq E\left[\sup_{t\in[0,T]}\|q_t^\epsilon\|^p\right]\\
=&E\left[\sup_{t\in[0,T]}\|q_0^\epsilon+\int_0^t \nabla_p H^\epsilon(r,x_r^\epsilon) dr\|^p\right]\notag\\
\leq &2^{p-1}E[\|q_0^\epsilon\|^p]+2^{p-1}E\left[\left(\int_0^T \|\nabla_p K^\epsilon(r,x_r^\epsilon)\| dr\right)^p\right]\notag\\
= &2^{p-1}E[\|q_0^\epsilon\|^p]+2^{p-1}E\left[\left(\int_0^T \epsilon^{-1/2}\|(\nabla_z K)^\epsilon(r,x_r^\epsilon)\| dr\right)^p\right]\notag\\
\leq&2^{p-1}E[\|q_0^\epsilon\|^p]+2^{p-1} T^{p-1} \epsilon^{-p/2}E\left[\int_0^T \left(M+K^\epsilon(r,x_r^\epsilon)\right)^pdr\right]\notag\\
\leq&2^{p-1}E[\|q_0^\epsilon\|^p]+4^{p-1} T^{p} \epsilon^{-p/2}\left(M^p+\sup_{r\in[0,T]}E\left[ K^\epsilon(r,x_r^\epsilon)^p\right]\right)<\infty,\notag
\end{align}
where we used Assumption \ref{assump1} to bound $\nabla_z K$ and  Proposition \ref{Sup_E_prop} in the last line.

We also have
\begin{align}
&\sup_{t\in[0,T]}E\left[\|q_t\|^p\right]\leq E\left[\sup_{t\in[0,T]}\|q_t\|^p\right]\\
\leq& 3^{p-1}\bigg(E[\|q_0\|^p]+E\left[\left(\int_0^{T}\|\tilde F(s,q_s,\psi(s,q_s))\|ds\right)^p\right]\notag\\
&+E\left[\sup_{t\in[0,T]}\left\|\int_0^{t}\tilde \sigma(s,q_s,\psi(s,q_s))dW_s\right\|^p\right]\bigg).\notag
\end{align}
$\tilde F$ and $\tilde\sigma$ are bounded uniformly up to time $T$, so applying the Burkholder-Davis-Gundy and H\"older  inequalities to the last term, we get
\begin{align}
\sup_{t\in[0,T]}E\left[\|q_t\|^p\right]\leq E\left[\sup_{t\in[0,T]}\|q_t\|^p\right]\leq& 3^{p-1}(E[\|q_0\|^p]+T^p\|\tilde F\|_\infty^p +\tilde CT^{p/2}\|\tilde \sigma\|_{F,\infty}^p)<\infty.
\end{align}

Hence we have verified properties 1-7 from Lemma  \ref{conv_lemma} for every $T>0$, $p>2$, and $\delta=\beta\in (0,p/2)$, and properties 1-3, 4${}^*$, 5, 6${}^*$, 7${}^*$ for every $T>0$, $p>2$, with $\delta=p/2$. We can therefore  conclude the convergence results \req{q_conv_rate} and \req{q_conv_rate2}.  The results for any $p>0$ follows from an application of H\"older's inequality.

\end{proof}

\section{Extension to Unbounded Forces}\label{sec:unbounded} 
In this section, we focus on relaxing some of the boundedness assumptions in Theorem \ref{conv_thm} by adapting the method developed in \cite{herzog2015small}.  We make no claim that the assumptions here are as weak or general as possible and there are many variations on this idea that one could pursue, weakening the boundedness assumptions on the various objects appearing in the SDEs; here we focus on accommodating  unbounded forces, $F$ and $\nabla_qV$. Specifically, we will be able to prove convergence for potentials that are confining, or at least not too unstable, meaning that there exists $a\geq 0, b\geq 0$ such that $a+b\|q\|^2+V(t,q)$ is non-negative.
\begin{theorem}\label{thm:conv_in_prob}
Suppose that the following hold:
\begin{enumerate}
\item The family of Hamiltonians has the form
\begin{align}
H^\epsilon(t,x)=K(\epsilon,t,q,(p-\psi(t,q))/\sqrt{\epsilon})+V(t,q)
\end{align}
where $V$ is a $C^2$, $\mathbb{R}$-valued function,  $\psi$ is a $C^3$, $\mathbb{R}^n$-valued function, and
\begin{align}
K(\epsilon,t,q,z)=\tilde K(\epsilon,t,A^{ij}(t,q)z_iz_j)
\end{align}
for a non-negative function $\tilde K(\epsilon,t,\zeta)$  that is $C^2$ in $(t,\zeta)\in [0,\infty)\times[0,\infty)$ and a $C^2$, positive definite  $n \times n$ matrix-valued function, $A$.
\item  For every $T>0$, the following bounds hold on $(0,\epsilon_0]\times[0,\infty)\times\mathbb{R}^{2n}$:
\begin{enumerate}
\item There exist $C>0$ and $M>0$ such that
\begin{align}
\max\{|\partial_t K(\epsilon,t,q,z)|,\|\nabla_q K(\epsilon,t,q,z)\|\}\leq M+CK(\epsilon,t,q,z).
\end{align}
\item There exist  $c>0$ and $M\geq 0$ such that
\begin{align}
\|\nabla_z K(\epsilon,t,q,z)\|^2+M\geq c K(\epsilon,t,q,z).
\end{align}
\item
For every $\delta>0$ there exists an $M>0$ such that
\begin{align}
&\max\left\{\|\nabla_z K(\epsilon,t,q,z)\|,\left(\sum_{i,j}|\partial_{z_i}\partial_{z_j}K(\epsilon,t,q,z)|^2\right)^{1/2}\right\}\\
\leq& M+\delta K(\epsilon,t,q,z).\notag
\end{align}
\item  There exists $c>0$, $\eta>0$ such that
\begin{align}
K(\epsilon,t,q,z)\geq c\|z\|^{2\eta}.
\end{align}
\end{enumerate}
\item There exists $a\geq 0, b\geq 0$ such that $\tilde V(t,q)\equiv a+b\|q\|^2+V(t,q)$ is non-negative.
\item $\gamma$ is $C^2$, independent of $p$, and symmetric with eigenvalues bounded below by some $\lambda>0$.
\item The eigenvalues of $A$ are bounded below by some $c>0$.
\item  $\sigma(t,x)$  and $F(t,x)$ are continuous and   Lipschitz in $x$ with the Lipschitz constant uniform on compact time intervals.
\item $\sigma$, $\gamma$, $\partial_{q^i}\psi$,  $A$, $\partial_{q^i}A$, $\partial_t A$, $\partial_{q^i} A$, $\partial_t \partial_{q^i}A$,  and $\partial_{q^i}\partial_{q^j} A$ are bounded.
\item There exists $C>0$, $M>0$ such that
\begin{align}
|\partial_t V(t,q)|\leq M+C(\|q\|^2+\tilde V(t,q)),
\end{align}
\begin{align}
\|\partial_t\psi(t,q)\|^2\leq M+C\left(\|q\|^2+\tilde V(t,q)\right),
\end{align}
\begin{align}
\|F(t,x)\|^2\leq M+C\left(\|q\|^2+\tilde V(t,q)\right),
\end{align}
\begin{align}
\|\partial_{q^i}\tilde\gamma(t,q)\|\leq M+C\left(\|q\|^2+\tilde V(t,q)\right), \hspace{2mm} i=1,...,n,
\end{align}
and
\begin{align}
\left(\sum_{i,j}|\partial_{q^i}\partial_{q^j} V(t,q)|^2\right)^{1/2}\leq M+C\left(\|q\|^2+\tilde V(t,q)\right).
\end{align}
\item We have $\mathbb{R}^n$-valued initial conditions  $x^\epsilon_0=(q_0^\epsilon,p_0^\epsilon)$ and $q_0$ that satisfy the following:

For some $R>0$, $C>0$ we have $\|q_0\|\leq R$, $\|q_0^\epsilon\|\leq R$ and  $K^\epsilon(0,x^\epsilon_0)\leq C$ for all $\epsilon>0$ and all $\omega\in\Omega$.

For all $p>0$ we have   and $E[\|q_0^\epsilon-q_0\|^p]=O(\epsilon^{p/2})$.

\end{enumerate}

Let $x_t^\epsilon$ be the family of solutions to the SDE \ref{Hamiltonian_SDE_q}-\ref{Hamiltonian_SDE_p} with  initial condition $x_0^\epsilon$ and $q_t$ be a solution to the proposed limiting SDE, \req{limit_eq1}, with initial condition $q_0$.  For any $T>0$, $\delta>0$, we have
\begin{align}
\lim_{\epsilon\to 0^+}P\left(\sup_{t\in[0,T]}\|q_t^\epsilon-q_t\|>\delta\right)=0.
\end{align}
\end{theorem}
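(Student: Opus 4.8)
The plan is to adapt the localization argument of \cite{herzog2015small}: replace the (possibly unbounded) data of the SDE by globally controlled data that coincides with the originals on a ball of radius $N$, apply Theorem \ref{conv_thm} to each localized system, and then remove the localization through a double limit ($\epsilon\to0^+$ first, then $N\to\infty$) that uses only non-explosion of the limiting process. Concretely, fix smooth functions $\chi_N:\mathbb{R}^n\to[0,1]$, equal to $1$ on $\{\|q\|\le N\}$ and $0$ on $\{\|q\|\ge N+1\}$, and $\rho_N:\mathbb{R}^n\to\mathbb{R}^n$, equal to the identity on $\{\|q\|\le N\}$ and valued in $\{\|q\|\le N+1\}$, all with derivatives bounded uniformly in $q$. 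Set $A^N(t,q)=\chi_N(q)A(t,q)+(1-\chi_N(q))I$, $\psi^N(t,q)=\psi(t,\rho_N(q))$, $\gamma^N(t,q)=\gamma(t,\rho_N(q))$, $V^N(t,q)=V(t,\rho_N(q))$, $F^N(t,x)=\chi_N(q)F(t,x)$, leave $\sigma$ and $\tilde K$ unchanged, and let $H^{\epsilon,N}$ be the Hamiltonian built from $\tilde K$, $A^N$, $\psi^N$, $V^N$. Then $A^N$ is symmetric and $C^2$ with eigenvalues bounded above and below by constants independent of $N$; $\gamma^N$ is symmetric, $p$-independent, with eigenvalues $\ge\lambda$; and, since $A,\psi,V,\gamma,F,\sigma$ and all the derivatives entering Assumptions \ref{assump1}--\ref{assump7} are continuous, hence bounded on the compact set $[0,T]\times\{\|q\|\le N+1\}$, the objects $\psi^N,\gamma^N,V^N,F^N$ and those derivatives are bounded on $[0,T]\times\mathbb{R}^{2n}$ with $N$-dependent constants, which is all that Assumptions \ref{assump1}--\ref{assump7} require. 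Because $\tilde K$ is unchanged and $\|z\|^2_{A^N}$ is comparable to $\|z\|^2$ uniformly in $N$, the quantitative bounds of Assumptions \ref{assump1}, \ref{assump3} and the structural forms demanded by Assumptions \ref{assump5}, \ref{assump6} persist for $H^{\epsilon,N}$, while $F^N$ is bounded and Lipschitz precisely because $\|F\|$ is assumed bounded by a function of $q$ alone. Taking $N$ at least as large as the radius $R$ of the initial data (so that the localized initial data coincides with the original and Assumption \ref{assump2} holds at $t=0$), the localized Hamiltonian SDE satisfies Assumptions \ref{assump1}--\ref{assump7}, and Theorem \ref{conv_thm} applies: with $x_t^{\epsilon,N}=(q_t^{\epsilon,N},p_t^{\epsilon,N})$ its solution started at $x_0^\epsilon$ and $q_t^N$ the solution of the correspondingly modified limiting SDE started at $q_0$ (both globally defined, the localized data being bounded), we obtain $E[\sup_{t\in[0,T]}\|q_t^{\epsilon,N}-q_t^N\|^p]\to0$, hence $\sup_{t\in[0,T]}\|q_t^{\epsilon,N}-q_t^N\|\to0$ in probability as $\epsilon\to0^+$.

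To remove the localization, observe that on $\{\|q\|<N\}$ the localized data agrees with the original data together with all of its derivatives, and that the coefficients of the limiting SDE \req{limit_eq1}, being built pointwise in $q$ out of $\tilde\gamma^{-1}$, $A$, their $q$-derivatives and the matrix exponentials defining $G$, therefore coincide with the localized ones on $\{\|q\|<N\}$. By \ref{app:no_explosions}, the confining assumption together with the coercivity of $\gamma$ and the growth bounds on the forces ensure that $x_t^\epsilon$ and $q_t$ exist for all $t\ge0$. Put $\tau_N^\epsilon=\inf\{t:\|q_t^\epsilon\|\ge N\}$ and $\tau_N=\inf\{t:\|q_t\|\ge N\}$; pathwise uniqueness gives $x_t^{\epsilon,N}=x_t^\epsilon$ on $[0,\tau_N^\epsilon]$ and $q_t^N=q_t$ on $[0,\tau_N]$, so $\{\tau_N^\epsilon\le T\}=\{\sup_{t\le T}\|q_t^{\epsilon,N}\|\ge N\}$ and $\sup_{t\le T}\|q_t^N\|\ge N-1$ forces $\sup_{t\le T}\|q_t\|\ge N-1$. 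On $\{\tau_N^\epsilon>T\}\cap\{\tau_N>T\}$ the processes $q^\epsilon$, $q$ coincide on $[0,T]$ with $q^{\epsilon,N}$, $q^N$, so
\begin{align}
P\left(\sup_{t\in[0,T]}\|q_t^\epsilon-q_t\|>\delta\right)\le P(\tau_N^\epsilon\le T)+P(\tau_N\le T)+P\left(\sup_{t\in[0,T]}\|q_t^{\epsilon,N}-q_t^N\|>\delta\right).
\end{align}
The last term tends to $0$ as $\epsilon\to0^+$ by the first paragraph; since $\sup_{t\le T}\|q_t^{\epsilon,N}\|\to\sup_{t\le T}\|q_t^N\|$ in probability, $\limsup_{\epsilon\to0^+}P(\tau_N^\epsilon\le T)\le P(\sup_{t\le T}\|q_t^N\|\ge N-1)\le P(\sup_{t\le T}\|q_t\|\ge N-1)$, and similarly $P(\tau_N\le T)\le P(\sup_{t\le T}\|q_t\|\ge N-1)$. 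Hence $\limsup_{\epsilon\to0^+}P(\sup_{t\le T}\|q_t^\epsilon-q_t\|>\delta)\le2\,P(\sup_{t\le T}\|q_t\|\ge N-1)$ for every $N$, and the right-hand side tends to $0$ as $N\to\infty$ since $\sup_{t\le T}\|q_t\|<\infty$ almost surely. This proves the theorem.

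The step I expect to be the main obstacle is the construction of the localized system. The cutoffs must be chosen so that, simultaneously, they leave the original data and its derivatives untouched on $\{\|q\|\le N\}$; the modified drag remains $p$-independent, symmetric and uniformly coercive; the modified kinetic energy keeps the form $\tilde K(\epsilon,t,\|z\|^2_{A^N})$ with $\tilde K$ still independent of $q$ and $A^N$ still having eigenvalues bounded away from $0$ and $\infty$, so that Assumptions \ref{assump4}--\ref{assump6} and the solvability of the Lyapunov equation of Section \ref{sec:limit_eq} survive; and the $p$-dependence of $F$ is absorbed, a pure $q$-cutoff being enough precisely because $\|F\|$ is controlled by a function of $q$. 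Checking that $H^{\epsilon,N}$ inherits the quantitative bounds of Assumption \ref{assump1} for the replaced matrix $A^N$, which amounts to transferring the bounds on $\tilde K$ through the equivalence of the norms $\|\cdot\|_A$ and $\|\cdot\|_{A^N}$, is routine but requires the most care.
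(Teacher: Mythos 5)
Your proof is correct and follows essentially the same route as the paper: localize the data so that Theorem~\ref{conv_thm} applies, identify the original and localized processes via pathwise uniqueness up to first exit from the ball of radius $N$, and close with the estimate $\limsup_{\epsilon\to 0^+}P\left(\sup_{t\in[0,T]}\|q_t^\epsilon-q_t\|>\delta\right)\le 2\,P\left(\sup_{t\in[0,T]}\|q_t\|\ge N-1\right)$ together with non-explosion of $q_t$ (guaranteed by \ref{app:no_explosions}). The only substantive deviation is in the cutoff construction: you compose $\psi$, $\gamma$, $V$ with a retraction $\rho_N$ and additionally interpolate $A$ toward the identity, whereas the paper multiplies $V$, $F$, $\psi$ by $\chi_r$, replaces $\gamma$ by $\chi_r\gamma+(1-\chi_r)\lambda I$, and leaves $A$, $\sigma$, $\tilde K$ untouched---which is possible because hypotheses~5 and 7 of the theorem already give global boundedness and uniform coercivity of $A$, and which makes moot the step you flag as the most delicate (transferring the bounds of Assumption~\ref{assump1} from $K$ to the kinetic energy built from $A^N$), since with the paper's cutoffs $K^\epsilon$ is simply unchanged.
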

\begin{proof}
By Lemmas \ref{unique_lemma1} and \ref{unique_lemma2}, the maximal solutions, $x_t^\epsilon$, to the SDE \req{Hamiltonian_SDE_q}-\req{Hamiltonian_SDE_p} and  $q_t$ to the SDE \req{limit_eq_intro} are unique a.s. and a.s. exist for all $t\geq 0$.

Let $\chi:\mathbb{R}^n\to [0,1]$ be a $C^\infty$ bump function, equal to $1$ on $B_1(0)\equiv\{\|q\|\leq 1\}$ and zero outside $B_2(0)$. Given $r>0$ let $\chi_r(q)=\chi(q/r)$.  Define
\begin{align}
&V_r(t,q)=\chi_r(q)V(t,q), \hspace{2mm} F_r(t,x)=\chi_r(q) F(t,x), \hspace{2mm}  \psi_r(t,q)=\chi_r(q)\psi(t,q),\notag\\
&\gamma_r(t,q)=\chi_r(q)\gamma(t,q)+(1-\chi_r(q))\lambda I.
\end{align}
Replacing $V$ with $V_r$, $F$ with $F_r$ etc., we arrive at an SDE satisfying the hypotheses of Theorem \ref{conv_thm}.  Let $x_t^{r,\epsilon}$ be the corresponding solution to \req{Hamiltonian_SDE_q}-\req{Hamiltonian_SDE_p}  and $q_t^r$ the limit of $q_r^{r,\epsilon}$, both using the same initial conditions as the original systems.   Theorem \ref{conv_thm} then implies that for all  $T>0$, $p>0$, $0<\beta<p/2$ and all $r>R$ we have
\begin{align}\label{qr_conv}
E\left[\sup_{t\in[0,T]}\|q_t^{r,\epsilon}-q^r_t\|^p\right]=O(\epsilon^{\beta})\text{ as } \epsilon\rightarrow 0^+.
\end{align}
We will now use this result to prove that $q_t^\epsilon$ converges to $q_t$ in probability.

 For each $r>R$ define the stopping times $\tau^\epsilon_r=\inf\{t:\|q_t^\epsilon\|\geq r\}$ and $\tau_r=\inf\{t:\|q_t\|\geq r\}$.   The drifts and diffusions of the modified and unmodified SDEs agree on the ball $\{\|q\|\leq r\}$, so uniqueness of solutions implies 
\begin{align}\label{q_eps_unique}
q^\epsilon_{\tau^\epsilon_r\wedge t}=q^{r,\epsilon}_{\tau^\epsilon_r\wedge t} \text{ for all $t\geq 0$ a.s. }
\end{align}
 and  
\begin{align}
q_{\tau_r\wedge t}= q^r_{\tau_{ r}\wedge t}  \text{ for all $t\geq 0$ a.s.}
\end{align}

Using this, given $T>0$, $\delta>0$ we can calculate
\begin{align}
&P\left(\sup_{t\in[0,T]}\|q_t^\epsilon-q_t\|>\delta\right)\\
=&P\left(\tau_r\wedge\tau^\epsilon_r> T,\sup_{t\in[0,T]}\|q_{\tau^\epsilon_r\wedge t}^{\epsilon}-q_{\tau_r\wedge t}\|>\delta\right)+P\left(\tau_r\wedge\tau^\epsilon_r\leq T,\sup_{t\in[0,T]}\|q_t^\epsilon-q_t\|>\delta\right)\notag\\
=&P\left(\tau_r\wedge\tau^\epsilon_r> T,\sup_{t\in[0,T]}\|q_t^{r,\epsilon}-q^r_t\|>\delta\right)+P\left(\tau_r\wedge\tau^\epsilon_r\leq T,\sup_{t\in[0,T]}\|q_t^\epsilon-q_t\|>\delta\right)\notag\\
\leq&P\left(\sup_{t\in[0,T]}\|q_t^{r,\epsilon}-q^r_t\|>\delta\right)+P\left(\tau_r\wedge\tau^\epsilon_r\leq T\right)\notag.
\end{align}
The first term converges to zero as $\epsilon\to 0^+$ by \req{qr_conv}, so we focus on the second.
\begin{align}
P\left(\tau_r\wedge\tau^\epsilon_r\leq T\right)\leq &P\left(\sup_{t\in [0,T]}\|q_t^{r,\epsilon}-q_t^{r}\|> 1\right)+P\left(\tau_r\leq T\right)\\
&+P\left(\tau_r>T,\tau^\epsilon_r\leq T, \sup_{t\in [0,T]}\|q_t^{r,\epsilon}-q_t^{r}\|\leq 1\right)\notag\\
\leq &P\left(\sup_{t\in [0,T]}\|q_t^{r,\epsilon}-q_t^{r}\|> 1\right)+P\left(\sup_{t\in[0,T]}\|q_t\|\geq r\right)\notag\\
&+P\left(\tau^\epsilon_r\leq T, \|q_{\tau^\epsilon_r\wedge T}^{r,\epsilon}-q_{\tau^\epsilon_r\wedge T}\|\leq 1\right)\notag\\
=& P\left(\sup_{t\in [0,T]}\|q_t^{r,\epsilon}-q_t^{r}\|> 1\right)+P\left(\sup_{t\in[0,T]}\|q_t\|\geq r\right)\notag\\
&+P\left(\tau^\epsilon_r\leq T, \|q_{\tau^\epsilon_r\wedge T}^{\epsilon}-q_{\tau^\epsilon_r\wedge T}\|\leq 1\right),\notag
\end{align}
where we again used the uniqueness result, \req{q_eps_unique}.

On the event where $\tau^\epsilon_r\leq T$ and $\|q_{\tau^\epsilon_r\wedge T}^{\epsilon}-q_{\tau^\epsilon_r\wedge T}\|\leq 1$  we have $\|q^\epsilon_{\tau^\epsilon_r\wedge T}\|\geq r$ and hence 
\begin{align}
\|q_{ \tau^\epsilon_r\wedge T}\|\geq \|q^\epsilon_{\tau^\epsilon_r\wedge T}\|-\|q^\epsilon_{\tau^\epsilon_r\wedge T}-q_{\tau^\epsilon_r\wedge T}\|\geq r-1.
\end{align}
Therefore $\sup_{t\in[0,T]}\|q_t\|\geq r-1$ on this event.  

Combining the above calculation with \req{qr_conv}, for any $r>0$ we obtain 
\begin{align}
&\limsup_{\epsilon\to 0^+}P\left(\sup_{t\in[0,T]}\|q_t^\epsilon-q_t\|>\delta\right)\\
\leq&\limsup_{\epsilon\to 0^+}P\left(\sup_{t\in[0,T]}\|q_t^{r,\epsilon}-q^r_t\|>\delta\right)+ \limsup_{\epsilon\to 0^+}P\left(\sup_{t\in [0,T]}\|q_t^{r,\epsilon}-q_t^{r}\|> 1\right)\notag\\
&+P\left(\sup_{t\in[0,T]}\|q_t\|\geq r\right)+P\left(\sup_{t\in[0,T]}\|q_t\|\geq r-1\right)\notag\\
\leq& 2P\left(\sup_{t\in[0,T]}\|q_t\|\geq r-1\right).\notag
\end{align}

Non-explosion of $q_t$ implies that $P\left(\sup_{t\in[0,T]}\|q_t\|\geq r-1\right)\to 0$ as $r\to\infty$ and so we have proven the claimed result.

\end{proof}

\appendix

\section{Assumptions} \label{app:assump}
In this appendix we collect all the assumptions that are needed for one of the main results, Theorem \ref{conv_thm}.  In the body of the paper, we will restate each assumption when it is first used.
\setcounter{assumption}{0}
\begin{assumption}
We assume that $\sigma$, $F$, and $\gamma$ are continuous, the Hamiltonian has the form given in \req{H_family} where $K(\epsilon,t,q,z)$ is non-negative and $C^2$ in $(t,q,z)$ for each $\epsilon$, $\psi$ is  $C^2$, and the solutions, $x^\epsilon_t$, to the SDE \ref{Hamiltonian_SDE_q}-\ref{Hamiltonian_SDE_p} exist for all $t\geq 0$.

  For every $T>0$, we assume the following bounds hold on $(0,\epsilon_0]\times [0,T]\times\mathbb{R}^{2n}$:
\begin{enumerate}
\item There exist $C>0$ and $M>0$ such that
\begin{align}
\max\{|\partial_t K(\epsilon,t,q,z)|,\|\nabla_q K(\epsilon,t,q,z)\|\}\leq M+CK(\epsilon,t,q,z).
\end{align}
\item There exist  $c>0$ and $M\geq 0$ such that
\begin{align}
\|\nabla_z K(\epsilon,t,q,z)\|^2+M\geq c K(\epsilon,t,q,z).
\end{align}
\item
For every $\delta>0$ there exists an $M>0$ such that
\begin{align}
\max\left\{\|\nabla_z K(\epsilon,t,q,z)\|,\left(\sum_{i,j}|\partial_{z_i}\partial_{z_j}K(\epsilon,t,q,z)|^2\right)^{1/2}\right\}\leq M+\delta K(\epsilon,t,q,z).
\end{align}
\end{enumerate}

\end{assumption}

\begin{assumption}
For every $T>0$, the following hold uniformly on $[0,T]\times\mathbb{R}^n$:
\begin{enumerate}
\item $V$ is $C^2$ and $\nabla_q V$ is bounded.
\item  $\gamma$ is symmetric with eigenvalues bounded below by some $\lambda>0$.
\item  $\gamma$, $F$, $\partial_t\psi$, and $\sigma$ are bounded.
\item There exists $C>0$ such that the (random) initial conditions satisfy $K^\epsilon(0,x^\epsilon_0)\leq C$ for all $\epsilon>0$ and all $\omega\in\Omega$.
\end{enumerate}
\end{assumption}

\begin{assumption}
For every $T>0$ there exists $c>0$, $\eta>0$ such that
\begin{align}
K(\epsilon,t,q,z)\geq c\|z\|^{2\eta}
\end{align}
on $(0,\epsilon_0]\times[0,T]\times\mathbb{R}^{2n}$.
\end{assumption}

\begin{assumption}
 $\gamma$ is $C^1$ and is independent of $p$.
\end{assumption}

\begin{assumption}\label{app:assump5}
 $K$ has the form
\begin{align}
K(\epsilon,t,q,z)=\tilde K(\epsilon,t,q,A^{ij}(t,q)z_iz_j)
\end{align}
where $\tilde K(\epsilon,t,q,\zeta)$ is $C^2$ in $(t,q,\zeta)$ for every $\epsilon$, non-negative on $(0,\epsilon_0]\times[0,\infty)\times\mathbb{R}^n\times[0,\infty)$ and $A(t,q)$ is a $C^2$ function whose values are symmetric $n \times n$-matrices.   We also assume that for every $T>0$, the eigenvalues of $A$ are bounded above and below by some constants $C>0$ and $c>0$ respectively, uniformly on $[0,T]\times\mathbb{R}^n$.  
\end{assumption}

\begin{assumption}
 $\tilde K$ is independent of $q$
\end{assumption}
\begin{assumption}
For every $T>0$, $\nabla_q V$, $F$, and $\sigma$ are Lipschitz in $x$ uniformly in $t\in[0,T]$.  We also assume that $A$ and $\gamma$ are $C^2$, $\psi$ is $C^3$, and $\partial_t\psi$, $\partial_{q^i}\psi$, $\partial_{q^i}\partial_{q^j}\psi$, $\partial_t\partial_{q^i}\psi$, $\partial_t\partial_{q^j}\partial_{q^i}\psi$, $\partial_{q^l}\partial_{q^j}\partial_{q^i}\psi$, $\partial_t\gamma$, $\partial_{q^i} \gamma$, $\partial_t\partial_{q^j}\gamma$,  $\partial_{q^i}\partial_{q^j}\gamma$, $\partial_t A$, $\partial_{q^i} A$, $\partial_t \partial_{q^i}A$,  and $\partial_{q^i}\partial_{q^j} A$ are bounded on $[0,T]\times\mathbb{R}^{n}$ for every $T>0$.
\end{assumption}

\section{Linear Algebra Lemmas}

\setcounter{lemma}{0}
    \renewcommand{\thelemma}{\Alph{section}\arabic{lemma}}

For the benefit of the reader we collect some more or less well  linear algebra lemmas in this appendix.
\begin{lemma}\label{eig_bound_lemma1}
Let $A$ be an $n\times n$-real or complex matrix with symmetric part $A^s=\frac{1}{2}(A+A^*)$. If the eigenvalues of $A^s$ are bounded above (resp. below) by $\alpha$ then the real parts of the eigenvalues of $A$ are bounded above (resp. below) by $\alpha$.
\end{lemma}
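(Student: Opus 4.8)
## Proof Proposal for Lemma B1

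The plan is to relate the eigenvalues of $A$ to a Rayleigh-type quotient involving the symmetric (Hermitian) part $A^s$. The key observation is that if $\lambda$ is an eigenvalue of $A$ with a (unit) eigenvector $v \in \mathbb{C}^n$, then $\mathrm{Re}\,\lambda$ can be extracted from the quadratic form $v^* A v$, and this quadratic form depends only on $A^s$.

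First I would let $\lambda \in \mathbb{C}$ be an eigenvalue of $A$ and pick an eigenvector $v$ normalized so that $v^* v = 1$. Then $\lambda = v^* A v$, and taking the conjugate-transpose of the scalar $v^* A v$ gives $\overline{\lambda} = \overline{v^* A v} = v^* A^* v$. Adding these two identities yields
\begin{align}
2\,\mathrm{Re}\,\lambda = \lambda + \overline{\lambda} = v^*(A + A^*)v = 2\, v^* A^s v,
\end{align}
so $\mathrm{Re}\,\lambda = v^* A^s v$. Since $A^s$ is Hermitian, $v^* A^s v$ is real and, by the min-max (Rayleigh quotient) characterization of eigenvalues of a Hermitian matrix, it lies between the smallest and largest eigenvalues of $A^s$. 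Hence if every eigenvalue of $A^s$ is $\leq \alpha$, then $v^* A^s v \leq \alpha$, so $\mathrm{Re}\,\lambda \leq \alpha$; the ``bounded below'' case is identical with the inequality reversed.

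There is essentially no obstacle here — this is a standard fact — but the one point requiring mild care is handling the complex case correctly: when $A$ is real but has a complex eigenvalue, the eigenvector $v$ genuinely lives in $\mathbb{C}^n$, and one must use the conjugate transpose $v^*$ (not the plain transpose) throughout, together with the fact that the symmetric part $A^s = \tfrac12(A + A^*)$ is Hermitian so that its Rayleigh quotients are real and controlled by its (real) eigenvalues. I would also note explicitly that $A^s$ being Hermitian guarantees it is diagonalizable with real eigenvalues, which is what makes the Rayleigh quotient bound $\min \mathrm{spec}(A^s) \leq v^* A^s v \leq \max \mathrm{spec}(A^s)$ legitimate. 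With that remark in place, the proof is complete.
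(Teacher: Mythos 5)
Your proof is correct and follows essentially the same route as the paper's: both extract $\mathrm{Re}\,\lambda = v^*A^s v$ for a unit eigenvector $v$ (the paper does this by showing the antisymmetric part $A^a$ contributes nothing to the real part, you do it by adding $\lambda = v^*Av$ and $\overline{\lambda} = v^*A^*v$, which is the same computation) and then apply the Rayleigh-quotient bound for the Hermitian matrix $A^s$.
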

\begin{proof}
Suppose $y$ is an eigenvector of $A$ with norm $1$ corresponding to the eigenvalue $\lambda$.  Let $A^a$ be the antisymmetric part of $A$. 
\begin{align}
\Re(y^*A^ay)=\Re(\overline{y^*A^ay})=\Re(y^*(A^a)^*y)=-\Re(y^*A^ay).
\end{align}
 Therefore $\Re(y^*A^ay)=0$ and
\begin{align}
\Re(\lambda)=\Re( y^*Ay)=\Re( y^*A^sy).
\end{align}
If the eigenvalues of $A^s$ are bounded above by $\alpha$ then
\begin{align}
\Re(\lambda)=\Re( y^*A^sy)\leq \alpha\|y\|^2=\alpha
\end{align}
and if they are bounded below by $\alpha$ then
\begin{align}
\Re(\lambda)=\Re( y^*A^sy)\geq \alpha\|y\|^2=\alpha.
\end{align}
\end{proof}

\begin{lemma}\label{eig_bound_lemma2}
Let $A$ be a positive definite $n\times n$-real matrix with eigenvalues bounded below by $\lambda>0$ and $B$ be an $n\times n$-real matrix whose symmetric part has eigenvalues  bounded below by $\gamma>0$.  Then the eigenvalues of $AB$ have real parts bounded below by $\gamma\lambda$.
\end{lemma}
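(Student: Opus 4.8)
The plan is to reduce the mixed product $AB$ to a genuinely symmetric situation by using the positive-definite square root of $A$. First I would write $A = A^{1/2}A^{1/2}$, where $A^{1/2}$ is the unique positive-definite square root, whose eigenvalues are $\geq \sqrt{\lambda}$. The key observation is a similarity transformation: $AB = A^{1/2}(A^{1/2}BA^{1/2})A^{-1/2}$, so $AB$ and $A^{1/2}BA^{1/2}$ have the same eigenvalues (with the same real parts). This is the main structural step — it replaces the non-symmetric matrix $AB$ by the matrix $C \equiv A^{1/2}BA^{1/2}$, which, although not symmetric itself, has a symmetric part that is easy to control.

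Next I would compute the symmetric part of $C$. Since $A^{1/2}$ is symmetric, $C^* = (A^{1/2}BA^{1/2})^* = A^{1/2}B^*A^{1/2}$, and hence the symmetric part is
\begin{align}
C^s = \tfrac{1}{2}(C + C^*) = A^{1/2}\left(\tfrac{1}{2}(B + B^*)\right)A^{1/2} = A^{1/2}B^sA^{1/2},
\end{align}
where $B^s$ is the symmetric part of $B$. By hypothesis $B^s$ has eigenvalues $\geq \gamma > 0$, so $B^s \geq \gamma I$ as quadratic forms. Therefore, for any nonzero $v \in \mathbb{R}^n$,
\begin{align}
v^T C^s v = (A^{1/2}v)^T B^s (A^{1/2}v) \geq \gamma \,\|A^{1/2}v\|^2 = \gamma\, v^T A v \geq \gamma\lambda\, \|v\|^2,
\end{align}
using that $A$ has eigenvalues $\geq \lambda$. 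Hence the eigenvalues of $C^s$ are bounded below by $\gamma\lambda$.

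Finally I would invoke Lemma \ref{eig_bound_lemma1}: since the symmetric part $C^s$ of $C$ has all eigenvalues bounded below by $\gamma\lambda$, the real parts of the eigenvalues of $C$ are bounded below by $\gamma\lambda$. Because $AB$ is similar to $C$, the real parts of the eigenvalues of $AB$ are likewise bounded below by $\gamma\lambda$, which is the claim. I do not anticipate a serious obstacle here; the only point requiring a little care is justifying the existence and symmetry of $A^{1/2}$ (immediate from the spectral theorem for the positive-definite matrix $A$) and noting that conjugation by an invertible matrix preserves the spectrum — both are standard.
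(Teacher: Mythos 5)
Your argument is correct and essentially identical to the paper's: the paper factors $A = DD^T$ for a general real invertible $D$, conjugates $AB$ to $D^TBD$, bounds the symmetric part $D^TB^sD$ below by $\gamma\lambda I$, and then applies Lemma \ref{eig_bound_lemma1}. You simply take the specific (symmetric) choice $D = A^{1/2}$, which makes the step ``$D^TD$ has the same spectrum as $A$'' trivial since then $D^TD = A$ literally.
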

\begin{proof}
$A$ is positive definite, so we can factor it as $A=DD^T$ where $D$ is a real valued, invertible, $n\times n$-matrix. $AB$ and the conjugation $D^{-1}ABD=D^TBD$ have the same eigenvalues.  The symmetric part of $D^TBD$ is $D^TB^sD$ and for any $y\in\mathbb{R}^n$,
\begin{align}
y^TD^TB^sDy\geq \gamma y^TD^TDy.
\end{align}
$D^TD$ is a positive definite matrix that has the same eigenvalues as $A=DD^T$ (both are equal to the squared singular values of $D$).  The eigenvalues of $A$ are bounded below by $\lambda$, so
\begin{align}
  y^TD^TDy\geq \lambda \|y\|^2.
\end{align}
Therefore the eigenvalues of the symmetric part of $D^TBD$ are bounded below by $\gamma\lambda$.  Hence, by Lemma \ref{eig_bound_lemma1}, the real parts of the eigenvalues of $D^TBD$, and hence of $AB$, are bounded below by $\gamma\lambda$.
\end{proof}

\begin{lemma}\label{A_inv_pos_def_lemma}
Let $A$ be an $n\times n$-real or complex matrix whose symmetric part has eigenvalues bounded below by $\lambda>0$. Then $A$ is invertible and the symmetric part of $A^{-1}$ has eigenvalues bounded below by $\lambda/\|A\|^2$.
\end{lemma}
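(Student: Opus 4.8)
The plan is to prove the two claims — invertibility of $A$ and the lower bound on the eigenvalues of the symmetric part of $A^{-1}$ — in that order, relying only on the elementary estimate that the symmetric part of $A$ being bounded below by $\lambda>0$ forces $\Re(y^*Ay)\geq \lambda\|y\|^2$ for all vectors $y$.

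First I would establish invertibility. Suppose $Ay=0$ for some $y$. Then $0=\Re(y^*Ay)=\Re(y^*A^sy)\geq \lambda\|y\|^2$, where I used that the antisymmetric part contributes nothing to the real part (as in Lemma \ref{eig_bound_lemma1}). Since $\lambda>0$ this gives $y=0$, so $A$ has trivial kernel and is invertible. (The same inequality actually gives the cleaner bound $\|Ay\|\geq \|y^*Ay\|/\|y\| \geq \lambda\|y\|$, which re-proves injectivity and will be reused below.)

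Next, for the eigenvalue bound on the symmetric part of $A^{-1}$: write $B=A^{-1}$ and let $y$ be an arbitrary unit vector. I want to show $\Re(y^*B^sy)=\Re(y^*By)\geq \lambda/\|A\|^2$. Set $z=By=A^{-1}y$, so $y=Az$. Then
\begin{align}
\Re(y^*By)=\Re(z^*A^*Az)/1=\Re((Az)^*z)=\Re(z^*A^*z)=\Re(z^*A^sz)\geq \lambda\|z\|^2,
\end{align}
using that $\Re(z^*A^*z)=\overline{\Re(z^*Az)}=\Re(z^*Az)=\Re(z^*A^sz)$. It remains to bound $\|z\|^2$ below: since $y=Az$ and the operator norm satisfies $\|y\|=\|Az\|\leq \|A\|\|z\|$, we get $\|z\|\geq \|y\|/\|A\| = 1/\|A\|$, hence $\Re(y^*By)\geq \lambda/\|A\|^2$. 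As $y$ was an arbitrary unit vector, the Rayleigh quotient characterization of the smallest eigenvalue of the Hermitian matrix $B^s$ (which equals $\min_{\|y\|=1}\Re(y^*By)$) yields the claimed lower bound.

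The argument is entirely routine; there is no real obstacle. The only point requiring a moment's care is keeping track of the two facts about $A^s$ versus $A$ — namely that $\Re(y^*Ay)=\Re(y^*A^sy)$ and that $\|A\|$ (not $\|A^s\|$) is what appears in the denominator — and making sure the substitution $z=A^{-1}y$ is applied consistently. One should also note that $\|A\|\geq \lambda$ automatically (take $y$ a unit eigenvector-ish direction), so the stated bound is never vacuous, though this observation is not needed for the proof.
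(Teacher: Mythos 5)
Your proof is correct and takes a genuinely streamlined route compared to the paper's. There is one cosmetic slip worth flagging: the display line
\begin{align}
\Re(y^*By)=\Re(z^*A^*Az)/1=\Re((Az)^*z)=\cdots
\end{align}
has a garbled intermediate, since with $y=Az$ one has $y^*By=(Az)^*A^{-1}(Az)=z^*A^*z$ directly; the expression $\Re(z^*A^*Az)/1$ is not equal to the other members of the chain (it equals $\|Az\|^2=1$), but the link $\Re(y^*By)=\Re((Az)^*z)=\Re(z^*A^*z)$ is valid, so the argument survives once the typo is deleted. Substantively, the paper symmetrizes first, writing $\langle v,(A^{-1})^s v\rangle$ as an average of two Hermitian forms, bounds each one below by $\lambda$ times $\|A^{-1}v\|^2$ resp.\ $\|(A^*)^{-1}v\|^2$, and then appeals to the SVD of $A$ to bound the eigenvalues of $(AA^*)^{-1}$ and $(A^*A)^{-1}$ below by $1/\|A\|^2$. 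You instead observe that for any matrix $B$ and vector $y$ one already has $y^*B^sy=\Re(y^*By)$, so there is no need to average — a single substitution $y=Az$ gives $\Re(y^*A^{-1}y)=\Re(z^*A^*z)=z^*A^sz\geq\lambda\|z\|^2$, and then the one-line operator-norm inequality $\|y\|\leq\|A\|\|z\|$ does all the work the SVD was doing. Both proofs use the same structural identity (pushing $A^{-1}$ into the argument of $A^s$'s quadratic form), but yours dispenses with the symmetrization and with the singular value decomposition, so it is a bit more elementary and shorter. The side remark that $\|A\|\geq\lambda$ is correct but, as you say, not needed.
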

\begin{proof}
Lemma \ref{eig_bound_lemma1} implies that $A$ is invertible.  Let $v\in V$ be non-zero.
\begin{align}
&\langle v,(A^{-1})^sv\rangle=\frac{1}{2}\Re(\langle v,A^{-1}v\rangle+\langle v,(A^*)^{-1}v\rangle)\\
=&\frac{1}{2}\Re(\langle AA^{-1}v,A^{-1}v\rangle+\langle A^*(A^*)^{-1}v,(A^*)^{-1}v\rangle)\notag\\
=&\frac{1}{2}(\Re(\langle A^{-1}v,(A^*)^s(A^{-1}v)\rangle)+\Re(\langle A^{-1}v,(A^*)^a(A^{-1}v)\rangle))\notag\\
&+\frac{1}{2}(\Re(\langle (A^*)^{-1}v,A^s((A^*)^{-1}v)\rangle)+\Re(\langle (A^*)^{-1}v,A^a((A^*)^{-1}v)\rangle))\notag\\
=&\frac{1}{2}(\langle A^{-1}v,A^s(A^{-1}v)\rangle+0)+\frac{1}{2}(\langle (A^*)^{-1}v,A^s((A^*)^{-1}v)\rangle+0)\notag\\
\geq&  \frac{\lambda}{2}\left(\langle v,(AA^*)^{-1}v\rangle+\langle v,  (A^*A)^{-1}v\rangle\right).\notag
\end{align}
Using the singular value decomposition of $A$ we see that both $(AA^*)^{-1}$ and $(A^*A)^{-1}$ are positive definite with eigenvalues bounded below by $1/\|A\|^2$.  The result follows.

\end{proof}

\section{Non-Explosion of Solutions}\label{app:no_explosions}
In the course of proving our main result, Theorem \ref{conv_thm}, we showed that the limiting process, $q_t$, exist for all $t\geq 0$ with probability one, at least under Assumptions \ref{assump1}-\ref{assump7}. Though we have assumed it to be the case throughout this paper, the same is not obvious for the family of  solutions, $x_t^\epsilon$, to the SDE \ref{Hamiltonian_SDE_q}-\ref{Hamiltonian_SDE_p}. However, existence for all $t\geq 0$  can be proven under a collection of assumptions that are very similar to our Assumptions \ref{assump1}-\ref{assump7} from the main text, as shown in the following lemma. We emphasize that in this appendix, we do not employ any of the Assumptions \ref{assump1}-\ref{assump7} per se.  The assumptions we do use are all listed below, in the statement of the lemma.

\setcounter{lemma}{0}
    \renewcommand{\thelemma}{\Alph{section}\arabic{lemma}}

\setcounter{corollary}{0}
    \renewcommand{\thecorollary}{\Alph{section}\arabic{corollary}}

\begin{lemma}\label{unique_lemma1}
Suppose:
\begin{enumerate}
\item The family of Hamiltonians have the form 
\begin{align}
H^\epsilon (t,x)=K(\epsilon,t,q,(p-\psi(t,q))/\sqrt{\epsilon})+V(t,q)
\end{align}
where $ K(\epsilon,t,q,z)$ is non-negative, $C^2$ in $(t,q,z)$ for every $\epsilon$,   $V$ is $C^2$, and $\psi$ is a $C^2$, $\mathbb{R}^n$-valued function.  
\item There exists $a\geq 0, b\geq 0$ such that $\tilde V(t,q)\equiv a+b\|q\|^2+V(t,q)$ is non-negative.
\item $\gamma(t,x)$, $\sigma(t,x)$, and $F(t,x)$ are continuous and  locally Lipschitz in $x$ with the Lipschitz constant uniform on compact time intervals.

\item $\sigma$ is bounded.

\item  The eigenvalues of $\gamma$ (which are real, since $\gamma$ is symmetric) are bounded below by some $\lambda>0$.
\item  For every $\epsilon\in (0,\epsilon_0]$, $t\geq 0$ there exists $c>0$, $N>0$, $\eta> 0$ such that 
\begin{align}
K(\epsilon,t,q,z)\geq c\|z\|^{2\eta}
\end{align}
for all $\|z\|\geq N$.
\item There exist $C>0$ and $M>0$ such that
\begin{align}
\left(\sum_{i,j}|\partial_{z_i}\partial_{z_j}K(\epsilon,t,q,z)|^2\right)^{1/2}\leq M+CK(\epsilon,t,q,z),
\end{align}

\begin{align}\label{H_dot_assump}
|\partial_t K(\epsilon,t,q,z)+\partial_t V(t,q)|\leq M+C\left(\|q\|^2+\tilde V(t,q)+K(\epsilon,t,q,z)\right),
\end{align}
and
\begin{align}\label{psi_F_bound}
\|-\partial_t\psi(t,q)+F(t,x)\|^2\leq M+C\left(\|q\|^2+\tilde V(t,q)\right).
\end{align}
\end{enumerate}

Then the maximal solution, $x_t^\epsilon$, to the SDE \ref{Hamiltonian_SDE_q}-\ref{Hamiltonian_SDE_p} is unique a.s. and a.s. exist for all $t\geq 0$.

\end{lemma}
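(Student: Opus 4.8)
The plan is a local-existence-plus-Lyapunov (Khasminskii) argument. First I would check that the coefficients of the system \reqr{Hamiltonian_SDE_q}{Hamiltonian_SDE_p} are continuous in $(t,x)$ and locally Lipschitz in $x$, uniformly on compact $t$-intervals: $\nabla_pH^\epsilon=\tfrac1{\sqrt\epsilon}(\nabla_zK)^\epsilon$ and $\nabla_qH^\epsilon=(\nabla_qK)^\epsilon-\tfrac1{\sqrt\epsilon}(\nabla_zK)^\epsilon_k\nabla_q\psi_k+\nabla_qV$ are assembled by composition from the $C^2$ functions $K$, $V$, $\psi$, and $\gamma,\sigma,F$ are locally Lipschitz by hypothesis~3. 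The standard theory of It\^o SDEs then yields a pathwise-unique maximal solution on a random interval $[0,\zeta)$, so everything reduces to proving $P(\zeta<\infty)=0$.

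For the Lyapunov function I would take $\Phi^\epsilon(t,x)\equiv\|q\|^2+K^\epsilon(t,x)+\tilde V(t,q)=(1+b)\|q\|^2+H^\epsilon(t,x)+a$, which is non-negative by hypotheses~1--2, dominates $\|q\|^2$, and has the crucial feature that the quantity $M+C(\|q\|^2+\tilde V+K^\epsilon)$ appearing on the right of the hypothesis-7 bounds equals $M+C\Phi^\epsilon$. Applying It\^o's formula to $\Phi^\epsilon(t,x^\epsilon_t)$ and using $dq^\epsilon=\nabla_pH^\epsilon\,dt$, the two terms $\pm\,\nabla_qH^\epsilon\!\cdot\!\nabla_pH^\epsilon$ produced by $dq^\epsilon$ and by $dp^\epsilon$ cancel --- the same ``energy'' cancellation used for \req{Hamiltonian_ODE} in the introduction --- and this is indispensable here since $\nabla_qV$ is \emph{not} assumed bounded. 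The surviving drift consists of the intrinsic time derivative $(\partial_tK)^\epsilon+\partial_tV$; a cross term $\tfrac1{\sqrt\epsilon}(\nabla_zK)^\epsilon\!\cdot\!(-\partial_t\psi+F)$, the $\partial_t\psi$ grouping with $F$ because $\partial_tH^\epsilon=(\partial_tK)^\epsilon-\tfrac1{\sqrt\epsilon}(\nabla_zK)^\epsilon\!\cdot\!\partial_t\psi+\partial_tV$; the term $2(1{+}b)\,q^\epsilon_t\!\cdot\!\nabla_pH^\epsilon$; the negative dissipation $-\nabla_pH^\epsilon\!\cdot\!\gamma\nabla_pH^\epsilon$; and the It\^o/Hessian term $\tfrac1{2\epsilon}(\partial_{z_i}\partial_{z_j}K)^\epsilon\Sigma_{ij}$. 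The martingale part is $\tfrac1{\sqrt\epsilon}(\nabla_zK)^\epsilon\!\cdot\!\sigma\,dW_t$.

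The heart of the proof --- and the step I expect to be the main obstacle --- is showing this drift is $\le M_\epsilon+C_\epsilon\Phi^\epsilon(t,x^\epsilon_t)$ for constants that may depend on $\epsilon$ but not on $\omega$ or (on $[0,T]$) on $t$, given that $\nabla_pH^\epsilon$ carries a factor $\epsilon^{-1/2}$ and its $p$-Hessian a factor $\epsilon^{-1}$. The resolution: by hypothesis~5, $\nabla_pH^\epsilon\!\cdot\!\gamma\nabla_pH^\epsilon\ge\tfrac\lambda\epsilon\|(\nabla_zK)^\epsilon\|^2$; splitting this dissipation in half and completing the square against the two $O(\epsilon^{-1/2})$ cross terms absorbs them (the powers of $\epsilon$ matching exactly), leaving a remainder $\le\tfrac1{2\lambda}\|-\partial_t\psi+F\|^2+\tfrac{2(1+b)^2}\lambda\|q^\epsilon_t\|^2$, which is $\le M+C\Phi^\epsilon$ by \req{psi_F_bound} and $\|q\|^2\le\Phi^\epsilon$. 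The time-derivative term is $\le M+C\Phi^\epsilon$ by \req{H_dot_assump}, and $\bigl|\tfrac1{2\epsilon}(\partial_{z_i}\partial_{z_j}K)^\epsilon\Sigma_{ij}\bigr|\le\tfrac{\|\sigma\|_\infty^2}{2\epsilon}(M+CK^\epsilon)$ by the first bound of hypothesis~7 and boundedness of $\sigma$ (hypothesis~4), the $\epsilon^{-1}$ being harmless since $\epsilon$ is a fixed parameter.

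Finally I would localize: with $\rho_n=\inf\{t:\|x^\epsilon_t\|\ge n\}\uparrow\zeta$, stopping at $\rho_n$ makes the martingale term a genuine martingale, so taking expectations and applying Gronwall's inequality gives $E[\Phi^\epsilon(t\wedge\rho_n,x^\epsilon_{t\wedge\rho_n})]\le(\Phi^\epsilon(0,x^\epsilon_0)+M_\epsilon T)e^{C_\epsilon T}$, uniformly in $n$, for $t\le T$. Since $\Phi^\epsilon\ge\|q\|^2$ and $\Phi^\epsilon\ge K^\epsilon\ge c\|(p-\psi(t,q))/\sqrt\epsilon\|^{2\eta}$ for large argument (hypothesis~6), together with continuity of $\psi$, the function $\Phi^\epsilon(t,\cdot)$ is proper uniformly for $t\in[0,T]$; hence on $\{\rho_n\le T\}$, $\|x^\epsilon_{\rho_n}\|=n$ forces $\Phi^\epsilon(\rho_n,x^\epsilon_{\rho_n})\ge g_T(n)$ with $g_T(n)\to\infty$, and Markov's inequality yields $P(\rho_n\le T)\le(\Phi^\epsilon(0,x^\epsilon_0)+M_\epsilon T)e^{C_\epsilon T}/g_T(n)\to0$. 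Thus $P(\zeta\le T)=\lim_nP(\rho_n\le T)=0$ for every $T$, which is non-explosion; combined with the pathwise uniqueness noted above, this proves the lemma. Equivalently, this last paragraph can be replaced by a direct appeal to the non-explosion criterion in \cite{khasminskii2011stochastic} with $\Phi^\epsilon$ as the Lyapunov function.
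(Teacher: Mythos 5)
Your proposal is essentially the paper's own proof: you take the same Lyapunov function $\Phi^\epsilon=\|q\|^2+\tilde V+K^\epsilon=a+(1+b)\|q\|^2+H^\epsilon$ (the paper's $U$), exploit the same cancellation of $\nabla_pH^\epsilon\cdot\nabla_qH^\epsilon$, absorb the two $O(\epsilon^{-1/2})$ cross terms into the dissipation $-\frac{\lambda}{\epsilon}\|(\nabla_zK)^\epsilon\|^2$ by completing the square, and invoke hypotheses 6--7 in the same places to bound $L[\Phi^\epsilon]\le\tilde M+\tilde C\Phi^\epsilon$ and to prove properness. The only stylistic difference is that you unpack Khasminskii's non-explosion criterion (stopping times, Gronwall, Markov) rather than citing Theorem 3.5 of \cite{khasminskii2011stochastic} directly, which you yourself note is equivalent.
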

\begin{proof}
Fix $\epsilon>0$. The assumptions imply that an a.s. unique, maximal solution $x_t^\epsilon$ exists up to explosion time $e^\epsilon$ (see Section 3.4 in \cite{khasminskii2011stochastic}). Non-explosion of $x^\epsilon_t$ (i.e. $e^\epsilon=\infty$ a.s.) will follow from the existence of a Lyapunov function (see Theorem 3.5 in \cite{khasminskii2011stochastic}), a non-negative $C^{2}$ function, $U(t,x)$, that satisfies:
\begin{enumerate}
\item For any $t\geq 0$, $\lim_{x\rightarrow\infty}U(t,x)=\infty$,
\item\begin{align}
L[U](t,x)\leq \tilde M+\tilde CU(t,x)
\end{align}
for some $\tilde M\geq 0$, $\tilde C> 0$, where $L$ is the time-dependent generator
\begin{align}
L[U](t,x)=&\partial_tU(t,x)+\nabla_p H^\epsilon(t,x)\cdot \nabla_qU(t,x)\\
&+(-\gamma(t,x)\nabla_p H^\epsilon(t,x)-\nabla_q H^\epsilon(t,x)+F(t,x))\cdot \nabla_p U(t,x)\notag\\
&+\frac{1}{2}\Sigma_{ij}(t,x)\partial_{p_i}\partial_{p_j} U(t,x)\notag
\end{align}
and $\Sigma_{ij}=\sum_\rho\sigma_{i\rho}\sigma_{j\rho}$.

We note that to connect with the result as stated in \cite{khasminskii2011stochastic}, one needs to work with $\tilde M/\tilde C+U$ in place of $U$, but we find the above formulation more convenient here.
\end{enumerate}

Fix $\epsilon>0$. As our candidate Lyapunov function, we  define
\begin{align}
U(t,x)\equiv \|q\|^2+\tilde V(t,q)+K^\epsilon(t,x)=a+(1+b)\|q\|^2+H^\epsilon(t,x).
\end{align}
By assumption, $\tilde V(t,q)$ and $K^\epsilon(t,x)$ are non-negative and $C^2$, therefore $U$ is as well.

Fix $t\geq 0$. By assumption there exists $c>0$, $N>0$, $\eta> 0$ such that 
\begin{align}
K(\epsilon,t,q,z)\geq c\|z\|^{2\eta}
\end{align}
for all $\|z\|\geq N$.

Given $R>0$, let 
\begin{equation}
\max\{\|q\|,\|p\|\}\geq \max\left\{R^{1/2},\sup_{\|q\|\leq R^{1/2}}\|\psi(t,q)\|+\epsilon^{1/2}\left(N+(R/c)^{1/2\eta}\right)\right\}.
\end{equation}
If $\|q\|\geq R^{1/2}$ then $U(t,x)\geq \|q\|^2\geq R$.  If $\|q\|<R^{1/2}$ then 
\begin{align}
\|p-\psi(t,q)\|/\sqrt{\epsilon}\geq \left(\|p\|-\sup_{\|q\|\leq R^{1/2}} \|\psi(t,q)\|\right)/\sqrt{\epsilon}\geq N.
\end{align}
Hence
\begin{align}
U(t,x)\geq& K^\epsilon(t,x)\geq c\|p-\psi(t,q)\|^{2\eta}/\epsilon^\eta\geq \frac{c}{ \epsilon^\eta}\left(\|p\|-\sup_{\|q\|\leq R^{1/2}}\|\psi(t,q)\|\right)^{2\eta}\\
\geq & \frac{c}{ \epsilon^\eta}\left(\epsilon^{1 \over 2}\left({R \over c}\right)^{1 \over 2\eta}\right)^{2\eta} = R.\notag
\end{align}
Therefore $U(t,x)\rightarrow\infty$ as $x\rightarrow\infty$.

Using the inequality $\alpha\beta\leq \frac{1}{2}(\delta \alpha^2+\frac{1}{\delta}\beta^2)$ for any $\alpha>0$, $\beta>0$, $\delta>0$, we can compute
\begin{align}
L[U](t,x)=&\partial_tH^\epsilon(t,x)+\nabla_p H^\epsilon(t,x)\cdot \nabla_qH^\epsilon(t,x)\\
&+(-\gamma(t,x)\nabla_p H^\epsilon(t,x)-\nabla_q H^\epsilon(t,x)+F(t,x))\cdot \nabla_p H^\epsilon(t,x)\notag\\
&+\frac{1}{2}\Sigma_{ij}(t,x)\partial_{p_i}\partial_{p_j} H^\epsilon(t,x)+2(1+b)\nabla_pH^\epsilon(t,x)\cdot q\notag
\end{align}
\begin{align}
\leq&(\partial_tK)^\epsilon(t,x)+\partial_t V(t,q)-\frac{\lambda}{\epsilon} \|(\nabla_z K)^\epsilon(t,x)\|^2 \notag\\
&+\frac{1}{\sqrt{\epsilon}}\left( 2(1+b)q+F(t,x)- \partial_t\psi(t,q)\right)\cdot (\nabla_z K)^\epsilon(t,x)\notag\\
&+\frac{1}{2\epsilon}\Sigma_{ij}(t,x)(\partial_{z_i}\partial_{z_j} K)^\epsilon(t,x)\notag\\
\leq& M+CU(t,x)-\frac{\lambda}{\epsilon} \|(\nabla_z K)^\epsilon(t,x)\|^2 +\frac{1}{2\epsilon}\|\Sigma\|_{F,\infty}(M+CK^\epsilon(t,x))\notag\\
&+\frac{1}{4\lambda}\| 2(1+b)q+F(t,x)- \partial_t\psi(t,q)\|^2+\frac{\lambda}{\epsilon}\| (\nabla_z K)^\epsilon(t,x)\|^2\notag\\
\leq& M+CU(t,x)+\frac{\lambda}{\epsilon}\| (\nabla_z K)^\epsilon(t,x)\|^2\notag\\
&+\frac{1}{2\lambda}( 4(1+b)^2\|q\|^2+M+C(\|q\|^2+\tilde V(t,q))).\notag
\end{align}
The right hand side is bounded by $\tilde M+\tilde C U$ for some $\tilde M\geq 0$, $\tilde C>0$.  This completes the proof that $U$ is a Lyapunov function and allows us to conclude that $e^\epsilon=\infty$ a.s. i.e. the solution $x^\epsilon_t$ exists for all $t\geq 0$ a.s.
\end{proof}

A similar result holds for proving non-explosion of the limiting equation, \req{limit_eq1}, under assumptions weaker than Assumptions \ref{assump1}-\ref{assump7}. The following lemma can be proven by using  the Lyapunov function $U(t,q)\equiv a+\left(1+b\right)\|q\|^2+V(t,q)$ and  Lemma \ref{A_inv_pos_def_lemma}. The proof closely follows that of the previous lemma and so we leave the details to the reader.
\begin{lemma}\label{unique_lemma2}
Consider an SDE on $\mathbb{R}^n$ of the form
\begin{align}\label{model_limiting_SDE}
dq_t=&\tau^{-1}(t,q_t)(-\nabla_{q}V(t,q_t)+\tilde F(t,q_t))dt+\tilde\sigma(t,q_t) dW_t
\end{align}
where
\begin{enumerate}
\item $V$ is $C^2$ and there exists $a>0$, $b>0$ such that $\tilde V(t,q)\equiv a+b\|q\|^2+V(t,q)$ is non-negative,
 \item $\tilde\sigma(t,q)$, and $\tilde F(t,q)$ are continuous and  locally Lipschitz in $q$ with the Lipschitz constant uniform on compact time intervals,
\item $\tilde\sigma$ is bounded,
\item  $\tau$ is $C^1$, bounded, and its symmetric part has eigenvalues bounded below by $\lambda>0$,
\item there exists $M>0$ and $C>0$ such that
\begin{align}
\left(\sum_{i,j}|\partial_{q^i}\partial_{q^j} V(t,q)|^2\right)^{1/2}\leq M+C\left(\|q\|^2+\tilde V(t,q)\right),
\end{align}
\begin{align}
|\partial_t V(t,q)|\leq  M+C\left(\|q\|^2+\tilde V(t,q)\right),
\end{align}
and
\begin{align}
\|\tilde F(t,q)\|^2\leq   M+C\left(\|q\|^2+\tilde V(t,q)\right).
\end{align}
\end{enumerate}

Then the maximal solution to the SDE  \ref{model_limiting_SDE} is unique a.s. and a.s. exist for all $t\geq 0$.

\end{lemma}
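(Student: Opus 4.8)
The plan is to mirror the proof of Lemma~\ref{unique_lemma1}: exhibit a Khasminskii-type Lyapunov function for \eqref{model_limiting_SDE} and invoke the standard non-explosion criterion. First I would note that Lemma~\ref{A_inv_pos_def_lemma} applied to $\tau$ guarantees that $\tau$ is invertible, that the symmetric part of $\tau^{-1}$ has eigenvalues bounded below by $\lambda/\|\tau\|_\infty^2>0$, and (since the symmetric part of $\tau$ has eigenvalues $\ge\lambda$) that $\|\tau^{-1}\|\le 1/\lambda$; hence $\tau^{-1}$ is $C^1$, bounded, and locally Lipschitz in $q$ uniformly on compact $t$-intervals, as are all the coefficients of \eqref{model_limiting_SDE}. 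By Section~3.4 of \cite{khasminskii2011stochastic} this yields an a.s.\ unique maximal solution $q_t$ up to an explosion time $e$, and by Theorem~3.5 of \cite{khasminskii2011stochastic} it suffices to find a non-negative $C^2$ function $U(t,q)$ with $\lim_{q\to\infty}U(t,q)=\infty$ for each $t$ and $L[U]\le\tilde M+\tilde C\,U$, where $L$ is the generator of \eqref{model_limiting_SDE}.

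I would take $U(t,q)\equiv a+(1+b)\|q\|^2+V(t,q)=(1+b)\|q\|^2+\tilde V(t,q)$. Non-negativity and the $C^2$ property are clear, and $U(t,q)\ge\|q\|^2\to\infty$, so radial unboundedness is immediate here (in contrast to Lemma~\ref{unique_lemma1}, the coercivity is built in by the $\|q\|^2$ term, so no $K$-lower-bound argument is needed). Computing,
\begin{align*}
L[U](t,q)=&\,\partial_t V(t,q)+\big\langle \tau^{-1}(t,q)\big(-\nabla_q V(t,q)+\tilde F(t,q)\big),\,2(1+b)q+\nabla_q V(t,q)\big\rangle\\
&+\tfrac12\,\tilde\sigma_{i\rho}(t,q)\tilde\sigma_{j\rho}(t,q)\big(2(1+b)\delta_{ij}+\partial_{q^i}\partial_{q^j}V(t,q)\big).
\end{align*}
The term $\partial_t V$ is $\le M+C(\|q\|^2+\tilde V)\le M+CU$ by hypothesis; boundedness of $\tilde\sigma$ together with the Hessian growth bound gives $\tfrac12\,\tilde\sigma_{i\rho}\tilde\sigma_{j\rho}(2(1+b)\delta_{ij}+\partial_{q^i}\partial_{q^j}V)\le \tilde M+\tilde C\,U$; and $\|\tilde F\|^2\le M+C(\|q\|^2+\tilde V)\le M+CU$.

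The crux is the drift pairing. Expanding it, the only term carrying a negative sign in front of $\|\nabla_q V\|^2$ is $-\langle\nabla_q V,\tau^{-1}\nabla_q V\rangle=-\langle\nabla_q V,(\tau^{-1})^s\nabla_q V\rangle\le-\tfrac{\lambda}{\|\tau\|_\infty^2}\|\nabla_q V\|^2$ by Lemma~\ref{A_inv_pos_def_lemma}. Every other term is of the schematic form $\|\nabla_q V\|(\|q\|+\|\tilde F\|)$ or $\|\tilde F\|\,\|q\|$ (up to bounded factors $\|\tau^{-1}\|$, $1+b$), so by $\alpha\beta\le\tfrac12(\delta\alpha^2+\delta^{-1}\beta^2)$ with $\delta$ chosen small, one absorbs all $\|\nabla_q V\|^2$ contributions into $-\tfrac{\lambda}{\|\tau\|_\infty^2}\|\nabla_q V\|^2$, leaving only multiples of $\|q\|^2\le U$, of $\|\tilde F\|^2\le M+CU$, and constants. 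Collecting the estimates gives $L[U](t,q)\le\tilde M+\tilde C\,U(t,q)$, so $U$ is a Lyapunov function and $e=\infty$ a.s. The main obstacle is exactly this last estimate: $\nabla_q V$ is \emph{not} assumed bounded (only its derivatives, and its magnitude only implicitly through $\tilde V$), so it cannot be controlled directly and one must lean on the strict coercivity of $(\tau^{-1})^s$ to produce the absorbing negative term.
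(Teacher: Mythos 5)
Your proposal reproduces exactly the paper's intended argument: the paper leaves the details to the reader but specifies the Lyapunov function $U(t,q)=a+(1+b)\|q\|^2+V(t,q)$ and the use of Lemma~\ref{A_inv_pos_def_lemma}, and your computation of $L[U]$, the absorption of the cross terms into the coercive $-\langle\nabla_qV,(\tau^{-1})^s\nabla_qV\rangle$ contribution via Young's inequality, and the invocation of Khasminskii's criterion all match the proof pattern of Lemma~\ref{unique_lemma1} as instructed. You have correctly identified the one nontrivial point (absorbing $\|\nabla_qV\|^2$, which is not assumed bounded), so no gaps to report.
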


\subsection*{Acknowledgments}

J.W. was partially supported by NSF grants DMS 131271 and DMS 1615045.\\

\bibliographystyle{ieeetr}
\bibliography{refs}

\end{document}